\documentclass[a4paper,12pt]{article}
\usepackage{amsfonts,amsmath,amssymb,amsthm}
\numberwithin{equation}{section}
\usepackage{txfonts}
\usepackage[utf8]{inputenc}
\usepackage{amsmath}
\usepackage{amssymb}
\usepackage{mathtools}
\usepackage{bbm}
\usepackage{algorithm}
\usepackage{algpseudocode}

\RequirePackage[colorlinks,citecolor=blue,urlcolor=red, linkcolor=blue]{hyperref}
\usepackage[margin=0.8in]{geometry}

\usepackage{graphicx}
\usepackage{epsfig,rotating}
\usepackage{tabularx}
\usepackage{bm}
\usepackage{bbm}
\usepackage{array}
\newcolumntype{P}[1]{>{\raggedright\arraybackslash}p{#1}}

\usepackage{mdframed}
\usepackage{lipsum}
\usepackage{enumitem}

\usepackage[authoryear,square]{natbib}

\usepackage[nottoc,numbib]{tocbibind} 
\usepackage[toc,page]{appendix}

\newtheorem{thm}{Theorem}
\newtheorem{lemma}{Lemma}
\newtheorem{remark}{Remark}
\newtheorem{prop}{Proposition}
\newtheorem{co}{Corollary}

\newcommand{\beaa}{\begin{eqnarray*}}
\newcommand{\eeaa}{\end{eqnarray*}}
\newcommand{\bea}{\begin{eqnarray}}
\newcommand{\eea}{\end{eqnarray}}

\newcommand{\la}{\left\{}
\newcommand{\ra}{\right\}}
\newcommand{\lb}{\left(}
\newcommand{\rb}{\right)}
\newcommand{\mb}{\mathbb}
\newcommand{\ve}{\varepsilon}
\newcommand{\mc}{\mathcal}

\newcommand{\mfB}{\mathsf{\bm B}}
\newcommand{\mfH}{\mathsf{\bm H}}
\newcommand{\mfG}{\mathsf{\bm G}}

\newcommand{\lft}{\left[}
\newcommand{\rht}{\right]}

\begin{document}

\author{
Tuan Pham \\ \small{Department of Statistics and Data Science, University of Texas, Austin} \\  \small{\href{mailto:tuan.pham@utexas.edu}{tuan.pham@utexas.edu}}
\and 
Alessandro Rinaldo \\ \small{Department of Statistics and Data Science, University of Texas, Austin} \\  \small{\href{mailto:alessandro.rinaldo@austin.utexas.edu}{alessandro.rinaldo@austin.utexas.edu}}
\and
Purnamrita Sarkar \\ \small{Department of Statistics and Data Science, University of Texas, Austin} \\  \small{\href{mailto:purna.sarkar@austin.utexas.edu}{purna.sarkar@austin.utexas.edu}}
}

\title{Streaming PCA: averaging from a geometric perspective}

\maketitle

\begin{abstract}

We study principal component analysis (PCA) under memory constraints, a
setting that is increasingly important in large-scale data analysis. Our
focus is on Oja's algorithm, which is a one-pass,  memory-efficient algorithm requiring only
$O(p)$ storage in the rank-one case and $O(pk)$ storage for $k$-
PCA. The main goal is to develop a procedure based on Oja's algorithm that is optimal without prior knowledge of the eigengap, which is otherwise needed to tune the learning rates effectively. We do this by first introducing a geometric perspective on the convergence of
$k$-PCA: by embedding the Oja iterates into the exterior space, we
establish an exact equivalence between $k$-PCA in the ambient space and
$1$-PCA in the exterior space. This geometric viewpoint enables us to
establish several convergence guarantees, including one based only on a
single learning-rate schedule.

Building on this geometric perspective, we develop an averaging theory
for $k$-PCA and show that the resulting averaged estimator is adaptive:
it achieves the nearly optimal convergence rate without prior knowledge of the
eigengap. As an application, we construct memory-efficient estimators
for elliptical component analysis (ECA)
\cite{han2014scale,han2018eca}. Simulation studies and real data analysis are conducted to demonstrate the benefits of our proposed algorithms. 
\end{abstract}

\tableofcontents

\section{Introduction}

Principal component analysis (PCA) is arguably one of the most important methods for extracting useful features and performing dimensionality reduction on large datasets. PCA is known to perform particularly well when the data concentrate around a low-dimensional subspace \cite{jolliffe2016principal}. Over the past several decades, a substantial theoretical understanding of PCA has been developed in a wide range of settings, from classical fixed-dimensional asymptotics \cite{anderson1963asymptotic} to dependent and high-dimensional data. For a recent survey of the subject, see \cite{greenacre2022principal}.

Over the past twenty years, much of the PCA literature has focused on developing theoretical guarantees and methodologies for high-dimensional data. An incomplete list of results under various settings includes \cite{johnstone2001distribution,johnstone2009consistency,vu2013minimax,cai2018rate,baik2005phase}; see also the references therein. It is now well understood that high dimensionality can lead to inconsistency, but that consistency can still be achieved in high-dimensional settings when the eigengap separating the target subspace from its orthogonal complement is sufficiently large. The information-theoretic limits, detection thresholds, and estimators that attain the corresponding lower bounds are also well understood \cite{vu2013minimax}. It is worth noting that almost all methods for performing PCA in the statistics literature are based on the sample covariance matrix or a regularized version of it \cite{ledoit2004well,ledoit2012nonlinear}, and therefore require storing the full matrix in memory.

In recent years, datasets have grown so large that storing and repeatedly processing all observations at once has become costly or impractical. Moreover, a sufficient amount of random-access memory (RAM) to store a full sample covariance matrix can be prohibitively expensive, especially when the ambient dimension is large. As a result, researchers have increasingly adopted low-memory streaming PCA methods that do not require storing the full covariance matrix. A popular approach to performing PCA under memory constraints is to formulate it as a stochastic optimization problem, leading to what is known in the computer science and statistics literature as Oja's algorithm \cite{oja1982simplified,oja1985stochastic}.

As a stochastic-gradient-descent-based method, Oja's algorithm can be unstable during its early phase and requires careful tuning of the learning-rate sequence to achieve the optimal convergence rate. In fact, many prior works \cite{jain2016streaming,huang2021streaming,allen2017first,kumar2023streaming,kumar2024oja} have shown that some prior knowledge of the eigengap is required to achieve optimal convergence rates. From a statistical perspective, such a requirement is undesirable. This raises a natural question: without any prior knowledge of the population eigengap, how accurate can the output of an Oja-type algorithm be? This question is the main motivation for the present article.

In this paper, we address the above question using the idea of iterate averaging \cite{polyak1992acceleration,ruppert1988efficient}. Our main result is a general averaging framework for $k$-PCA that combines a sequence of individually suboptimal Oja iterates and produces a nearly optimal estimator at the terminal time. Although the idea of averaging is not new to the literature, developing a rigorous theory for non-convex problems  is highly nontrivial. Our framework requires new ideas and insights involving time-uniform concentration bounds, multilinear algebra, and several technical extensions. As a consequence of this framework, we establish a convergence result for $k$-PCA that uses only a single learning-rate schedule, which is of independent interest. As a concrete application, we develop a memory-efficient procedure for elliptical component analysis \cite{han2018eca,han2014scale,zhao2024spatial}. The proposed procedure uses only $O(p)$ memory to estimate the leading eigenvector under an elliptical model, potentially in the presence of heavy-tailed data.

We emphasize that it has long been known that the analysis of stochastic approximation methods, and of Oja's algorithm for streaming
PCA in particular, is considerably more difficult for $k$-PCA with $k>1$ than for the rank-one case $k=1$. Some remarkable results that provide theoretical analysis for the case $k>1$ include \cite{de2015global}, \cite{allen2017first} and \cite{huang2021streaming}. The latter two papers use quite different techniques: \cite{allen2017first} employs a very careful analysis of the recursion generated by the one-step update, while \cite{huang2021streaming} relies on matrix concentration inequalities. The approach developed in the present paper is intuitive and avoids the need for using heavy matrix concentration inequalities. 
In summary, our contributions are threefold.
\begin{itemize}
    \item We introduce a geometric perspective on one-pass algorithms for
    $k$-PCA. By enlarging the ambient space, we establish a one-to-one
    correspondence between $k$-PCA algorithms in the original ambient space
    and $1$-PCA algorithms in the enlarged space. Consequently, convergence
    and concentration properties established for $1$-PCA can be transferred
    seamlessly to $k$-PCA.

    \item Using this geometric perspective, we develop an averaging theory
    for Oja's algorithm that assumes only a random initialization. We show
    that, by using a two-phase algorithm, the averaged estimator achieves the
    information-theoretic lower bound, up to first order. One of the main
    advantages of the averaged estimator is that it does not require
    knowledge of the unknown eigengap.

    \item As an application, we apply the developed averaging theory to
    elliptical component analysis
    \cite{han2018eca,han2014scale} and construct an efficient estimator for
    $k$-PCA that requires only $O(pk)$ memory, in contrast to the $O(p^2)$
    memory required by the methods in
    \cite{han2018eca,han2014scale}.
\end{itemize}

\subsection*{Notation}
For two matrices  $\bm A, \bm B \in \mb R^{p \times k}$ with orthonormal columns, we denote by $\bm \Theta  \lb \bm A, \bm B \rb$ the $k$-by-$k$ diagonal matrix consisting of the principal angles between $\bm A$ and $\bm B$, namely
\[
\bm \Theta  \lb \bm A, \bm B \rb := \mbox{diag} \la \theta_1,\dots,\theta_k \ra
\]
where $\cos \theta_1\geq \cos \theta_2\geq \dots \geq \cos \theta_k \in [0,1]$ are the singular values of $\bm A^\top \bm B$.

We write $\sin \bm \Theta \lb \bm A, \bm B \rb$ to denote the sine matrix 
\[
\sin \bm \Theta \lb \bm A, \bm B \rb:= \mbox{diag} \la \sin \theta_1, \sin \theta_2,\dots,\sin \theta_k \ra.
\]
The norms $\|.\|_{\rm  op}$ and $\| . \|_{\rm F}$ are the operator norm and Frobenius norm of a matrix, respectively, i.e.
\begin{align*}
\| \bm A_{n \times p} \|_{\rm op}:&= \sup_{\| \bm x \|_2=1} \| \bm A \bm x \|_2;  \\
\| \bm A_{n \times p} \|_{\rm  F}:&=  \sqrt{\mbox{Tr} \lb \bm A \bm A^\top \rb}.
\end{align*}
Throughout the paper, the notation $A_n \lesssim B_n$ indicates that $A_n \leq C B_n$ for some universal constant $C>0$.  We will also write $A_n \lesssim_{a,b} B_n$ to indicate that the constant $C$ may depend on $a$ and $b$,  but is still free of $n$. 
For a full-column-rank matrix $\bm A\in\mathbb R^{p\times k}$, define the map
\[
\operatorname{polar}(\bm A)
:=
\bm A(\bm A^\top\bm A)^{-1/2}
\]
which is the unitary component in the polar decomposition of a matrix $\bm A$.


\section{Background on Oja's algorithm}

Suppose we want to perform PCA on the data $\bm{\mc{X}}_1,\dots,\bm{\mc{X}}_n$. 
The most common approach, as in \cite{han2014scale,han2018eca,zhao2024spatial}, is to compute the unnormalized second-moment matrix, or a variant of it:
\[
\hat{\bm{\Sigma}}_n:= \sum_{k=1}^n \bm{\mc{X}}_k \bm{\mc{X}}_k^\top.
\]
The drawback of this approach is that it requires $O(p^2)$ memory to store such matrices. 
Oja's algorithm \cite{oja1982simplified,oja1985stochastic} was proposed as a memory-efficient iterative algorithm for estimating principal components. 
Let us explain the idea behind Oja's algorithm. 
One can view the task of estimating the leading principal component as the optimization problem
\[
\mbox{argmin}_{\bm{x} \in \mb{S}^{p-1}} \la -\bm{x}^\top \mb{E} \lb \bm{\mc{X}}_1 \bm{\mc{X}}_1^\top  \rb \bm{x} \ra.
\]
This optimization problem is nonconvex, although its landscape has no spurious local minima. 
Oja's algorithm \cite{oja1982simplified,oja1985stochastic} solves this problem through a projected stochastic approximation method: it draws an initialization $\bm{x}_0$ uniformly from the hypersphere and then iteratively updates
\[
\bm{x}_{t}:= \frac{\lb \bm{I}_p + \eta_t \cdot \bm{\mc{X}}_t \bm{\mc{X}}_t^\top \rb \bm{x}_{t-1} }{\Big\| \lb \bm{I}_p + \eta_t \cdot \bm{\mc{X}}_t \bm{\mc{X}}_t^\top \rb \bm{x}_{t-1} \Big\|}.
\]
Here $\la \eta_t \ra$ is a sequence of learning rates that must be chosen appropriately. The above formulation is the simplest form of Oja's algorithm and there are other variants of it that are asymptotically equivalent; see, for example, \cite{chou2020ode,de2015global} and the references therein.
It is known that, in order to obtain a sequence of consistent estimators $\la \bm{x}_t \ra$, one needs
\[
\sum_{t=1}^\infty \eta_t = \infty 
\qquad \text{and} \qquad 
\sum_{t=1}^\infty \eta_t^2< \infty.
\]
The first condition guarantees that any limit point of $\la \bm{x}_t \ra$ is the true eigenvector, while the second condition guarantees that the sequence $\la \bm{x}_t \ra$ converges to a limit. 
The best possible choice of learning rates is therefore $\eta_t=C/t$ for some sufficiently large constant $C$. 
Theoretically, it is known that a choice of the form $C \gtrsim 1/(\text{eigengap})$ is required to obtain a $1/t$ convergence rate. 
This is an undesirable drawback in practice, since we rarely have prior information about the eigengap. 
We address this issue in the present paper.

Oja's algorithm has been analyzed and extended in several directions over the last few years. 
An incomplete list of early results includes \cite{jain2016streaming,de2015global,hardt2014noisy,balsubramani2013fast,shamir2016convergence,mitliagkas2013memory,balcan2016improved}. 
More recent works, which establish optimal rates and extensions to various settings, can be found in \cite{huang2021streaming,lunde2021bootstrapping,kumar2024oja,allen2017first,lowprecPCA2025,Kumar25entry}; see also the references therein.

\section{A general convergence theorem for $k$-PCA} \label{sec:k-PCA-convergence}


Let $\bm X_1, \cdots, \bm X_n \in \mb R^{p}$ be a sequence of centered i.i.d. random vectors such that for some constant $M>0$,
\[
\mb E \bm X_1 \bm X_1^\top = \bm \Sigma, \qquad \| \bm X_1 \bm X_1^\top - \bm \Sigma \| \leq M \, \text{a.s.}
\]
Suppose we are interested in estimating the $k$-principal subspace $\bm U^\star_{k}$ of $\bm \Sigma$ based on observing $\bm X_i$'s. Let $\lambda_1,\dots,\lambda_p$ be the eigenvalues of $\bm \Sigma$ sorted in descending order. For identifiability, we assume that the eigengap between the $k$-th and $(k+1)$-th eigenvalues is positive
\[
\Delta_k:= \lambda_k - \lambda_{k+1} >0.
\]
In the context of $k$-PCA, Oja's algorithm can be formulated as in Algorithm \ref{alg:oja-k-pca}. Here $\rm QR$ refers to the QR decomposition that outputs an orthogonal basis from a given matrix. 

\begin{algorithm}[t]
\caption{Oja's algorithm for \(k\)-PCA}
\label{alg:oja-k-pca}
\begin{algorithmic}[1]
\Require Data \(\bm X_1,\ldots,\bm X_n\in\mb R^p\), target rank
\(k\), and learning rates \(\eta_1,\ldots,\eta_n>0\).
\State Draw \(\bm G_0\in\mb R^{p\times k}\) with iid \(N(0,1)\) entries.
\State Set \(\bm Q_0\gets\operatorname{QR}(\bm G_0)\).
\For{\(t=1,\ldots,n\)}
    \State
    \[
        \bm Q_t
        \gets
        \operatorname{QR}\!\left[
            \bm Q_{t-1}
            +
            \eta_t\bm X_t
            \bigl(\bm X_t^\top\bm Q_{t-1}\bigr)
        \right].
    \]
\EndFor
\State \Return \(\bm Q_n\).
\end{algorithmic}
\end{algorithm}

This is precisely the streaming PCA setting in \cite{huang2021streaming,allen2017first}. We note that the results in \cite{huang2021streaming} use specialized matrix concentration inequalities while the technique in \cite{allen2017first} is based on a very technical argument involving recursion bounds. Both \cite{huang2021streaming} and \cite{allen2017first} require a two-phase analysis for a uniformly distributed initialization: they first prove that with constant learning rates, the algorithm can find a neighborhood of the principal eigenvector, and with a new learning-rate schedule of order $\Tilde{\Theta}(1/n)$, it can achieve the optimal convergence rate.

Before presenting the convergence results, we first make some remarks on the assumptions posed here. First, the assumption that the data are centered is only for the sake of presentation. For non-centered data, one can replace the update by using two data points $\bm X_i, \bm X_{i+1}$ at a time and change the update rule to 
\[
\frac{1}{2} \lb \bm X_i - \bm X_{i+1} \rb \cdot  \lb \bm X_i - \bm X_{i+1} \rb^\top.
\]
Furthermore, the boundedness assumption we impose here is also not essential. In fact, the convergence result we present below extends easily to the case of the data having sub-Gaussian tails as well. We keep the boundedness assumption here in order to have an easy, direct comparison with existing results \cite{huang2021streaming,allen2017first}. 

To state our convergence result, define 
\[
\bm A_t:= \bm X_t \bm X_t^\top, \qquad L:= \lambda_1 + M, \qquad \mu_1:= \sum_{i=1}^k \lambda_i,
\]
and 
\[
 r_k(\delta) = k  \cdot
    \exp\Bigg\{
        C
        \sqrt{
            \log(k) \cdot 
            \log\left(\frac{C}{\delta}\right)
        }
        +
        C
        \log\left(\frac{C}{\delta}\right)
    \Bigg\}
\]
where $C>0$ is a universal constant that is sufficiently large. 

\begin{thm} \label{k-PCA convergence}
    Fix $\delta>0$ and let $\la \eta_t; \, 1 \leq t \leq n \ra$ be an arbitrary sequence of positive learning rates such that 
    \begin{align} \label{step size cond}
    \max_{t \leq n} \eta_t \leq \frac{1}{4L}; \qquad L\mu_1 \cdot  \sum_{i=1}^n \eta_t^2 \leq c\delta 
    \end{align}
    Then, with probability at least $1-\delta$, we have 
\begin{align} \label{k-PCA-F-bound}
    \| \sin \bm \Theta \lb  \bm Q_n, \bm U^\star_k \rb \|_{\rm F}^2 \lesssim \frac{r_k(\delta)}{\delta} \cdot  \left[  \binom{p}{k} \cdot \exp \lb  -2\Delta_k  \sum_{i=1}^n \eta_i  \rb + \sigma_k^2 \sum_{i=1}^n \eta_i^2 \cdot \exp \lb -2\Delta_k \cdot \sum_{j=i+1}^n \eta_j \rb \right]
\end{align}
where
\[
\sigma_k^2:= \sum_{i=1}^k\sum_{j=k+1}^p \mathbb E \left[  \left(  \bm u_j^\top (\bm A_t-\bm\Sigma) \bm u_i \right)^2 \right]
.
\]
\end{thm}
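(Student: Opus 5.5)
The plan follows the geometric perspective announced in the introduction and transfers the problem to $1$-PCA in the exterior power $\Lambda^k \mb R^p$, which has dimension $\binom{p}{k}$.

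\textbf{Step 1: lift to the exterior power.} Since QR only right-multiplies by an invertible $k\times k$ matrix, the column span of $\bm Q_n$ equals that of the unnormalized product
\[
\bm B_n \bm G_0, \qquad \bm B_n:=\prod_{t=n}^{1}(\bm I_p+\eta_t \bm A_t).
\]
Taking the wedge of the columns, the unit $k$-vector $\bm q_n := \wedge^k \bm Q_n$ is therefore, up to sign and normalization, $\wedge^k(\bm B_n)\,(\wedge^k \bm G_0)$. Thus $\bm q_n$ is the normalized output of a power-type product of the operators $\wedge^k(\bm I+\eta_t\bm A_t)$ applied to the random vector $\bm g_0:=\wedge^k \bm G_0$.

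The target is $\bm u^\star := \bm u_1\wedge\cdots\wedge \bm u_k$. Write $\bm P$ for the projection onto $\bm u^\star$. Since $|\langle \bm q_n,\bm u^\star\rangle| = \prod_i\cos\theta_i$, the identity
\[
\|\sin\bm\Theta\|_{\rm F}^2=\sum_i \sin^2\theta_i \le 1-\prod_i\cos^2\theta_i
\]
reduces the claim to bounding
\[
\tan^2\phi_n:=\frac{\|(\bm I-\bm P)\wedge^k(\bm B_n)\bm g_0\|^2}{\langle \bm u^\star,\wedge^k(\bm B_n)\bm g_0\rangle^2}.
\]

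\textbf{Step 2: expectation structure.} Expand
\[
\wedge^k(\bm I+\eta \bm A)=\bm I+\eta\, \mathcal D(\bm A)+O(\eta^2),
\]
where $\mathcal D(\bm A)=\sum_i \bm I\wedge\cdots\wedge \bm A\wedge\cdots\wedge \bm I$ is the derivation. The operator $\mathcal D(\bm\Sigma)$ has top eigenvalue $\mu_1$ with eigenvector $\bm u^\star$, and its next eigenvalue is $\mu_1-\Delta_k$. This is the $1$-PCA setting with gap $\Delta_k$.

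The quadratic remainder is controlled using $\eta_t\le 1/(4L)$, since all factors have norm at most $1+\eta L$. The noise variance in the off-diagonal direction is
\[
\mb E\|(\bm I-\bm P)\mathcal D(\bm A_t-\bm\Sigma)\bm u^\star\|^2 .
\]
Only the components $\bm u_1\wedge\cdots\wedge \bm u_j\wedge\cdots\wedge \bm u_k$, with $\bm u_i$ replaced by some $\bm u_j$ for $j>k$, survive at first order. Summing their squared coefficients gives exactly $\sigma_k^2$.

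\textbf{Step 3: run the $1$-PCA argument in $\Lambda^k$.}

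For the numerator, take second moments. Conditioning step by step, the off-space component contracts by a factor $(1+\eta_t(\mu_1-\Delta_k))^2$ in the mean and picks up $\eta_t^2\sigma_k^2\,\langle\bm u^\star,\cdot\rangle^2$ plus higher-order terms. Comparing with the growth $(1+\eta_t\mu_1)^2$ of the $\bm u^\star$-component gives the two terms in (\ref{k-PCA-F-bound}):
\begin{itemize}
\item the initialization term $\|(\bm I-\bm P)\bm g_0\|^2 e^{-2\Delta_k\sum\eta}$;
\item the accumulated noise term $\sigma_k^2\sum_i\eta_i^2 e^{-2\Delta_k\sum_{j>i}\eta_j}$.
\end{itemize}
Markov's inequality then yields the factor $1/\delta$.

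For the denominator, show that with probability $1-\delta$,
\[
\langle\bm u^\star,\wedge^k(\bm B_n)\bm g_0\rangle \gtrsim \prod_t(1+\eta_t\mu_1)\,\langle\bm u^\star,\bm g_0\rangle .
\]
This uses a martingale and second-moment argument on the normalized scalar process, where the condition $L\mu_1\sum\eta_t^2\le c\delta$ keeps the multiplicative fluctuation small.

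For the initialization, note that $\langle\bm u^\star,\bm g_0\rangle=\det(\bm U_k^{\star\top}\bm G_0)$ is the determinant of a $k\times k$ Gaussian matrix, while $\|\bm g_0\|^2$ is a Gram determinant. The ratio $\|\bm g_0\|^2/\langle\bm u^\star,\bm g_0\rangle^2$ is bounded by $\binom pk r_k(\delta)$ with probability $1-\delta$. This is a small-ball bound for Gaussian determinants, obtained from the product of independent $\chi$ variables (Bartlett decomposition). That product produces the $\exp\{C\sqrt{\log k\log(C/\delta)}+C\log(C/\delta)\}$ shape of $r_k(\delta)$.

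\textbf{Main obstacle.} The delicate step is the lower bound on the $\bm u^\star$-component, carried out uniformly enough that it can be combined with the numerator bound on one event. The noise operator $\mathcal D(\bm A_t-\bm\Sigma)$ acts on a $\binom pk$-dimensional space, but its relevant variance must remain of order $L\mu_1$ rather than scaling with the dimension. My first attempt is to bound
\[
\mb E\|\mathcal D(\bm A-\bm\Sigma)\bm v\|^2 \le C L\mu_1 \quad\text{for unit decomposable } \bm v ,
\]
exploiting the fact that $\mathcal D$ of a rank-one matrix $\bm X\bm X^\top$ maps $\bm v$ to $\bm X\wedge(\text{interior product})$. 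This bound would then feed a standard supermartingale argument for $\log\langle\bm u^\star,\cdot\rangle$.
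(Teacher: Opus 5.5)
The step you flag as the main obstacle is a genuine gap, and the route you propose for it cannot work.

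Dualize the denominator:
$\langle \bm u^\star,\wedge^k(\bm B_n)\bm g_0\rangle=\langle \wedge^k(\bm B_n)^\top\bm u^\star,\bm g_0\rangle=a\langle\bm u^\star,\bm g_0\rangle+\langle\bm r,\bm g_0\rangle$,
where $a\bm u^\star+\bm r:=\wedge^k(\bm B_n)^\top\bm u^\star$ with $\bm r\perp\bm u^\star$. This vector depends only on the data. The cross term is noise-driven leakage into the $\bm u^\star$ direction from the perpendicular part of $\bm g_0$. That part is typically $\sqrt{\binom pk}$ times larger than the $\bm u^\star$-part, and the cross term can cancel the main term.

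Already for $k=1$ and Gaussian $\bm g_0$, conditionally on the data the pair $(\langle\bm u^\star,\bm g_0\rangle,\langle\bm u^\star,\wedge^k(\bm B_n)\bm g_0\rangle)$ equals $(\xi_1,a\xi_1+\|\bm r\|\xi_2)$ with $\xi_1,\xi_2$ i.i.d.\ $N(0,1)$. So the event $|a\xi_1+\|\bm r\|\xi_2|\le a|\xi_1|/2$ has probability of order $\|\bm r\|/a$, by the Cauchy tail of $\xi_2/\xi_1$. The hypothesis controls $\|\bm r\|/a$ only at scale $\sqrt{c\delta}$. Its mean square is of order $\sigma_k^2\sum_t\eta_t^2e^{-2\Delta_k\sum_{s<t}\eta_s}$, which is comparable to $c\delta$ for front-loaded steps and $\sigma_k^2\asymp L\mu_1$. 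Hence your claimed bound $\langle\bm u^\star,\wedge^k(\bm B_n)\bm g_0\rangle\gtrsim\prod_t(1+\eta_t\mu_1)\langle\bm u^\star,\bm g_0\rangle$ fails, with a universal constant, on an event of probability of order $\sqrt{c\delta}\gg\delta$.

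Your forward (super)martingale argument has the matching defect. The martingale increment of $\langle\bm u^\star,\wedge^k(\bm B_t)\bm g_0\rangle$ is $\eta_t\langle\mathcal D(\bm A_t-\bm\Sigma)\bm u^\star,\wedge^k(\bm B_{t-1})\bm g_0\rangle$. Its conditional variance contains a contribution from the perpendicular mass $\|(\bm I-\bm P)\wedge^k(\bm B_{t-1})\bm g_0\|^2$. So your estimate $\mathbb E\|\mathcal D(\bm A-\bm\Sigma)\bm v\|^2\le CL\mu_1$ only yields a relative variance of order $L\mu_1\sum_t\eta_t^2\cdot\|(\bm I-\bm P)\bm g_0\|^2/\langle\bm u^\star,\bm g_0\rangle^2$, which can be as large as $c\delta\binom pk$. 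Moreover, $\log\langle\bm u^\star,\cdot\rangle$ is undefined once that component can change sign. The same leakage enters your numerator if you condition on $\bm g_0$, since its $\sigma_k^2$-term is proportional to $\mathbb E[\langle\bm u^\star,\wedge^k(\bm B_{t-1})\bm g_0\rangle^2\mid\bm g_0]$.

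The missing idea is to never track the $\bm u^\star$-component forward from the initialization. The paper writes $\langle\bm u^\star,\wedge^k(\bm B_n)\bm g_0\rangle^2=Y_n\langle\bm\alpha_n,\bm g_0\rangle^2$, with $Y_n:=\|\wedge^k(\bm B_n)^\top\bm u^\star\|^2$ and $\bm\alpha_n:=\wedge^k(\bm B_n)^\top\bm u^\star/\sqrt{Y_n}$. The vector $\bm\alpha_n$ is data-measurable, unit and decomposable, and independent of $\bm g_0$. Each piece is then handled separately:
\begin{itemize}
\item $Y_n$ involves no initialization. One has $\mathbb EY_n\ge\prod_t(1+\eta_t\mu_1)^2$, and peeling off one factor at a time gives $\mathbb EY_n^2\le\prod_t\|\mathbb E[(\bm I+\eta_t\mathcal D(\bm A_t))^4]\|_{\rm op}$. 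So the relative variance is at most $e^{10L\mu_1\sum_t\eta_t^2}-1\le\delta/12$, and Chebyshev gives $Y_n\ge\frac12\prod_t(1+\eta_t\mu_1)^2$ with probability $1-\delta/3$. This is where $L\mu_1\sum_t\eta_t^2\le c\delta$ is actually used.
\item The small-ball bound is applied to $\langle\bm\alpha_n,\bm g_0\rangle$ conditionally on the data. So you need it uniformly over unit decomposable $\bm\alpha$, not just at $\bm u^\star$; your Gaussian-determinant argument gives this by rotation invariance.
\item The numerator is averaged over the initialization too, using $\mathbb E[\bm\omega_0\bm\omega_0^\top]={\binom{p}{k}}^{-1}\bm I$ for $\bm\omega_0:=\bm g_0/\|\bm g_0\|$. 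This gives a trace recursion for $\operatorname{tr}\big((\bm I-\bm P)\,\mathbb E[\wedge^k(\bm B_t)\wedge^k(\bm B_t)^\top]\big)$, in which the $\sigma_k^2$-term is bounded by $\|\mathbb E[\wedge^k(\bm B_{t-1})\wedge^k(\bm B_{t-1})^\top]\|_{\rm op}$. Markov's inequality finishes it.
\end{itemize}

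Two smaller points. First, your Step 1 inequality $\sum_i\sin^2\theta_i\le1-\prod_i\cos^2\theta_i$ is reversed; take $k=2$, $\theta_1=\theta_2=\pi/4$. The reduction to $\tan^2\phi_n$ still holds via $\sum_i\sin^2\theta_i\le-\sum_i\log\cos^2\theta_i=\log(1+\tan^2\phi_n)\le\tan^2\phi_n$. Second, since $\bm A_t$ has rank one, $\wedge^k(\bm I+\eta_t\bm A_t)=\bm I+\eta_t\mathcal D(\bm A_t)$ exactly, so there is no quadratic remainder to control.
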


Theorem \ref{k-PCA convergence} is an extension of \cite{jain2016streaming} to the case $k>1$. It is easy to check that our result recovers Theorem 3.1 in \cite{jain2016streaming} by setting $k \equiv 1$ in Theorem \ref{k-PCA convergence}, albeit with a slightly different polynomial dependence on $\delta$. We note that any polynomial dependence on $1/\delta$ can be improved to $\mbox{polylog}(1/\delta)$ by employing the bag-of-medians trick, see Algorithm \ref{alg:grassmann-median} below for more details. To the best of our knowledge, Theorem \ref{k-PCA convergence} is the first $k$-PCA analog of \cite{jain2016streaming} that gives an upper bound for an {\it arbitrary} choice of learning rates. Providing finite sample guarantees for streaming $k$-PCA is a challenging task overall, and most analyses in the literature adopt a two-phase analysis that involves finding a warm start by using large, constant learning rates, and then choosing learning rates of order $1/t$ \cite{huang2021streaming,allen2017first}.

The main message of Theorem \ref{k-PCA convergence} is not just about the error bound. In the proof of Theorem \ref{k-PCA convergence}, we introduce a new, conceptually simple technique that creates a one-to-one correspondence between $k$-PCA with $k>1$ and $1$-PCA. Roughly speaking, using the so-called Pl\"ucker embedding, we demonstrate that deriving a finite-sample bound for $k$-PCA with $k>1$ is equivalent to doing so for $1$-PCA in an enlarged space, where the magnitude of all parameters can be kept up to a factor that grows slower than any polynomial of the model parameters. 

Although our technique is conceptually simple, we expect it to work for a large class of subspace estimation problems using a stochastic approximation framework. Despite being simple, the analysis requires some probabilistic insight. In the context of Oja's algorithm, the initialization is no longer uniformly distributed on the enlarged space and requires a careful analysis of the anti-concentration properties of the initialization.  This is apparent in the  dependence on $k$ in the right-hand side of \eqref{k-PCA-F-bound}: we see that $r_k(\delta)$ is of order $k^{1+o(1)}$ as $k \to \infty$. The $o(1)$ term here is of order $\exp \lb \Theta \lb \sqrt{\log k} \rb \rb$, which is higher than $\mbox{polylog}(k)$, and is due to an anti-concentration bound needed for the analysis of initialization.  
As a consequence of Theorem \ref{k-PCA convergence}, we can recover the error bound that matches information lower bound \cite{huang2021streaming,allen2017first}. To this end, define 
\[
m
:=
\left\lceil
C_0
\frac{L\mu_1}{\delta\Delta_k^2} \cdot 
k^2\log^2\left(\frac{ep}{k}\right)
\right\rceil
\]
and
\[
\beta
:=
C_0
\max\left\{
1,\,
\frac{L}{\Delta_k},\,
\frac{L\mu_1}{\delta\Delta_k^2}
\right\},
\]
where $C_0>0$ is a sufficiently large universal constant. Choose
the learning rates
\begin{equation}
\label{eq:two-phase-kpca-step-size}
\eta_t
:=
\begin{cases}
\displaystyle
\sqrt{
\frac{c_0\delta}{L\mu_1m}
},
&1\le t\le m,
\\[1.2em]
\displaystyle
\frac{6}{
\Delta_k\lb \beta+t-m\rb
},
&m<t\le n,
\end{cases}
\end{equation}
where $c_0>0$ is a sufficiently small universal constant.

\begin{co} \label{cor:two-phase-kpca}
Suppose the assumptions of Theorem~\ref{k-PCA convergence} hold,
and fix $\delta\in(0,1/4)$. There exists a sufficiently large universal constant $C_1>0$ such
that, if
\begin{equation}
\label{eq:two-phase-kpca-sample-size}
n
\ge
C_1
\left[
\frac{L\mu_1}{\delta\Delta_k^2}
k^2\log^2\left(\frac{ep}{k}\right)
+
\left(
\max\left\{
1,\,
\frac{L}{\Delta_k},\,
\frac{L\mu_1}{\delta\Delta_k^2}
\right\}
\right)^{6/5}
\right],
\end{equation}
then the output of Algorithm~\ref{alg:oja-k-pca} satisfies
\begin{equation}
\label{eq:two-phase-kpca-rate}
\left\|
\sin\bm\Theta
\left(
\bm Q_n,\bm U_k^\star
\right)
\right\|_{\rm F}^2
\lesssim
\frac{r_k(\delta)}{\delta} \cdot 
\left[
\frac{\sigma_k^2}{\Delta_k^2n}
+
\frac1{n^2}
\right]
\end{equation}
with probability at least $1-\delta$.
\end{co}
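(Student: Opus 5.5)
Corollary~\ref{cor:two-phase-kpca} is a direct instance of Theorem~\ref{k-PCA convergence} with the schedule \eqref{eq:two-phase-kpca-step-size}. The proof reduces to three tasks: verifying the step-size conditions \eqref{step size cond}, and then bounding separately the initialization term and the variance term on the right of \eqref{k-PCA-F-bound}.

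\textbf{Step 1: the step-size conditions.} In the first phase,
\[
\eta_t^2 = \frac{c_0\delta}{L\mu_1 m},
\]
so the first $m$ steps contribute at most $c_0\delta/(L\mu_1)$ to $\sum_t\eta_t^2$. In the second phase,
\[
\sum_{t>m}\eta_t^2\le \frac{36}{\Delta_k^2(\beta-1)} \lesssim \frac{\delta}{C_0 L\mu_1},
\]
because $\beta\ge C_0L\mu_1/(\delta\Delta_k^2)$. Hence $L\mu_1\sum_t\eta_t^2\le c\delta$ once $c_0$ is small and $C_0$ is large.

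For the bound $\eta_t\le 1/(4L)$:
\begin{itemize}
\item In phase one, $m\ge C_0 L\mu_1/(\delta\Delta_k^2)$ and $\Delta_k\le \lambda_1\le L$ give $\eta_t^2\le c_0\Delta_k^2/(C_0L^2)$, which is small enough.
\item In phase two, $\eta_t\le 6/(\Delta_k\beta)\le 6/(C_0L)$, using $\beta\ge C_0L/\Delta_k$.
\end{itemize}

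\textbf{Step 2: the initialization term.} Write $S=\sum_{i\le n}\eta_i$.
\begin{itemize}
\item Phase one contributes
\[
m\sqrt{\frac{c_0\delta}{L\mu_1 m}}=\sqrt{\frac{c_0\delta m}{L\mu_1}}\gtrsim \sqrt{c_0C_0}\,\frac{k\log(ep/k)}{\Delta_k}.
\]
Thus $\exp(-2\Delta_k\cdot\text{phase-one sum})\le (ep/k)^{-Ck}$ for a large $C$.
\item Phase two contributes
\[
\frac{6}{\Delta_k}\log\frac{\beta+n-m}{\beta},
\quad\text{so}\quad
\exp(-2\Delta_k\cdot\text{phase-two sum})\le \left(\frac{\beta}{\beta+n-m}\right)^{12}.
\]
\end{itemize}
Since $\binom{p}{k}\le (ep/k)^k$, the initialization term is at most
\[
(ep/k)^{-(C-1)k}\,\beta^{12}/(n-m)^{12}.
\]
The sample-size condition \eqref{eq:two-phase-kpca-sample-size} gives $n\ge 2m$, and also $\beta^{6/5}\lesssim n$, i.e.\ $\beta^{12}\lesssim n^{10}$. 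So this term is $\lesssim n^{10}/n^{12}=1/n^2$.

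\textbf{Step 3: the variance term.} This term is
\[
\sigma_k^2\sum_i\eta_i^2\exp\Bigl(-2\Delta_k\sum_{j>i}\eta_j\Bigr).
\]
\begin{itemize}
\item \emph{Indices $i>m$.} Using $\sum_{j=i+1}^n\eta_j\ge \frac{6}{\Delta_k}\log\frac{\beta+n-m+1}{\beta+i-m+1}$, the summand is at most
\[
\frac{36}{\Delta_k^2(\beta+i-m)^2}\left(\frac{\beta+i-m+1}{\beta+n-m+1}\right)^{12}.
\]
Summing over $i$ gives $\lesssim \frac{1}{\Delta_k^2(\beta+n-m)}\lesssim \frac{1}{\Delta_k^2 n}$, again using $n\ge 2m$.
\item \emph{Indices $i\le m$.} The weights are at most $\exp(-2\Delta_k\cdot\text{phase-two sum})\le (\beta/(\beta+n-m))^{12}$. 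Combined with $\sum_{i\le m}\eta_i^2\le c_0\delta/(L\mu_1)$, this part is at most
\[
\sigma_k^2\,\frac{\delta}{L\mu_1}\,\frac{\beta^{12}}{n^{12}}.
\]
\end{itemize}
To absorb the $i\le m$ part, note that $\sigma_k^2\lesssim kM\cdot\operatorname{tr}$-type quantities are bounded by a multiple of $L\mu_1$. Hence it is $\lesssim \delta\beta^{12}/n^{12}\lesssim 1/n^2$, by the same $\beta^{6/5}\lesssim n$ argument. If that bound on $\sigma_k^2$ is awkward, $\sigma_k^2\lesssim L\mu_1$ follows from $\mathbb E\|(\bm A-\bm\Sigma)\bm U_k^\star\|_F^2\le \mathbb E\,\|\bm A-\bm\Sigma\|\cdot\operatorname{tr}(\cdots)$.

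Plugging Steps 2 and 3 into \eqref{k-PCA-F-bound} yields \eqref{eq:two-phase-kpca-rate}.

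\textbf{Main obstacle.} The delicate point is the bookkeeping of constants. The exponent $12$ from the factor $6/\Delta_k$, together with the $6/5$ power in \eqref{eq:two-phase-kpca-sample-size}, must make $\beta^{12}/n^{12}\lesssim n^{-2}$. At the same time, $C_0$ must be chosen large relative to $c_0^{-1}$ so that the phase-one sum kills $\binom{p}{k}$, and $n\ge 2m$ must follow from $C_1\ge 2C_0$.
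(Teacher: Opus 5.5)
Your proposal is correct and follows the paper's proof essentially step for step. You use the same verification of \eqref{step size cond}, the same choice of $c_0C_0$ large to cancel $\binom pk$, the same $\beta^{12}=(\beta^{6/5})^{10}\lesssim n^{10}$ argument for the $1/n^2$ term, and the same split of the convolution sum at $i=m$. There are two harmless slips. First, in phase one the bound should read $\eta_{\rm w}^2\le c_0\delta^2\Delta_k^2/(C_0L^2\mu_1^2)\le c_0/(C_0L^2)$, using $\Delta_k\le\mu_1$. Second, with $N=n-m$, the phase-two sum is bounded below only by $\frac{6}{\Delta_k}\log\frac{\beta+N+1}{\beta+1}$, not by $\frac{6}{\Delta_k}\log\frac{\beta+N}{\beta}$; this costs only a factor $2^{12}$ since $\beta\ge1$.

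The only real deviation is the $i\le m$ piece. You absorb it into the $1/n^2$ term via $\sigma_k^2\le L\mu_1$. That bound is true, since $\sigma_k^2=\mathbb E[\|\bm U_\perp^\top\bm X_1\|^2\,\|(\bm U_k^\star)^\top\bm X_1\|^2]\le L\mu_1$ because $\|\bm X_1\|^2\le L$, but your stated justification for it is vague. The paper instead uses $\Delta_k^2\le L\mu_1$ and $\delta/n^2\le 1/n$ to fold this piece into the $\sigma_k^2/(\Delta_k^2 n)$ term, which needs no bound on $\sigma_k^2$ at all.
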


In particular, when $L$, $\mu_1$, and $\Delta_k^{-1}$ are bounded,
a simple, sufficient condition for  \eqref{eq:two-phase-kpca-sample-size} is
\[
n
\gtrsim
\operatorname{poly}\left(\frac1\delta\right)
k^2\log^2\left(\frac{ep}{k}\right).
\]
Note that the $k^2$ dependence in the above condition is due to the fact that our loss metric is the Frobenius distance, which is expected to be larger than the operator norm by a factor $\Omega(\sqrt{k})$. Given this, our bound essentially matches \cite{huang2021streaming} up to polynomial factors in $1/\delta$ and polylog factors in terms of dimension. As noted above, the polynomial dependence on $1/\delta$ can be reduced to logarithmic dependence by employing the bag-of-medians trick in Algorithm \ref{alg:grassmann-median}.

\section{Averaging scheme for k-PCA}

The convergence result in Section \ref{sec:k-PCA-convergence} provides a general convergence result for Oja's algorithm for $k$-PCA using a new geometric perspective from multilinear algebra. However, a crucial problem with Theorem \ref{k-PCA convergence} and Algorithm \ref{alg:oja-k-pca} is that knowledge of the eigengap is required to choose the learning rates appropriately. As we have seen in Theorem \ref{k-PCA convergence}, such knowledge is crucial for having optimal convergence rates. In this section, we attempt to provide a partial solution to this issue by adapting a classical idea from the stochastic optimization literature: the Polyak--Ruppert averaging scheme \cite{polyak1992acceleration,ruppert1988efficient}. Roughly speaking, the averaging scheme allows us to run the algorithm with a generally sub-optimal choice of learning rates, and then average these sub-optimal estimators to get an optimal one, up to first order. For recent surveys and developments of the Polyak--Ruppert averaging scheme, the readers are referred to \cite{moulines2011non,tripuraneni2018averaging,khodadadian2026general,durmus2025finite} (see also the references therein).

In general, it is expected that a low-memory, adaptive algorithm (in the sense that it does not have any unknown tuning parameters) will be sub-optimal to some degree \cite{carmon2024price}. Note that proving such a result is a very difficult problem because we currently do not have any information-theoretic tools to prove these kinds of results. We will show that although the averaging scheme may not be able to achieve exactly the information lower bound, it achieves this bound up to {\it first order}. This means that the suboptimality appears only in terms that decay like $n^{-1-\Theta(1)}$. As in the case of linear stochastic approximation, we will show that the first-order error term in our error bound matches that of the maximum likelihood estimator (MLE), which is known to have the smallest possible variance by the Cram\'er--Rao lower bound \cite{anderson1963asymptotic}.

Our averaging framework consists of two phases. We run Oja's algorithm with a constant learning rate in phase I to find a good initialization, which is better than random guessing. Starting from the good initialization obtained in phase I, we run Oja's algorithm with a sequence of sub-optimal learning rates which result in a sequence of iterates with a suboptimal convergence rate. We then show that an appropriate averaging of these sub-optimal iterates will result in an estimator with an optimal convergence rate. Note that our procedure is fully adaptive and does not involve unknown parameters. To describe the final algorithm, we split it into a few algorithms outlined in the sections below.

\begin{algorithm}[H]
\caption{Polyak--Ruppert Oja for $k$-PCA}
\label{alg:adaptive-pr-kpca}
\begin{algorithmic}[1]
\Require Data $\bm X_1,\ldots,\bm X_n\in\mathbb R^p$;
target rank $k$; exponent $\alpha\in(1/2,1)$;
confidence level $\delta\in(0,1/8)$.

\State Set
\[
m\gets\left\lfloor\frac n2\right\rfloor,
\qquad
N\gets n-m,
\qquad
\delta_{\rm w}\gets\frac{\delta}{4},
\qquad
c_\alpha\gets100(1+\alpha).
\]

\State Set
\[
R\gets
\left\lceil
24\log\frac{1}{\delta_{\rm w}}
\right\rceil
=
\left\lceil
24\log\frac{4}{\delta}
\right\rceil,
\qquad
m_0\gets\left\lfloor\frac{m}{R}\right\rfloor,
\qquad
\eta_{\rm w}\gets c_\alpha n^{-\alpha}.
\]

\State Apply Algorithm~\ref{alg:kpca-warm-start} to
$\bm X_1,\ldots,\bm X_m$, using the parameters
$R,m_0,\eta_{\rm w},\delta_{\rm w}$ and the distance
\[
d(\bm W,\bm U)
:=
\|\sin\bm\Theta(\bm W,\bm U)\|_{\rm F}
=
\sqrt{k-\|\bm W^\top\bm U\|_{\rm F}^2}.
\]
Denote its output by $\widehat{\bm U}_{\rm warm}$.

\State Apply Algorithm~\ref{alg:kpca-phase-two} to
$\bm X_{m+1},\ldots,\bm X_n$, initialized at
\[
\bm Q_m\gets\widehat{\bm U}_{\rm warm},
\]
and using the learning rates
\[
\eta_t\gets t^{-\alpha},
\qquad
t=m+1,\ldots,n.
\]

\State In Algorithm~\ref{alg:kpca-phase-two}, get $\widehat{\bm U}_{\rm avg}$ by applying
Algorithm~\ref{alg:kpca-averaging} online to
\[
\bm Q_m,\bm Q_{m+1},\ldots,\bm Q_{n-1}.
\]

\State \Return 
$\widehat{\bm U}_{\rm avg}$.
\end{algorithmic}
\end{algorithm}

\subsection{Boosting and averaging}

 Let us present here two useful algorithms: one is a probabilistic technique known as bag-of-medians that can be used to improve the error bound for various iterative algorithms, and one is an averaging analog of the usual Euclidean averaging on the space of orthogonal subspaces. These two algorithms have been used in Algorithm \ref{alg:adaptive-pr-kpca} above as the pre-processing steps.  

We begin with the bag-of-medians algorithm. The form of this technique, tailored to our setting, can be found in Algorithm \ref{alg:grassmann-median} below. 

\begin{algorithm}[H]
\caption{Boosting with bag-of-medians}
\label{alg:grassmann-median}
\begin{algorithmic}[1]
\Require Orthonormal frames
$\bm W^{(1)},\ldots,\bm W^{(R)}\in\mathbb R^{p\times k}$.

\State Set $q\gets\lfloor R/2\rfloor+1$.

\For{$r=1,\ldots,R$}
    \State Let $\rho_r$ be the $q$-th smallest element of
    \[
    \left\{
    d\bigl(\bm W^{(r)},\bm W^{(s)}\bigr):
    1\le s\le R
    \right\}.
    \]
\EndFor

\State Choose
\[
\widehat r\in
\operatorname*{argmin}_{1\le r\le R}\rho_r.
\]

\State \Return $\bm W^{(\widehat r)}$.
\end{algorithmic}
\end{algorithm}

The distance $d$ in Algorithm \ref{alg:grassmann-median} can be any distance between subspaces (i.e., any proper metric on the Grassmannian manifold). We have the following performance guarantee for Algorithm \ref{alg:grassmann-median}, which, roughly speaking, states that we can select an estimator among a subset of independent estimators that is guaranteed to behave better than the typical performance of each estimator. Formally, we have
\begin{prop} \label{prop:boosting}
    Let $\bm W^{(1)},\ldots,\bm W^{(R)}\in\mathbb R^{p\times k}$ be independent. Assume that for some $a>0$,
    \[
    \min_{1 \leq r \leq R} \mb P \lb d^2 \lb \bm W^{(r)}, \bm U_\star \rb \leq a \rb \geq \frac 34.
    \]
    Let $\bm W^{(\widehat r)}$ be the output of Algorithm~\ref{alg:grassmann-median}. Then
    \[
    \mb P \lb d^2\bigl(\bm W^{(\widehat r)},\bm U_\star\bigr)
\le 9a \rb \geq 1-\delta
    \]
    as long as $R \geq 24 \log \lb 1/\delta \rb$.
\end{prop}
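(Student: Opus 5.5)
We follow the standard median-of-means argument: use only the triangle inequality for $d$ together with a Chernoff/Hoeffding bound on the number of ``good'' estimators. Call index $r$ \emph{good} if $d(\bm W^{(r)},\bm U_\star)\le \sqrt a$, and let $G:=\sum_{r=1}^R \mathbbm 1\{r \text{ good}\}$. The events are independent and each has probability at least $3/4$. By Hoeffding's inequality,
\[
\mb P\bigl(G\le R/2\bigr)\le \mb P\bigl(G-\mb E G\le -R/4\bigr)\le \exp(-2R(1/4)^2)=\exp(-R/8).
\]
If $R\ge 24\log(1/\delta)$, then $\exp(-R/8)\le \delta^3\le\delta$ for $\delta<1$. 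This is generous; a sharper Chernoff bound is not needed. We therefore work on the event $\{G\ge \lfloor R/2\rfloor+1=q\}$, which has probability at least $1-\delta$.

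On this event we argue deterministically in two steps. First, we bound $\rho_{\widehat r}$. Take any good index $r$. For every good $s$, the triangle inequality gives $d(\bm W^{(r)},\bm W^{(s)})\le 2\sqrt a$. There are at least $q$ such $s$, counting $s=r$ itself, where the distance is $0$. Hence the $q$-th smallest distance satisfies $\rho_r\le 2\sqrt a$, and so $\rho_{\widehat r}=\min_r\rho_r\le 2\sqrt a$.

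Second, we locate $\bm W^{(\widehat r)}$. By the definition of $\rho_{\widehat r}$, at least $q$ indices $s$ satisfy $d(\bm W^{(\widehat r)},\bm W^{(s)})\le \rho_{\widehat r}$. Since $q+G\ge 2q>R$, the pigeonhole principle shows that at least one such $s$ is good. Then
\[
d(\bm W^{(\widehat r)},\bm U_\star)\le d(\bm W^{(\widehat r)},\bm W^{(s)})+d(\bm W^{(s)},\bm U_\star)\le 2\sqrt a+\sqrt a=3\sqrt a .
\]
Squaring gives $d^2(\bm W^{(\widehat r)},\bm U_\star)\le 9a$.

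The only point requiring care is that $d$ must be a genuine metric on the Grassmannian, so that the triangle inequality holds. This is assumed in the text. For $d=\|\sin\bm\Theta\|_{\rm F}$ it holds because $d(\bm W,\bm U)=\tfrac1{\sqrt2}\|\bm W\bm W^\top-\bm U\bm U^\top\|_{\rm F}$. The other subtle point is the counting in the pigeonhole step, where the strict majority $q=\lfloor R/2\rfloor+1$ is used. Ties in the argmin and in the order statistics are harmless, since every inequality above is non-strict.
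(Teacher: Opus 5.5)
Your proof is correct and takes the route the paper points to (the paper omits the proof, saying only that it is a simple application of Hoeffding's inequality). Hoeffding gives $\mathbb P(G\le R/2)\le e^{-R/8}\le\delta$, and on the complementary event the triangle inequality together with the pigeonhole count $q+G>R$ gives $d(\bm W^{(\widehat r)},\bm U_\star)\le 3\sqrt a$; your check that $d=\tfrac{1}{\sqrt2}\|\bm W\bm W^\top-\bm U\bm U^\top\|_{\rm F}$ is a genuine metric supplies the one detail the paper leaves implicit.
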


Proposition \ref{prop:boosting} gives a probability boost from $3/4$ (which can be replaced by any number greater than $1/2$) to $1-\delta$ and requires only $\Theta \lb \log \lb 1/\delta \rb \rb$ estimators to execute. To see how it is useful, let us take a look at the error bound in Theorem \ref{k-PCA convergence}. The right-hand side of \eqref{k-PCA-F-bound} involves polynomial dependence on $\delta$, and by running Algorithm \ref{alg:grassmann-median}, we can improve it to $\mbox{polylog}(1/\delta)$. The proof of Proposition \ref{prop:boosting} is based on a simple application of Hoeffding's inequality (or binomial tail bound) and is thus omitted. We will make use of this boosting algorithm in what follows. 

The second algorithm we will be using is the averaging analog of a vector-valued sequence in $\mb R^d$. In the case $k=1$, the averaging notion is straightforward: one can take the Euclidean averaging and project it back onto the hypersphere. Let us state a natural extension of this averaging concept to the case $k \geq 2$ in Algorithm \ref{alg:kpca-averaging} below. 
\begin{algorithm}[H]
\caption{Averaging for $k$-subspaces}
\label{alg:kpca-averaging}
\begin{algorithmic}[1]
\Require A stream of Oja iterates
$\bm Q_{t_0},\ldots,\bm Q_n\in\mathbb R^{p\times k}$ satisfying
$\bm Q_t^\top\bm Q_t=\bm I_k$.

\State Initialize $\overline{\bm Q}_{t_0}\gets\bm Q_{t_0}$.

\For{$t=t_0+1,\ldots,n$}
    \State Orthogonalize the current average:
    \[
    \widetilde{\bm Q}_{t-1}
    \gets
    \operatorname{polar}(\overline{\bm Q}_{t-1}).
    \]

    \State Compute an SVD
    \[
    \bm Q_t^\top\widetilde{\bm Q}_{t-1}
    =
    \bm L_t\bm D_t\bm U_t^\top,
    \qquad
    \bm R_t\gets\bm L_t\bm U_t^\top.
    \]

    \State Set $\gamma_t\gets (t-t_0+1)^{-1}$ and update
    \[
    \overline{\bm Q}_t
    \gets
    (1-\gamma_t)\overline{\bm Q}_{t-1}
    +
    \gamma_t\bm Q_t\bm R_t.
    \]
\EndFor

\State Set
\[
\widehat{\bm U}_{\rm avg}
\gets
\operatorname{polar}(\overline{\bm Q}_n).
\]

\State \Return $\widehat{\bm U}_{\rm avg}$.
\end{algorithmic}
\end{algorithm}

We note that averaging on Riemannian manifolds can be done in many other ways (and are in fact equivalent if the estimators are close to the true subspace), and we only choose the most natural concept of averaging that is also convenient for our analyses.

\subsection{Phase I: seeking a warm start}
Fix $\alpha\in(1/2,1)$ and
$\delta\in(0,1/2)$. In Algorithm~\ref{alg:kpca-warm-start}, set
\[
R
:=
\left\lceil 24\log\frac1\delta\right\rceil,
\qquad
m_0
:=
\left\lfloor\frac{n}{2R}\right\rfloor,
\qquad
\eta_{\rm warm}
:=
\frac{c_\alpha}{n^\alpha},
\]
where $c_\alpha>0$ depends only on $\alpha$. 

Additionally, suppose that
\begin{align}
\label{eq:kpca-warm-basic-conditions}
n\ge 4R,
\qquad
\eta_{\rm warm}\le\frac{1}{4L},
\qquad
L\mu_1 \cdot m_0 \cdot \eta_{\rm warm}^2\le c_0,
\end{align}
for some universal constant $c_0>0$ sufficiently small.

\begin{algorithm}[H]
\caption{Phase I: confidence-boosted warm start}
\label{alg:kpca-warm-start}
\begin{algorithmic}[1]
\Require Data $\bm X_1,\ldots,\bm X_m\in\mathbb R^p$;
target rank $k$; exponent $\alpha\in(1/2,1)$;
confidence level $\delta\in(0,1)$.

\State Set
\[
R\gets\left\lceil24\log(1/\delta)\right\rceil,
\qquad
b\gets\left\lfloor\frac{m}{R}\right\rfloor,
\qquad
\eta_{\rm w}\gets c_\alpha n^{-\alpha}.
\]

\For{$r=1,\ldots,R$}
    \State Define the disjoint block
    $
    \mathcal I_r
    :=
    \{(r-1)b+1,\ldots,rb\}.
    $

    \State Run Algorithm~\ref{alg:oja-k-pca} on
    $\{\bm X_t:t\in\mathcal I_r\}$ with 
    constant learning rate $\eta_{\rm w}$, and a uniformly distributed 
    initialization. Denote its output by $\bm W^{(r)}$.
\EndFor

\State Apply Algorithm~\ref{alg:grassmann-median} to
$\bm W^{(1)},\ldots,\bm W^{(R)}$ and denote its output by
$\widehat{\bm U}_{\rm warm}$.

\State \Return $\widehat{\bm U}_{\rm warm}$.
\end{algorithmic}
\end{algorithm}

Algorithm \ref{alg:kpca-warm-start} above looks for a warm start from a random initialization. Note that we have used the constant learning rate $\eta_w$ with a sub-optimal exponent $\alpha \in (1/2,1)$.  This is fine for our purpose, which is to do averaging later on. The constant $c_\alpha$ is flexible, and for theoretical analysis, we will fix it to be $c_\alpha=100(1+\alpha)$. We note that this constant is only chosen for theoretical analysis, and in practice, one might want to go for a lower constant to achieve better finite sample results. 
The distance $d$ in Step 6 of Algorithm \ref{alg:kpca-warm-start} is chosen to be the metric 
\[
d(\bm W,\bm U)
:=
\|\sin\Theta(\bm W,\bm U)\|_{\rm F}.
\]

\begin{thm}\label{thm:kpca-warm-start}
Let $\widehat{\bm U}_{\rm warm}$ denote the output of
Algorithm~\ref{alg:kpca-warm-start}, and suppose
\eqref{eq:kpca-warm-basic-conditions} holds. Then, with probability at
least $1-\delta$,
\begin{align}
\left\|
\sin\Theta
\left(
\widehat{\bm U}_{\rm warm},
\bm U_k^\star
\right)
\right\|_{\rm F}^2
\le
C_\alpha \cdot  r_k(1/4) \cdot 
\left[
\binom pk \cdot 
\exp\left\{
-\frac{
\widetilde{c}_\alpha \cdot \Delta_k \, n^{1-\alpha}
}{
\log(1/\delta)
}
\right\}
+
\frac{\sigma_k^2}{\Delta_k \,  n^\alpha}
\right],
\label{eq:kpca-warm-exact}
\end{align}
where $\widetilde{c}_\alpha,C_\alpha>0$ depend only on $\alpha$.
\end{thm}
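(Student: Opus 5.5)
The plan is to apply Theorem~\ref{k-PCA convergence} separately to each of the $R$ independent runs in Algorithm~\ref{alg:kpca-warm-start}, taking the confidence level $1/4$ there. Proposition~\ref{prop:boosting} will then lift the resulting probability $3/4$ to $1-\delta$, at the cost of the factor $9$. The $R$ blocks $\mathcal I_r$ are disjoint and each run uses a fresh uniform initialization, so $\bm W^{(1)},\dots,\bm W^{(R)}$ are independent. This is exactly the hypothesis of Proposition~\ref{prop:boosting}, and $R=\lceil 24\log(1/\delta)\rceil$ matches its requirement.

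\emph{Step 1: checking the step-size conditions.} Each run uses the constant rate $\eta_t\equiv\eta_{\rm w}=c_\alpha n^{-\alpha}$ for $t\in\mathcal I_r$, with block length $m_0\asymp n/R$. Condition \eqref{step size cond} with $\delta=1/4$ requires $\eta_{\rm w}\le 1/(4L)$ and $L\mu_1 m_0\eta_{\rm w}^2\le c/4$. Both follow from \eqref{eq:kpca-warm-basic-conditions} after choosing $c_0\le c/4$. So Theorem~\ref{k-PCA convergence} gives, with probability at least $3/4$,
\[
d^2(\bm W^{(r)},\bm U_k^\star)\lesssim 4\,r_k(1/4)\Big[\tbinom pk e^{-2\Delta_k m_0\eta_{\rm w}}+\sigma_k^2\eta_{\rm w}^2\sum_{i=1}^{m_0}e^{-2\Delta_k\eta_{\rm w}(m_0-i)}\Big].
\]

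\emph{Step 2: simplifying the two terms.} For the bias term, $m_0\eta_{\rm w}\ge c_\alpha n^{1-\alpha}/(4R)$ since $n\ge 4R$, and $R\le 25\log(1/\delta)$ because $\delta<1/2$. Hence $e^{-2\Delta_k m_0\eta_{\rm w}}\le \exp\{-\widetilde c_\alpha\Delta_k n^{1-\alpha}/\log(1/\delta)\}$. For the variance term, the geometric sum is at most $(1-e^{-2\Delta_k\eta_{\rm w}})^{-1}$. Since $\Delta_k\eta_{\rm w}\le L\eta_{\rm w}\le 1/4$, we have $1-e^{-x}\ge x/2$, so the sum is at most $1/(\Delta_k\eta_{\rm w})$. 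The variance term is therefore at most $\sigma_k^2\eta_{\rm w}/\Delta_k=c_\alpha\sigma_k^2/(\Delta_k n^\alpha)$. Combining the two gives the bracket in \eqref{eq:kpca-warm-exact} as the threshold $a$, with constants depending only on $\alpha$ through $c_\alpha=100(1+\alpha)$.

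\emph{Step 3: boosting.} Apply Proposition~\ref{prop:boosting} with $d=\|\sin\bm\Theta\|_{\rm F}$, which is a genuine metric on the Grassmannian (the chordal distance up to a factor $\sqrt2$). This yields $d^2(\widehat{\bm U}_{\rm warm},\bm U_k^\star)\le 9a$ with probability at least $1-\delta$. The factor $9$ is absorbed into $C_\alpha$.

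The main point to watch is the bookkeeping between the parameterizations. Algorithm~\ref{alg:kpca-warm-start} states $b=\lfloor m/R\rfloor$, while the surrounding text uses $m_0=\lfloor n/(2R)\rfloor$. I will take $m=\lfloor n/2\rfloor$, so that $m_0\ge n/(4R)$. I also need the universal constants in Theorem~\ref{k-PCA convergence} to be compatible with the small $c_0$, and $\delta=1/4$ to lie in the theorem's range. Otherwise the argument is a direct composition of the two earlier results.
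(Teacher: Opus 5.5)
Your proposal is correct and follows the paper's proof essentially step for step. The ingredients are the same: Theorem~\ref{k-PCA convergence} applied to each independent block at confidence level $1/4$; the geometric-sum bound $\eta_{\rm w}^2\sum_{\ell\ge0}e^{-2\Delta_k\eta_{\rm w}\ell}\le\eta_{\rm w}/\Delta_k$, obtained from $\Delta_k\eta_{\rm w}\le L\eta_{\rm w}\le 1/4$; the estimate $m_0\ge n/(4R)$ with $R\lesssim\log(1/\delta)$; and finally Proposition~\ref{prop:boosting} with $d=\|\sin\bm\Theta\|_{\rm F}$.

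One trivial slip: $R\le 25\log(1/\delta)$ can fail for $\delta$ just below $1/2$ (for example, $\delta=0.49$ gives $R=18>25\log(1/0.49)\approx 17.8$). However, $R\le 24\log(1/\delta)+1\le 26\log(1/\delta)$ always holds for $\delta<1/2$, and the constant is absorbed into $\widetilde c_\alpha$.
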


As a direct consequence of Theorem \ref{thm:kpca-warm-start}, we can see that there exists a constant $C_\alpha>0$, depending only
on $\alpha$, such that if
\begin{align}
n^\alpha
\ge C_\alpha L,
\qquad 
n^{2\alpha-1}
\ge C_\alpha L\mu_1,
\qquad 
\frac{\Delta_k \, n^{1-\alpha}}{\log \lb 1/\delta \rb}
\ge
C_\alpha
\left\{
k\log \lb \frac{ep}{k} \rb+\log(en)
\right\},
\label{eq:kpca-warm-contraction-condition}
\end{align}
then
\begin{equation}
\left\|
\sin\Theta(
\widehat{\bm U}_{\rm warm},
\bm U_k^\star
)
\right\|_{\rm F}^2
\le
C_\alpha \cdot  r_k(1/4) \cdot 
\left(
1+\frac{\sigma_k^2}{\Delta_k}
\right) \cdot n^{-\alpha}
\end{equation}
with probability at least $1-\delta$.  

The first two conditions in \eqref{eq:kpca-warm-contraction-condition} guarantee that \eqref{step size cond} holds, while the third condition bounds the initialization term in \eqref{eq:kpca-warm-exact}. Additionally, if 
\begin{equation}
\label{eq:kpca-warm-neighborhood-condition}
n^\alpha
\ge
2C_\alpha r_k(1/4)
\left(
1+\frac{\sigma_k^2}{\Delta_k}
\right),
\end{equation}
then with probability at least $1-\delta$,
\[
\left\|
\sin\Theta(
\widehat{\bm U}_{\rm warm},
\bm U_k^\star
)
\right\|_{\rm F}^2
\le\frac12.
\]

\subsection{Phase II: averaging}

Let
\[
m:=\left\lfloor\frac n2\right\rfloor,
\qquad
N:=n-m.
\]
We initialize the second phase at $\bm Q_m:=\widehat{\bm U}_{\rm warm}$, where $\widehat{\bm U}_{\rm warm}$ is the output of
Algorithm~\ref{alg:kpca-warm-start}. For $t=m+1,\ldots,n$, choose the parameter-free learning rates
\[
\eta_t:=t^{-\alpha},
\qquad
\alpha\in  \lb \frac 12, 1  \rb,
\]
and generate the Oja iterates
\begin{equation}
\label{eq:phase-two-oja-update}
\bm Q_t
=
\operatorname{QR}\left[
\left(
\bm I_p+\eta_t\bm A_t
\right)\bm Q_{t-1}
\right],
\qquad
\bm A_t=\bm X_t\bm X_t^\top.
\end{equation}
Concurrently, we apply Algorithm~\ref{alg:kpca-averaging} to the stream $\bm Q_m,\bm Q_{m+1},\ldots,\bm Q_{n-1}$.
The output of the averaging algorithm is denoted by
$\widehat{\bm U}_{\rm avg}$.

\begin{algorithm}[H]
\caption{Phase II: averaging from a warm-start}
\label{alg:kpca-phase-two}
\begin{algorithmic}[1]
\Require Data $\bm X_{m+1},\ldots,\bm X_n$;
warm start $\widehat{\bm U}_{\rm warm}$;
exponent $\alpha\in(1/2,1)$.

\State Set
\[
\bm Q_m\gets\widehat{\bm U}_{\rm warm}.
\]

\For{$t=m+1,\ldots,n$}
    \State Set $\eta_t\gets t^{-\alpha}$ and update
    \[
    \bm Q_t
    \gets
    \operatorname{QR}\left[
    \left(
    \bm I_p+\eta_t\bm X_t\bm X_t^\top
    \right)\bm Q_{t-1}
    \right].
    \]
\EndFor

\State Apply Algorithm~\ref{alg:kpca-averaging} online to
\[
\bm Q_m,\ldots,\bm Q_{n-1}.
\]

\State \Return its output $\widehat{\bm U}_{\rm avg}$.
\end{algorithmic}
\end{algorithm}
Fix $\alpha\in(1/2,1)$ and $\delta\in(0,1/8)$. To state the finite-sample guarantee for the final algorithm, define
\begin{align} \label{ell}
\ell_\delta := 1+\log\frac{8}{\delta},
\qquad
\ell_{n,\delta} := 1+\log\frac{8n}{\delta}.
\end{align}
Put
\begin{equation}
\label{eq:phase-two-Mn}
M_{n,\delta}
:=
\max\left\{
1,\,
L,\,
r_k(1/4)
\left(
1+\frac{\sigma_k^2}{\Delta_k}
\right),\,
\frac{L^2 \cdot \ell_{n,\delta}}{\Delta_k}
\right\},
\end{equation}
and define
\begin{align}
E_1
&:=
\frac{
L^2(1+M_{n,\delta}) \cdot \ell_\delta^2
}{
\Delta_k^2
},
\qquad 
E_2:=
\frac{
L\lambda_1M_{n,\delta} \cdot \ell_\delta
}{
\Delta_k^2
},
\qquad 
E_3:=
\frac{
L^2M_{n,\delta}^2+L^4
}{
\Delta_k^2
},
\nonumber\\
E_4
&:=
\frac{M_{n,\delta}}{\Delta_k^2},
\qquad
E_5
:=
M_{n,\delta}^3,
\label{eq:kpca-phase-two-error-terms}
\end{align}
as well as
\begin{equation}
\label{eq:kpca-phase-two-remainder}
\mc E_n
:=
\frac{E_1}{n^2}
+
\frac{E_2}{n^{1+\alpha}}
+
\frac{E_3}{n^{2\alpha}}
+
\frac{E_4}{n^{2-\alpha}}
+
\frac{E_5}{n^{3\alpha}}.
\end{equation}

\begin{thm}\label{thm:kpca-phase-two}
There exists a constant $C_\alpha>0$, depending only on $\alpha$,
such that, if
\begin{align}
n \ge C_\alpha \cdot  \max\Bigg\{ \ell_\delta,\left[ \frac{\ell_\delta}{\Delta_k}  
\left\{ k\log\left(\frac{ep}{k}\right) +\log(en) \right\} \right]^{\frac{1}{1-\alpha}},
\left( L\mu_1 \right)^{\frac{1}{2\alpha-1}},
\,
M_{n,\delta}^{1/\alpha} \Bigg\},
\label{eq:kpca-phase-two-sample-size}
\end{align}
then the final estimator $\widehat{\bm U}_{\rm avg}$ from Algorithm \ref{alg:adaptive-pr-kpca} satisfies, with
probability at least $1-\delta$,
\begin{align}
\left\|  \sin\bm\Theta
\lb  \widehat{\bm U}_{\rm avg}, \bm U_k^\star \rb \right\|_{\rm F}^2
\le
C_\alpha \cdot  \left[  \frac{\ell_\delta}{n} \cdot   \sum_{i=1}^k\sum_{j=k+1}^p
\frac{ \mb E \left[  \bm u_j^\top  (\bm A_1-\bm\Sigma) \bm u_i\right]^2
}{(\lambda_i-\lambda_j)^2}  + \mc E_n \right].
\label{eq:kpca-phase-two-final}
\end{align}
\end{thm}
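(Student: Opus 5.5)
We plan to condition on the Phase I event and then carry out a Polyak--Ruppert type linearization near $\bm U_k^\star$.

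\textbf{Step 1: warm start.} The sample-size condition \eqref{eq:kpca-phase-two-sample-size} implies \eqref{eq:kpca-warm-contraction-condition} and \eqref{eq:kpca-warm-neighborhood-condition}, with $n$ replaced by $m=\lfloor n/2\rfloor$ and $\delta$ replaced by $\delta/4$. Theorem~\ref{thm:kpca-warm-start} then gives, with probability at least $1-\delta/4$,
\[
\|\sin\bm\Theta(\bm Q_m,\bm U_k^\star)\|_{\rm F}^2\le C_\alpha M_{n,\delta}\, n^{-\alpha}\le \tfrac12 .
\]

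\textbf{Step 2: time-uniform control of Phase II.} We work in the local chart around $\bm U_k^\star$. Write $\bm U_\perp$ for the orthogonal complement of $\bm U_k^\star$, and set
\[
\bm Z_t:=\bm U_\perp^\top\bm Q_t(\bm U_k^{\star\top}\bm Q_t)^{-1}\in\mathbb R^{(p-k)\times k}.
\]
Then $\|\sin\bm\Theta(\bm Q_t,\bm U_k^\star)\|_{\rm F}\asymp\|\bm Z_t\|_{\rm F}$ whenever the angle is bounded away from $\pi/2$. The Oja update becomes the exact Riccati-type recursion
\[
\bm Z_t=\bigl(\bm Z_{t-1}+\eta_t(\bm\Lambda_\perp\bm Z_{t-1}-\bm Z_{t-1}\bm\Lambda_k)+\eta_t\bm W_t+\eta_t\bm\rho_t\bigr),
\qquad
\bm W_t:=\bm U_\perp^\top(\bm A_t-\bm\Sigma)\bm U_k^\star .
\]
Here $\bm\rho_t$ collects the terms linear in $(\bm A_t-\bm\Sigma)\bm Z_{t-1}$, which have martingale-difference form, and the quadratic terms of size $O(L\|\bm Z_{t-1}\|^2)$, together with $O(\eta_t^2L^2)$ corrections.

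Along the Phase II trajectory we apply Theorem~\ref{k-PCA convergence} through the Pl\"ucker/exterior-power correspondence, combined with a stopping-time (Freedman-type) argument and a union bound over $t\le n$. This is the source of $\ell_{n,\delta}$. The goal is to show that, with probability at least $1-\delta/4$, simultaneously for all $m\le t\le n$,
\[
\|\bm Z_t\|_{\rm F}^2\lesssim M_{n,\delta}\,t^{-\alpha}.
\]
The last term in the definition of $M_{n,\delta}$, namely $L^2\ell_{n,\delta}/\Delta_k$, enters here.

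\textbf{Step 3: averaging identity.} Near $\bm U_k^\star$, the alignment rotations $\bm R_t$ of Algorithm~\ref{alg:kpca-averaging} make the average $\overline{\bm Q}_n$ equal to $\bm U_k^\star\bm O+\bm U_\perp\overline{\bm Z}\bm O+$(second-order terms), for a common orthogonal $\bm O$. Consequently,
\[
\|\sin\bm\Theta(\widehat{\bm U}_{\rm avg},\bm U_k^\star)\|_{\rm F}^2\lesssim\|\overline{\bm Z}_N\|_{\rm F}^2+\frac1N\sum_t\|\bm Z_t\|^4 .
\]
The averaged fourth-moment term produces $E_5/n^{3\alpha}$, using $M^2t^{-2\alpha}$ squared and combined with the error of the polar map.

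The linear part is a Sylvester operator $\mathcal L(\bm Z)=\bm Z\bm\Lambda_k-\bm\Lambda_\perp\bm Z$, which is invertible with $\|\mathcal L^{-1}\|\le\Delta_k^{-1}$ entrywise, acting as $(\lambda_i-\lambda_j)^{-1}$. We telescope the recursion in the standard Polyak--Juditsky way:
\[
\frac1N\sum_t\bm Z_{t-1}=\mathcal L^{-1}\!\left[\frac1N\sum_t\bm W_t\right]+\frac1N\sum_t\mathcal L^{-1}\!\left[\frac{\bm Z_{t-1}-\bm Z_t}{\eta_t}+\bm\rho_t\right].
\]
The leading term is a sum of i.i.d.\ bounded matrices. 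A vector Bernstein/Pinelis inequality gives, with probability at least $1-\delta/4$, the bound $\ell_\delta N^{-1}\sum_{i,j}\mathbb E[\bm u_j^\top(\bm A_1-\bm\Sigma)\bm u_i]^2/(\lambda_i-\lambda_j)^2$ plus a lower-order $L^2\ell_\delta^2/(\Delta_k^2n^2)$ term, which is part of $E_1$. The remaining terms are handled as follows.

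The telescoping term is summed by parts to $\eta_n^{-1}\bm Z_n-\eta_m^{-1}\bm Z_m+\sum_t\bm Z_t(\eta_{t+1}^{-1}-\eta_t^{-1})$. Using Step 2, this is $O(\sqrt{M}n^{\alpha-\alpha/2})/N$, whose square is $M/(\Delta_k^2n^{2-\alpha})$, giving $E_4$.

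The martingale part of $\bm\rho_t$, namely $(\bm A_t-\bm\Sigma)$ acting on $\bm Z_{t-1}$, is controlled by Freedman's inequality with the conditional variance bounded by $L\lambda_1\|\bm Z_{t-1}\|^2$. This gives $L\lambda_1M\ell_\delta/(\Delta_k^2n^{1+\alpha})$, i.e.\ $E_2$.

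The quadratic and $\eta_t^2$ drift terms contribute $(LMt^{-\alpha}+L^2t^{-\alpha})^2/\Delta_k^2$, which is $E_3$.

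A final union bound over the four events gives probability at least $1-\delta$.

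\textbf{Main obstacle.} We expect Step 2 to be the hardest part: obtaining a time-uniform bound $\|\bm Z_t\|^2\lesssim Mt^{-\alpha}$ starting only from $\sin^2\le 1/2$, while ensuring the iterates never leave the chart. Our first attempt would be to run the exterior-power (1-PCA) supermartingale argument from the proof of Theorem~\ref{k-PCA convergence}, localized by a stopping time at the first exit from $\{\|\bm Z_t\|\le 1\}$, and to use the deterministic contraction $1-\eta_t\Delta_k$, which is valid since $n^\alpha\gtrsim L$.
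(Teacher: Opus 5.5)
The overall plan is the paper's. Your Steps 1 and 3 follow it closely: the same tangent chart (your $\bm Z_t$ is the paper's $\bm T_t$), the same linearization with the Sylvester operator (the paper's $\mathcal G$, with $\|\mathcal G^{-1}\|_{{\rm F}\to{\rm F}}\le\Delta_k^{-1}$), and the same four-term Polyak--Juditsky decomposition matched to $E_1,\dots,E_4$.

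The genuine gap is Step 2. This is the core of the proof, and the mechanism you sketch for it would not work.

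\emph{Theorem~\ref{k-PCA convergence} cannot give a time-uniform bound.} Its proof is a second-moment computation (a recursion for $\operatorname{tr}(\mathsf{\bm P}_\star^\perp\mathsf{\bm G}_t)$ and a bound on $\mathbb E Y_n^2$) followed by Markov and Chebyshev. So the confidence level enters as the factor $r_k(\delta)/\delta$, and a union bound over $n$ times would cost a power of $n/\delta$, not $\ell_{n,\delta}$. Its initialization step also relies on Haar anti-concentration, which the data-dependent $\bm Q_m$ lacks. There is no supermartingale in that proof to localize. What you need instead is a pathwise one-step inequality for $S_t^\wedge:=1-\langle\bm\omega_t,\bm\omega_\star\rangle_\wedge^2$, which the paper derives from the exact rank-one exterior update (Lemma~\ref{lem:one-step-exterior}):
\[
S_t^\wedge\le(1-2\Delta_k\eta_t)S_{t-1}^\wedge+2\Delta_k\eta_t\bigl(S_{t-1}^\wedge\bigr)^2+Q_t+CL^2\eta_t^2,
\qquad
\mathbb E[Q_t\mid\mathcal F_{t-1}]=0,\quad |Q_t|\le4L\eta_t\sqrt{S_{t-1}^\wedge}.
\]

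\emph{Localizing at a fixed set such as $\{\|\bm Z_t\|\le1\}$ loses a square root.} On that set the increments are only $O(L\eta_t)$ and the weighted quadratic variation is $\lesssim L^2\eta_t/\Delta_k$. Azuma, or Freedman with that a priori variance bound, then gives only $S_t^\wedge\lesssim L\sqrt{\eta_t\ell_{n,\delta}/\Delta_k}\asymp t^{-\alpha/2}$. With $r_n\asymp n^{-\alpha/2}$, your quadratic remainder $L^2r_n^2/\Delta_k^2$ is of order $n^{-\alpha}\gg n^{-1}$, and the theorem fails.

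The missing idea is to stop at the first exit from the \emph{shrinking} tube $\{S_t^\wedge\le b_t\}$, with $b_t=H_nt^{-\alpha}$ and $H_n=C_\alpha M_{n,\delta}$. Before exit, $|Q_i|\le4L\eta_i\sqrt{H_n\eta_{i-1}}$. Write $G_{i,t}=\prod_{j=i+1}^t(1-\Delta_k\eta_j)$. The kernel bounds $\sum_iG_{i,t}\eta_i^2\le2\eta_t/\Delta_k$ and $\sum_iG_{i,t}^2\eta_i^2\eta_{i-1}\le4\eta_t^2/\Delta_k$ then make the initialization, drift and martingale contributions each at most $b_t/4$. The martingale bound uses Azuma and a union bound over $t$ with $x=\log(8n/\delta)$, and this is exactly where $H_n\gtrsim L^2\ell_{n,\delta}/\Delta_k$ is needed. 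Hence the stopped process never reaches $b_t$, and the tube is never exited.

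This argument also requires feeding in the sharp warm start $S_m^\wedge\le b_m/4$ from your Step 1, not merely $\sin^2\le1/2$ as your obstacle paragraph suggests. Since $m\asymp n$, a constant-size error near $t=m$ is never forgotten, and term IV contains $(m+1)^\alpha\bm Z_m/N$.

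\emph{A smaller point on Step 3.} The ``common orthogonal $\bm O$'' claim needs proof. The paper gets it by an induction: the aligned product $\bm R_t^\top\bm Q_t^\top\widetilde{\bm Q}_{t-1}$ is PSD by construction, and Lemmas~\ref{lem:psd-sqrt} and~\ref{lem:orth-psd} then keep the $\bm U_k^\star$-blocks of the aligned frames within $O(\rho^2)$ of the running average, where $\rho^2=2r_n$. This yields an $O(\rho^3)$ error in the tangent of $\widehat{\bm U}_{\rm avg}$, and that is the true source of $E_5/n^{3\alpha}$. Your $\frac1N\sum_t\|\bm Z_t\|^4\asymp r_n^2$ bookkeeping is therefore off, but harmless: it is absorbed into $E_3/n^{2\alpha}$ because $L\ge\Delta_k$.
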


In particular, we have 
\begin{align}
\left\|  \sin\bm\Theta \left( \widehat{\bm U}_{\rm avg}, \bm U_k^\star\right) \right\|_{\rm F}^2
\le C_\alpha \cdot  \left[ \frac{ \ell_\delta \cdot \sigma_k^2}{n \, \Delta_k^2}
+ \mc E_n \right]
\end{align}
with probability at least $1-\delta$.

\section{Application: Elliptical component analysis}

Consider the \emph{elliptical} model:
\begin{align} \label{elliptical model}
    \bm{X} \stackrel{d}= \bm{\mu} + R \cdot \bm{\Sigma}^{1/2} \cdot \bm{U}.
\end{align}
This model was introduced in \cite{han2018eca} as a generalization of classical elliptical distributions that accommodates heavy-tailed data. 
In \eqref{elliptical model}, $\bm{\mu} \in \mb{R}^{p}$ is the unknown mean vector, $R$ is a positive random variable that may be heavy-tailed, and $\bm{U} \sim \mathrm{Unif}(\mathbb{S}^{p-1})$ is independent of $R$. 
We are interested in estimating the principal eigenvector of $\bm{\Sigma}$ from a sample of size $n$ drawn from \eqref{elliptical model}. 
Note that we do not impose any moment conditions on $R$, and the methods we develop work without any assumptions on $R$ beyond having a continuous distribution. 
In particular, it may happen that $\mathbb{E}\big[ |R|^{\alpha} \big] = \infty$ for all $\alpha > 0$.

Most existing approaches to PCA under the model \eqref{elliptical model} are based on spatial signs; see, for example, \cite{taskinen2012robustifying,durre2017spatial,han2014scale,han2018eca,zhao2024spatial} and the references therein. 
High-dimensional results have been obtained in \cite{han2018eca} and \cite{zhao2024spatial}. 
All of these results are based on constructing robust estimators of the covariance matrix. 
However, storing the full sample covariance matrix is memory intensive, especially for high-dimensional data. 
In particular, the procedure in \cite{han2018eca} is based on a U-statistic estimator and requires $O(n^2p^2)$ operations, which was later improved to $O(np^2)$ in \cite{zhao2024spatial}. 
More precisely, the estimators in \cite{han2018eca,zhao2024spatial} take the forms
\begin{align} \label{Han-Liu pca}
\hat{\bm{v}}_{\rm HL} 
:= \text{leading eigenvector of} \left[ \frac{2}{n(n-1)} \sum_{1\leq i<j \leq n} 
\frac{\big( \bm{X}_i - \bm{X}_j \big)\big( \bm{X}_i - \bm{X}_j \big)^\top}{\| \bm{X}_i - \bm{X}_j \|^2}  \right],
\end{align}
and 
\begin{align} \label{zwf pca}
\hat{\bm{v}}_{\rm ZWF}  
:= \text{leading eigenvector of} \left[  \sum_{i=1}^n  
\frac{\big( \bm{X}_i - \hat{\bm{\mu}}_i \big)\big( \bm{X}_i - \hat{\bm{\mu}}_i \big)^\top}
{ \| \bm{X}_i - \hat{\bm{\mu}}_i \|^2 } \right],
\end{align}
where $\hat{\bm{\mu}}_i$ is a robust mean estimator.

It is therefore desirable to have memory-efficient algorithms for performing PCA under the model \eqref{elliptical model}. 
To the best of our knowledge, this question has not been addressed before. 
The nonsparse estimators in \eqref{Han-Liu pca} and \eqref{zwf pca} require $O(n^2p^2)$ and $O(np^2)$ operations, respectively, and both rely on storing the full sample covariance matrix, which already costs $O(p^2)$ memory.

Let us describe how averaging theory applies in this example. We first apply a transformation that removes both the unknown mean $\bm\mu$ and the heavy-tailed radial component $R$, and then run Algorithm~\ref{alg:adaptive-pr-kpca} on the transformed stream. Note that the transformation can be computed using two data points at a time, so no sample covariance matrix is ever formed. Although the transformation changes the spectrum of the target and the eigengap, the orthonormal basis remains unchanged. 

To be more specific, suppose we observe $\bm X_1,\ldots,\bm X_{2n}$ i.i.d.\ from
\eqref{elliptical model} and define the \emph{spatial signs}
\begin{align} \label{eq:eca-preprocessing}
\bm{\mc X}_t
:=
\frac{\bm X_{2t}-\bm X_{2t-1}}{\left\| \bm X_{2t}-\bm X_{2t-1} \right\|},
\qquad t=1,\ldots,n .
\end{align}

\renewcommand{\thealgorithm}{A}
\begin{algorithm}[H]
\caption{Memory-efficient ECA}
\label{alg:eca}
\begin{algorithmic}[1]
\Require Data $\bm X_1,\ldots,\bm X_{2n}\in\mb R^p$ from
\eqref{elliptical model}; target rank $k$; exponent $\alpha\in(1/2,1)$;
confidence level $\delta\in(0,1/8)$.

\For{$t=1,\ldots,n$}
    \State Receive $\bm X_{2t-1},\bm X_{2t}$ and form $\bm{\mc X}_t$ as in
    \eqref{eq:eca-preprocessing}; discard $\bm X_{2t-1},\bm X_{2t}$.
\EndFor

\State Run Algorithm~\ref{alg:adaptive-pr-kpca} on the stream
$\bm{\mc X}_1,\ldots,\bm{\mc X}_n$ with parameters $k,\alpha,\delta$.

\State \Return its output $\widehat{\bm U}_{\rm avg}$.
\end{algorithmic}
\end{algorithm}
\addtocounter{algorithm}{-1}
\renewcommand{\thealgorithm}{\arabic{algorithm}}

 We label the algorithm above by a letter to emphasise that it is not a new algorithm, but rather
Algorithm~\ref{alg:adaptive-pr-kpca} applied to a preprocessed stream.
To state the guarantee, let 
$$\bm\Sigma = \sum_{i=1}^p\lambda_i\bm u_i\bm u_i^\top$$
be the spectral decomposition of the scatter matrix in \eqref{elliptical model},
with $\lambda_1\ge\cdots\ge\lambda_p\ge0$, and let
$\bm U_k^\star=[\bm u_1,\ldots,\bm u_k]$. 

Write $\bm U\sim\mathrm{Unif}(\mb
S^{p-1})$, $U_i:=\bm u_i^\top\bm U$,
and define
\begin{align} \label{eq:eca-transformed-spectrum}
D:=\sum_{\ell=1}^p\lambda_\ell U_\ell^2, \qquad 
\tilde\lambda_i
:=
\mb E\left[ \frac{\lambda_i U_i^2}{D} \right],
\qquad
\widetilde\Delta_k := \tilde\lambda_k-\tilde\lambda_{k+1},
\qquad
\tilde\sigma_{ij}^2
:=
\mb E\left[ \frac{\lambda_i\lambda_j U_i^2U_j^2}{D^2} \right] .
\end{align}
As we show in Section~\ref{sec:eca-proof}, $\widetilde\Delta_k>0$ whenever
$\Delta_k:=\lambda_k-\lambda_{k+1}>0$. Finally, put
\begin{align} \label{eq:eca-Mn}
\widetilde M_{n,\delta}
:=
r_k(1/4)\lb 1+\widetilde\Delta_k^{-1} \rb \ell_{n,\delta},
\qquad
\widetilde{\mc E}_n
:=
\widetilde M_{n,\delta}^3
\left[
\frac{\ell_\delta^2}{\widetilde\Delta_k^2 \, n^{2\alpha}}
+
\frac{1}{n^{2-\alpha}}
+
\frac{1}{n^{3\alpha}}
\right] .
\end{align}

\begin{thm} \label{thm:eca}
Suppose \eqref{elliptical model} holds with $R$ continuous, $\bm\Sigma\ne\bm0$,
and $\Delta_k>0$. Fix $\alpha\in(1/2,1)$ and $\delta\in(0,1/8)$. There is a
constant $C_\alpha>0$, depending only on $\alpha$, such that if
\begin{align} \label{eq:eca-sample-size}
n
\ge
C_\alpha \cdot \max\left\{
\left[
\frac{\ell_\delta}{\widetilde\Delta_k}
\left\{ k\log\lb \frac{ep}{k} \rb + \log(en) \right\}
\right]^{\frac1{1-\alpha}},
\;
\widetilde M_{n,\delta}^{1/\alpha}
\right\},
\end{align}
then the output $\widehat{\bm U}_{\rm avg}$ of Algorithm~\ref{alg:eca}
satisfies, with probability at least $1-\delta$,
\begin{align} \label{eq:eca-final}
\left\|
\sin\bm\Theta \lb \widehat{\bm U}_{\rm avg}, \bm U_k^\star \rb
\right\|_{\rm F}^2
\le
C_\alpha
\left[
\frac{\ell_\delta}{n} \cdot 
\sum_{i=1}^k\sum_{j=k+1}^p
\frac{\tilde\sigma_{ij}^2}{\lb \tilde\lambda_i-\tilde\lambda_j \rb^2}
+
\widetilde{\mc E}_n
\right] .
\end{align}
\end{thm}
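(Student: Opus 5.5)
Theorem~\ref{thm:eca} will follow by applying Theorem~\ref{thm:kpca-phase-two} to the transformed stream $\bm{\mc X}_1,\ldots,\bm{\mc X}_n$. The work is in verifying that this stream satisfies the standing assumptions of Section~\ref{sec:k-PCA-convergence} and in computing the parameters $L,\mu_1,\Delta_k,\sigma_k^2,M_{n,\delta}$ for it.

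\textbf{Step 1: reduce the spatial signs to the normalized model.} Write $\bm X_{2t}-\bm X_{2t-1} = \bm\Sigma^{1/2}(R_{2t}\bm U_{2t}-R_{2t-1}\bm U_{2t-1})$. The vector $\bm Z := R_{2t}\bm U_{2t}-R_{2t-1}\bm U_{2t-1}$ is spherically symmetric, and it is nonzero almost surely because $R$ is continuous. Hence $\bm Z/\|\bm Z\| \sim \mathrm{Unif}(\mb S^{p-1})$, and
\[
\bm{\mc X}_t \stackrel{d}{=} \frac{\bm\Sigma^{1/2}\bm U}{\|\bm\Sigma^{1/2}\bm U\|}.
\]
This removes both $\bm\mu$ and $R$. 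The $\bm{\mc X}_t$ are i.i.d. and symmetric, so they are centered. Their second moment is
\[
\widetilde{\bm\Sigma}=\mb E\left[\frac{\bm\Sigma^{1/2}\bm U\bm U^\top\bm\Sigma^{1/2}}{D}\right].
\]
Expanding in the eigenbasis $\{\bm u_i\}$, sign symmetry of the coordinates $U_i$ kills the off-diagonal terms. So $\widetilde{\bm\Sigma}=\sum_i\tilde\lambda_i\bm u_i\bm u_i^\top$, with the same eigenvectors and the eigenvalues $\tilde\lambda_i$ of \eqref{eq:eca-transformed-spectrum}.

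\textbf{Step 2: order of the transformed eigenvalues and positive gap.} For $i<j$ with $\lambda_i>\lambda_j$, exchangeability of $(U_i,U_j)$ gives
\[
\tilde\lambda_i-\tilde\lambda_j=\tfrac12\,\mb E\left[\frac{(\lambda_i-\lambda_j)(U_i^2-U_j^2)^2 + (\lambda_i-\lambda_j)\cdot(\text{sym.})}{\cdot}\right],
\]
which I would make rigorous as follows. Swap $U_i\leftrightarrow U_j$, which changes $D$ to $D'$. Then
\[
\tilde\lambda_i-\tilde\lambda_j=\mb E\left[\frac{\lambda_iU_i^2}{D}-\frac{\lambda_jU_i^2}{D'}\right],
\]
and one symmetrizes over the swap. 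The result is an expectation of a nonnegative quantity, strictly positive on a set of positive probability. Hence $\widetilde\Delta_k>0$ and the top-$k$ eigenspace is still $\bm U_k^\star$. This monotonicity argument is the step I expect to be most delicate; I would first try to obtain it via the identity
\[
\frac{a}{a+b+c}-\frac{b}{a+b+c}\text{ vs. swapped},
\]
conditioning on the other coordinates.

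\textbf{Step 3: the constants.} Since $\|\bm{\mc X}_t\|=1$, we have $\|\bm{\mc X}\bm{\mc X}^\top-\widetilde{\bm\Sigma}\|\le 1$ almost surely, so $M\le1$. Also $\tilde\lambda_1\le1$ and $\sum_i\tilde\lambda_i=1$. Hence $L\le2$ and $\mu_1\le1$. For $i\le k<j$, the cross terms are
\[
\mb E\big[(\bm u_j^\top(\bm{\mc X}\bm{\mc X}^\top-\widetilde{\bm\Sigma})\bm u_i)^2\big]=\mb E\left[\frac{\lambda_i\lambda_jU_i^2U_j^2}{D^2}\right]=\tilde\sigma_{ij}^2,
\]
because $\widetilde{\bm\Sigma}$ is diagonal in this basis. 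Summing gives $\tilde\sigma_k^2\le\sum_{i,j}\tilde\lambda_i\tilde\lambda_j\le1$.

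\textbf{Step 4: plug into Theorem~\ref{thm:kpca-phase-two}.} With these bounds, $M_{n,\delta}\lesssim r_k(1/4)(1+\widetilde\Delta_k^{-1})\ell_{n,\delta}=\widetilde M_{n,\delta}$. The sample-size condition \eqref{eq:kpca-phase-two-sample-size} then reduces to \eqref{eq:eca-sample-size}: the $\ell_\delta$ and $(L\mu_1)^{1/(2\alpha-1)}$ terms are absorbed by $C_\alpha$ and the remaining two terms. Finally I bound each $E_i$ by $\widetilde M_{n,\delta}^3$ times the appropriate power of $\widetilde\Delta_k^{-1}$ and $\ell_\delta$:
\begin{itemize}
\item $E_1/n^2$ and $E_3/n^{2\alpha}$ are both at most $\widetilde M^3\ell_\delta^2/(\widetilde\Delta_k^2n^{2\alpha})$.
\item $E_2/n^{1+\alpha}$ is of the same order as the previous two terms.
\item $E_4/n^{2-\alpha}\le \widetilde M^3/n^{2-\alpha}$, since $\widetilde\Delta_k^{-2}\le \widetilde M^2$.
\item $E_5/n^{3\alpha}$ is kept as is.
\end{itemize}
This gives $\mc E_n\lesssim\widetilde{\mc E}_n$. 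The leading term of \eqref{eq:kpca-phase-two-final} becomes exactly the sum in \eqref{eq:eca-final}, which concludes the proof.
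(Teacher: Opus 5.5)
Your proposal is correct and follows the paper's route. You establish, as in the paper's Lemma~\ref{lem:eca-sign-data}, that the spatial-sign stream is i.i.d. and centered, satisfies the boundedness assumption with $M=1$, $L\le 2$, $\mu_1\le 1$, $\sigma_k^2\le 1$, and has covariance with the same eigenvectors as $\bm\Sigma$ and gap $\widetilde\Delta_k>0$; you then invoke Theorem~\ref{thm:kpca-phase-two} and absorb $M_{n,\delta}$, $\mc E_n$ and the extra sample-size terms into $\widetilde M_{n,\delta}$, $\widetilde{\mc E}_n$ and \eqref{eq:eca-sample-size}.

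The algebra you left open in Step 2 closes exactly as in the paper. Put $a=U_i^2$, $b=U_j^2$, $c=\sum_{\ell\ne i,j}\lambda_\ell U_\ell^2$ and $D'=c+\lambda_i b+\lambda_j a$; symmetrizing over the swap gives $2(\tilde\lambda_i-\tilde\lambda_j)=\mb E\bigl[(\lambda_i-\lambda_j)\{c(a+b)+2(\lambda_i+\lambda_j)ab\}/(DD')\bigr]$, which is nonnegative and strictly positive when $\lambda_i>\lambda_j$.
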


Theorem \ref{thm:eca} gives a finite sample bound \eqref{eq:eca-final} for the averaging estimator from Algorithm \ref{alg:eca}. Note that the term $\widetilde{\mc E}_n$ in \eqref{eq:eca-final} is of order $o \lb 1/n \rb$. A few remarks are in order. 

\begin{remark} 
The bound \eqref{eq:eca-final} is a finite-sample bound that holds with no assumption on $R$ beyond continuity; in particular
it applies when $\mb E[|R|^\gamma]=\infty$, for every $\gamma>0$. This property is inherited from the sign transformation, which makes the transformed data bounded, so that the boundedness assumption of Section~\ref{sec:k-PCA-convergence} holds with universal constants. This is also the same principle as in \cite{han2014scale,han2018eca}. 
\end{remark}

\begin{remark}
The eigengap in \eqref{eq:eca-transformed-spectrum} is approximately the same as the eigengap of $\bm\Sigma$ itself. Indeed, in high dimensions, we expect $D$ to  concentrate around
$\operatorname{tr}(\bm\Sigma)/p$, so that
\[
\tilde\lambda_i \approx \frac{\lambda_i}{\operatorname{tr}(\bm\Sigma)},
\qquad
\tilde\sigma_{ij}^2 \approx \frac{\lambda_i\lambda_j}{\operatorname{tr}(\bm\Sigma)^2},
\]
using $\mb E[U_i^2U_j^2]=\{p(p+2)\}^{-1}$ for $i\ne j$. The factors of
$\operatorname{tr}(\bm\Sigma)$ cancel, and the leading term becomes
\[
\frac{\ell_\delta}{n}
\sum_{i=1}^k\sum_{j=k+1}^p
\frac{\lambda_i\lambda_j}{\lb \lambda_i-\lambda_j \rb^2},
\]
which is exactly the asymptotic variance of the maximum likelihood estimator in the Gaussian model \cite{anderson1963asymptotic}. Thus nothing is lost by discarding the radial component, which is the same phenomenon observed in \cite{han2018eca}. 

\end{remark}

\begin{remark}
Let us discuss the total cost and memory storage of Algorithm \ref{alg:eca}. Below are the costs of the steps in the algorithm:
\begin{itemize}
    \item Each iteration forms one sign vector in $O(p)$ operations.
    \item Each rank-one update of a $p\times k$ frame followed by a QR factorization requires $O(pk^2)$ operations.
    \item Performing  one averaging step whose dominant cost is the SVD of a $k\times k$ matrix. This results in $O(k^3)$ operations.
\end{itemize}
Thus we use in total $O(npk^2+nk^3)$ operations and $O\lb pk\log(1/\delta) \rb$ memory units.  For $k=1$ this is $O(np)$ time and
$O\lb p\log(1/\delta) \rb$ memory, against $O(n^2p^2)$ time and $O(p^2)$ memory for \eqref{Han-Liu pca}, and $O(np^2)$ time and $O(p^2)$ memory for \eqref{zwf pca}. 
\end{remark}

\section{Simulation studies} \label{sec:simulation}

We compare Algorithm~\ref{alg:eca} with the two existing ECA procedures: the multivariate Kendall's tau estimator
\eqref{Han-Liu pca} of \cite{han2018eca} and the spatial-sign
estimator \eqref{zwf pca} of \cite{zhao2024spatial}.

\subsection{Design}

Data are generated from \eqref{elliptical model} with $\bm\mu = 5 \cdot \bm 1_p$ and $R \sim F(p,1)$, so that
$\mb E \lft |R|^\gamma \rht = \infty$ for every $\gamma \ge 1/2$. In particular,
$R$ does not have a finite mean. Here $F(p,1)$ is the standard $F$ distribution with $p$ numerator degrees of freedom and one denominator degree of freedom.

We use four spectra, chosen to span a range of transformed
eigengaps $\widetilde\Delta_k$ and to include one $k>1$ subspace problem; the
constants $\tilde\lambda_i$ and
$V_k := \sum_{i\le k}\sum_{j>k} \tilde\sigma_{ij}^2 /
(\tilde\lambda_i-\tilde\lambda_j)^2$ are computed from
\eqref{eq:eca-transformed-spectrum} by Monte Carlo.

\begin{center}
\begin{tabular}{llccc}
\hline
& $p$, $\lambda$ & $k$ & $\widetilde\Delta_k$ & $V_k$ \\
\hline
S1 & $p=30$, $\lambda=(40,10,4,1,\dots,1)$   & 1 & 0.221 & 1.741 \\
S2 & $p=30$, $\lambda=(100,5,1,\dots,1)$     & 1 & 0.462 & 0.601 \\
S3 & $p=20$, $\lambda=(20,4,1,\dots,1)$      & 1 & 0.236 & 1.763 \\
S4 & $p=30$, $\lambda=(40,35,30,1,\dots,1)$  & 3 & 0.201 & 3.447 \\
\hline
\end{tabular}
\end{center}


The sample sizes are set to be 
$n \in \la 800,1600,3200,6400 \ra$. We report the mean of
$\| \sin\bm\Theta ( \widehat{\bm U}, \bm U_k^\star ) \|_{\rm F}^2$ over $30$
replications ($20$ at $n=6400$). Algorithm~\ref{alg:eca} is run with
$\alpha=2/3$.
We will consider two versions of Algorithm \ref{alg:eca}: the analyzed version, which uses disjoint pairs
\eqref{eq:eca-preprocessing} and $m = \lfloor n/2 \rfloor$, and a version with overlapping pairs
$\bm{\mc X}_t = ( \bm X_{t+1}-\bm X_t )/\| \bm X_{t+1}-\bm X_t \|$. Note that the second variant is essentially a $1$-dependent variant of the algorithm, and it seems to work better empirically. 

\subsection{Performance}

\begin{figure}[t]
\centering
\includegraphics[width=\textwidth]{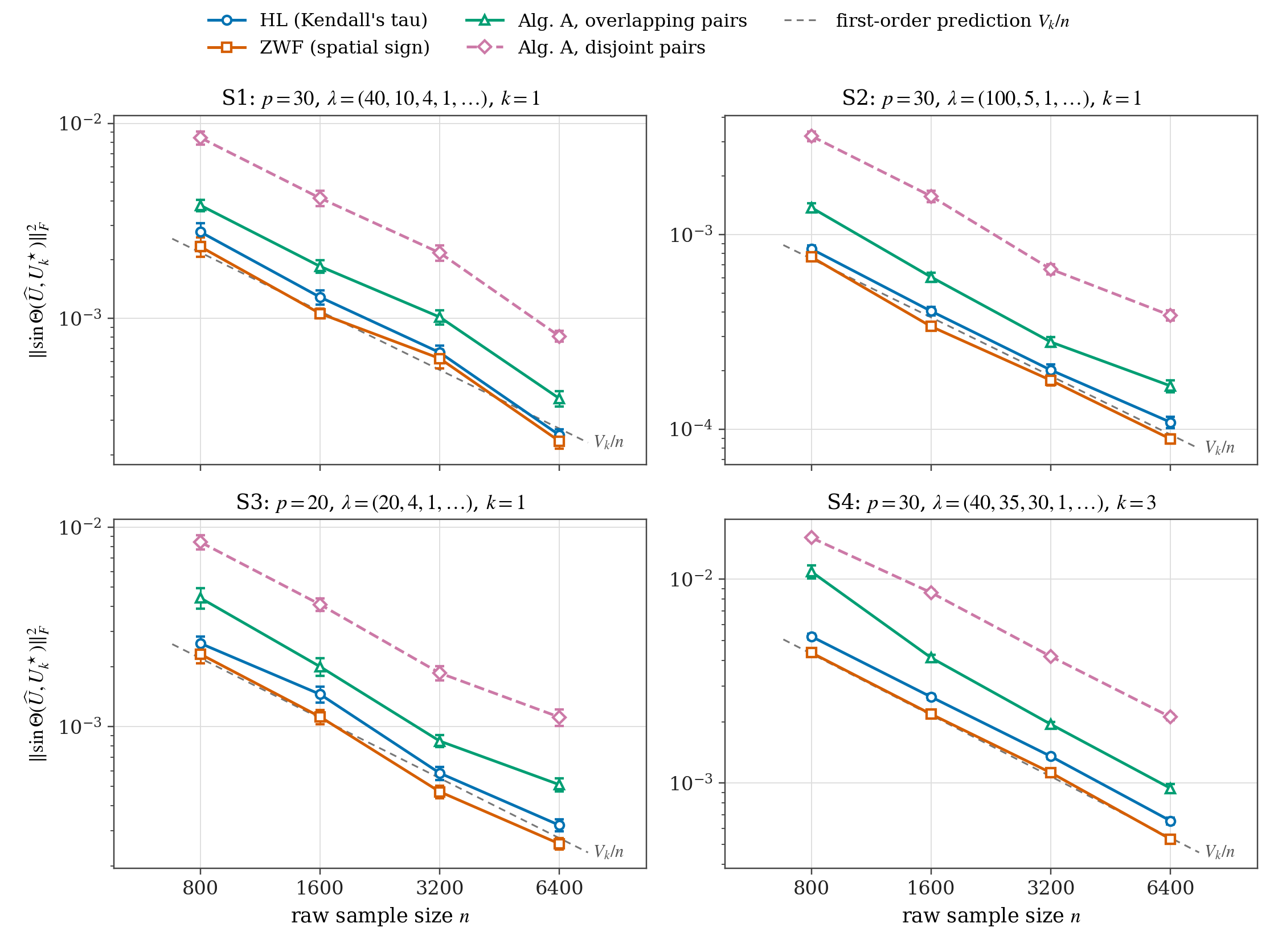}
\caption{Mean squared sine error against raw sample size, settings S1--S4.
Error bars are one standard error; the thin line is the first-order prediction
$V_k/n$ from \eqref{eq:eca-final}. Both axes logarithmic.}
\label{fig:eca-comparison}
\end{figure}

Figure~\ref{fig:eca-comparison} reports the mean sine-squared errors. For $n\ge1600$ every curve
is parallel to $V_k/n$: Algorithm~\ref{alg:eca} attains the $n^{-1}$ rate with
no knowledge of $\widetilde\Delta_k$, and the separation from HL and ZWF is within a universal constant.
The disjoint-pair version is consistently the weaker of the two. This is as expected and is because of the fact that using disjoint pairs discards a fraction of the data. Using overlapping pairs performs better, at the cost of dealing with $1$-dependent data.


\subsection{Higher dimensions}

\begin{figure}[t]
\centering
\includegraphics[width=\textwidth]{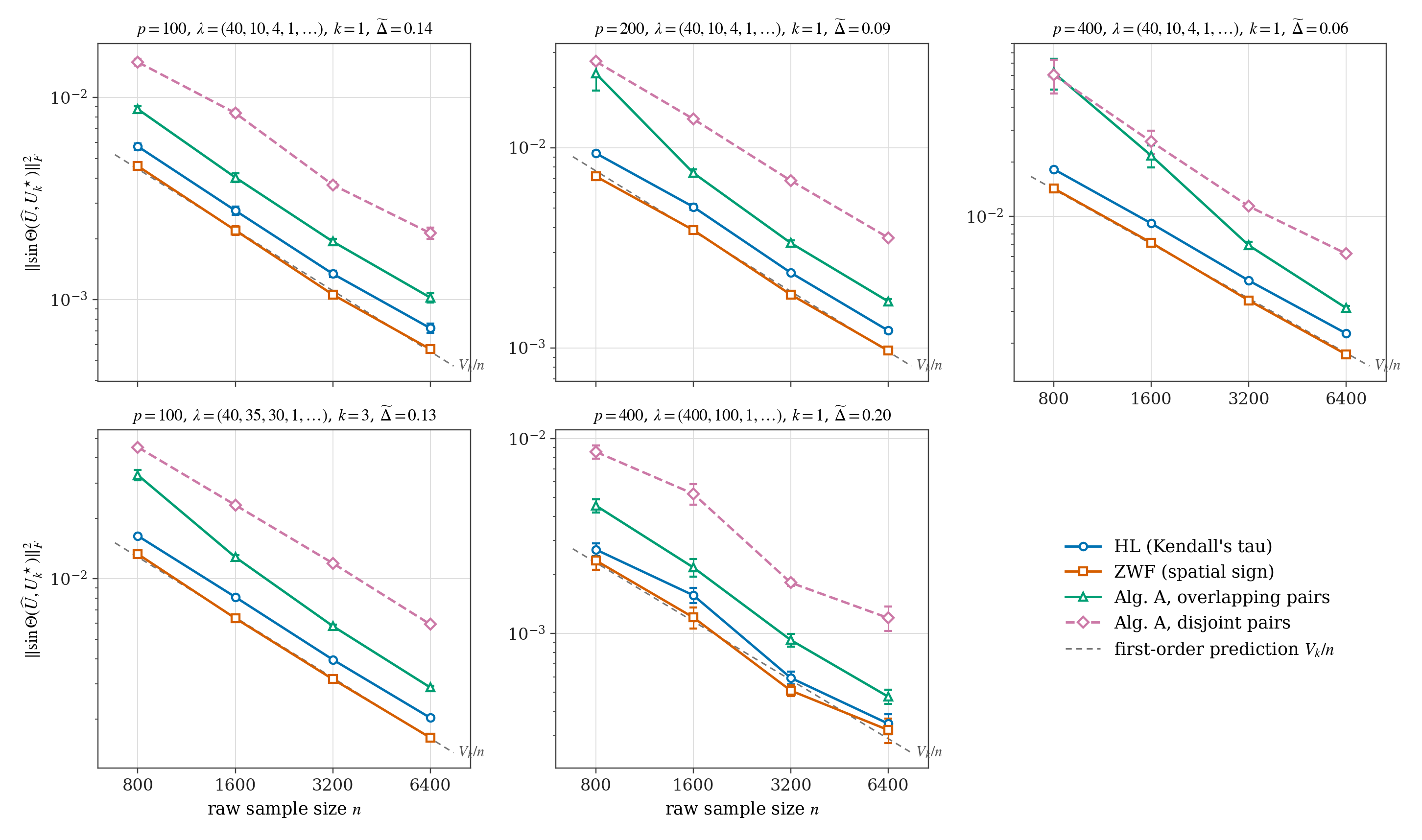}
\caption{Mean squared sine error for $p \in \la 100,200,400 \ra$ with the S1
spectrum, together with $k=3$ at $p=100$ and a strongly spiked spectrum at
$p=400$.}
\label{fig:eca-highp}
\end{figure}

Figure~\ref{fig:eca-highp} repeats the comparison at $p \in \la 100,200,400
\ra$, adding $k=3$ at $p=100$ and a strongly spiked spectrum
$\lambda = (400,100,1,\dots,1)$ at $p=400$. The results agree well with our theory that the error depends only on the effective variance term $V_k$ and not the ambient dimensions. Algorithm \ref{alg:eca} with overlapping pairs performs better than with disjoint pairs, up to a constant factor in both this case and the previous low-dimensional case $p=30$.

\subsection{Memory and computation}

\begin{figure}[t]
\centering
\includegraphics[width=0.62\textwidth]{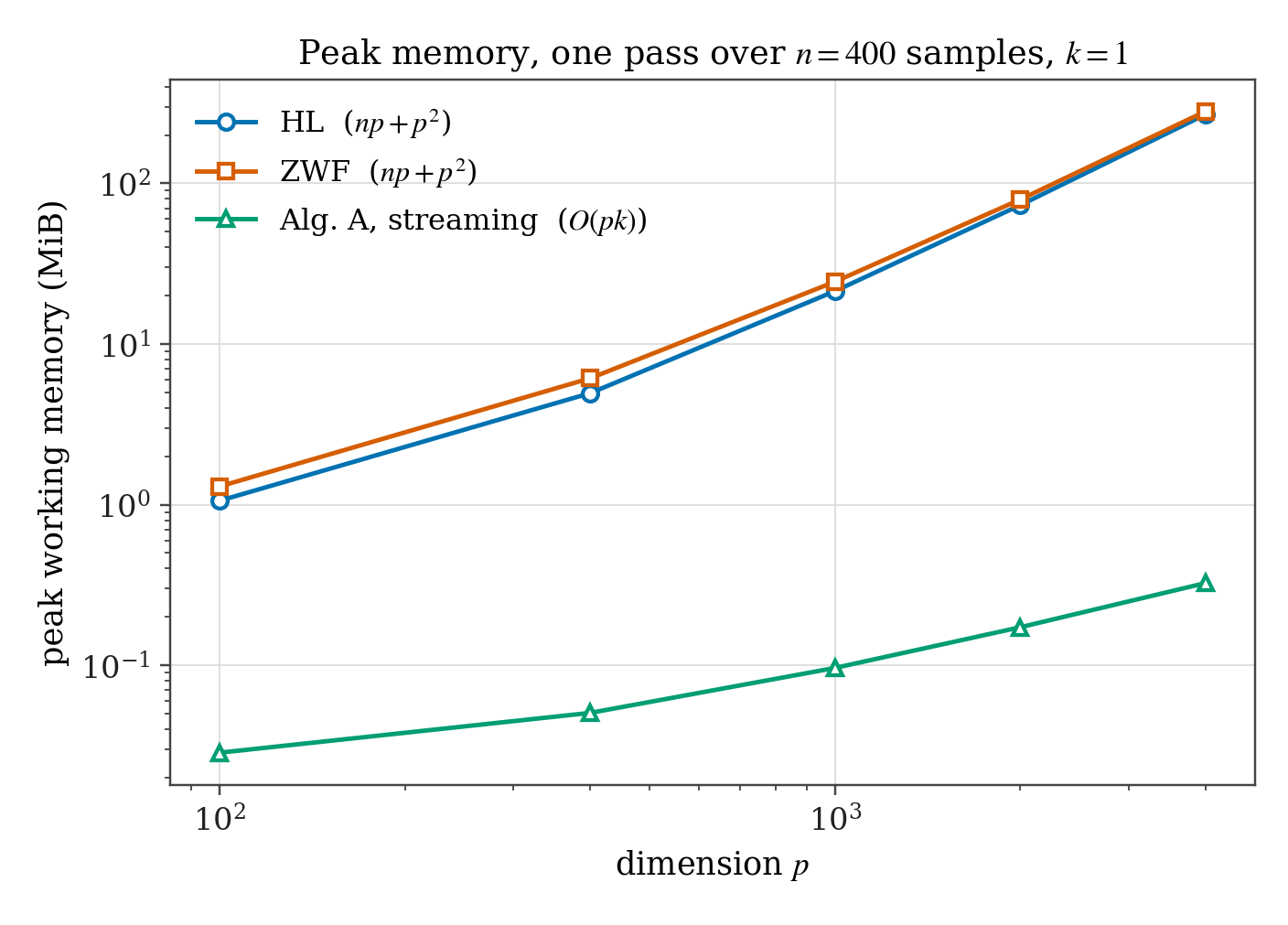}
\caption{Peak working memory measured by the Python module  \texttt{tracemalloc}, one pass over
$n=400$ observations with $k=1$, all three methods fed by the same
one-observation-at-a-time stream.}
\label{fig:eca-memory}
\end{figure}

Figure~\ref{fig:eca-memory} reports peak working memory when all methods
process the observations in an online setting. The memory used by
Algorithm~\ref{alg:eca} grows linearly in $p$, from $0.03$ MB at $p=100$ to
$0.34$ MB at $p=4000$, whereas the estimators \eqref{Han-Liu pca} and
\eqref{zwf pca} of \cite{han2018eca,zhao2024spatial} grow quadratically, from
$1.2$ and $1.4$ MB to $282$ and $295$ MB. At $p=4000$ this is a factor of
$827$ over \cite{han2018eca} and $864$ over \cite{zhao2024spatial}, and the ratio doubles with each doubling of
$p$. Moving from $k=1$ to $k=3$ at $p=1000$ raises the memory usage of
Algorithm~\ref{alg:eca} from $0.10$ MB to $0.20$ MB and leaves the competitors
unchanged, as expected from the $O(pk)$ versus $O(p^2)$ scaling.

The computational cost shows the same separation. In the same benchmark, one
pass at $p=4000$ takes $0.12$ s for Algorithm~\ref{alg:eca}, against $7.9$ s
for \cite{zhao2024spatial} and $40.1$ s for \cite{han2018eca}, the last reflecting the $O(n^2p^2)$ cost of the
$U$-statistic. The gap widens with $n$, since the cost of
Algorithm~\ref{alg:eca} is linear in $n$ while that of \eqref{Han-Liu pca} is
quadratic: under setting S1 with $n = 4\times10^5$, Algorithm~\ref{alg:eca}
completes in $8.3$ s, whereas \cite{han2018eca}
gives an estimate of roughly $2.1$ hours.

\section{Real data analysis} \label{sec:realdata}



We analyze the 10x Genomics E18 mouse brain dataset \cite{castiglione2026sgdgmf,hicks2021mbkmeans,townes2019featureselection,kidzinski2022gmf}, which contains
$1{,}306{,}127$ cells over $27{,}998$ genes. The
data are distributed publicly under CC BY 4.0 and are, to our knowledge, among
the largest single-cell RNA sequencing experiments in general use. Each cell in this dataset  is
a $27{,}998$-dimensional vector of counts.
Principal component analysis is the standard first step in the single-cell
analysis pipeline, where the leading components are used for clustering and
visualisation. 

\emph{\underline{Splitting and preprocessing}.} The cells are partitioned into
a training set of $1{,}240{,}686$ cells and a test set of $65{,}441$ cells.  Writing $\bm d$ for the difference of two independently drawn
preprocessed cells, all methods evaluated below estimate the $k$-leading eigenspace of
\[
\bm K
:=
\mb E\left[ \frac{\bm d \bm d^\top}{\|\bm d\|^2} \right],
\]
which is exactly the population multivariate Kendall's tau of
\cite{han2018eca}. We will also evaluate a variant of our Algorithm \ref{alg:eca}, which uses overlapping pairs for successive updates. This is equivalent to having $1$-dependent data. The
main algorithm fit uses $620{,}343$ disjoint training pairs, while this overlapping variant uses $1{,}240{,}685$  $1$-dependent updates.

\emph{\underline{Evaluation}.} There is no ground-truth eigenspace on real data, so we
therefore report the held-out \emph{pair-sign similarity}
\[
\mc P \lb \widehat{\bm U} \rb
:=
\mb E\left[
\frac{\|\widehat{\bm U}^\top \bm d\|^2}{\|\bm d\|^2}
\right],
\]
estimated over $32{,}720$ disjoint held-out pairs, which is maximised at the
population top-$k$ eigenspace and is therefore the natural population objective.
Larger values are better, and we fix $k=5$ throughout.

\emph{ \underline{Data processing}.} All methods process data in a sequential fashion.
On the full training set of $1{,}240{,}686$ cells, the exact estimator in \cite{han2018eca} is not computable:
$\binom{n}{2} \approx 7.7 \times 10^{11}$ pairs, and the estimated processing time is $2\times 10^6$ seconds, which is more than three weeks. For the estimator in \cite{zhao2024spatial}, we will use the power method to compute the principal eigenvector, rather than storing the sample covariance matrices. This reduces the memory storage required from $18$GB to less than $400$MB, at the cost of doing many passes through the whole dataset.

\begin{table}[h]
\centering
\begin{tabular}{lrr}
\hline
method & processing time (s) & $\mc P$ \\
\hline
Algorithm~\ref{alg:eca}                          & $191.9$ & $0.07856$ \\
Algorithm~\ref{alg:eca}, overlapping pairs       & $377.9$ & $0.07862$ \\
\cite{zhao2024spatial}                                & $1{,}013.6$ & $0.07879$ \\
\cite{han2018eca}                                   & \multicolumn{2}{c}{\centering{not computable}} \\
\hline
\end{tabular}
\caption{Full training set, $k=5$. Higher $\mc P$ is better}
\label{tab:neuron-full}
\end{table}

The pair-sign projection energy and processing time are reported in Table  \ref{tab:neuron-full}. Algorithm~\ref{alg:eca}  takes only  $192$ seconds to compute and uses  a single pass over the data. Its overlapping pair variant takes longer, which is roughly $377$ seconds, and the method based on spatial signs in \cite{zhao2024spatial} requires approximately $2.7$ times as long as the overlapping-pair variant. In terms of the pair-sign projection energy, we can see that all methods are quite close in performance.

\section{Equivalence between $k$-PCA and $1$-PCA}

 The goal of this section is to introduce the exterior product space (also known as Pl\"ucker embedding) and some related elementary facts. For a full textbook treatment of multilinear algebra, we refer to Chapter 14 of the monograph \cite{roman2005advanced}. Suppose $V$ is a vector space of dimension $p$ and  let $\la \bm{e}_1,\dots,\bm{e}_p \ra$ be its orthonormal basis. We define the $k$-th exterior power of $V$, denoted by $\bigwedge^k V$ as
\[
\mbox{span} \la  \bm{e}_{i_1} \wedge \dots \wedge \bm{e}_{i_k} : 1 \leq i_1<i_2<\dots<i_k \leq p \ra.
\]
In particular, the map
\[
\lb \bm{v}_1,\dots,\bm{v}_k \rb \mapsto \bm{v}_1 \wedge \bm{v}_2 \wedge\dots \wedge \bm{v}_k 
\]
is a linear mapping in each of its arguments (holding the others fixed) with the properties that 
\begin{align*}
\bm{v}_1 \wedge \dots  \wedge \bm{v}_i \dots \wedge \bm{v}_j \dots \wedge  \bm{v}_k &= - \bm{v}_1 \wedge \dots  \wedge \bm{v}_j \dots \wedge \bm{v}_i \dots \wedge  \bm{v}_k \\
\bm{v}_1 \wedge \dots  \wedge \bm{v}_i \dots \wedge \bm{v}_i \dots \wedge  \bm{v}_k &=0
\end{align*}
There are two types of elements in $\bigwedge^k V$: {\it decomposable} elements and {\it non-decomposable} elements. An element is called decomposable if it can be represented in the form $\bm x_1 \wedge \dots\wedge \bm x_k$ for some $\la \bm x_1,\dots,\bm x_k \ra \subset V$. Otherwise, we call it non-decomposable. In general,  $\bigwedge^k V$ contains non-decomposable elements. A simple example demonstrating this fact can be constructed by looking at $\bigwedge^2 \mb R^4$: one can check that the vector $\bm x =  \bm e_1 \wedge \bm e_2 + \bm e_3 \wedge \bm e_4$ cannot be represented in the form  $\bm u \wedge \bm v$ for some $\bm u, \bm v \in \mb{R}^4$.

\subsection{Metric and norm on the exterior space}
With the orthonormal basis $\la \bm e_1, \dots, \bm e_p \ra$ of $\mb R^p$, for the sake of presentation, we denote the basis of $\bigwedge^k \mb R^p$ as
\[
\la \bm e_I := \bm e_{i_1} \wedge \dots \wedge \bm e_{i_k}; \, I= \lb i_1, i_2,\dots, i_k \rb \ra_{1\leq i_1<\dots<i_k\leq p}.
\]
Let $\mc I_k$ be the collection of such $I$ above. Thus every element $\bm \xi \in \bigwedge^k \mb R^p$ can be rewritten as 
\[
\bm \xi = \sum_{I \in \mc I_k} \xi_I \cdot \bm e_I.
\]
We then define the exterior norm
\[
\| \bm \xi \|_{\wedge}:= \sqrt{\sum_{I \in \mc I_k} \xi_I^2 }.
\]
Since $\bigwedge^k \mb R^p$ is a finite-dimensional space, the norm above is complete and is also induced by the inner product 
\[
\langle \bm \xi, \bm \zeta \rangle_\wedge := \sum_{I \in \mc I_k} \xi_I  \zeta_I. 
\]
The definition above is not useful for computational purposes. We will make use of the following at various points in the later proofs.
\begin{prop} \label{prop:inner-product}
    For decomposable elements $\bm u, \bm v \in \bigwedge^k \mb R^p$, we have
    \[
    \langle \bm u, \bm v \rangle_{\wedge} = \mbox{det} \lb \bm U^\top \bm V \rb
    \]
    where $\bm U=[\bm u_1,\ldots,\bm u_k]$ and $\bm V=[\bm v_1,\ldots,\bm v_k]$, with $\bm u=\bm u_1\wedge\cdots\wedge\bm u_k$ and $\bm v=\bm v_1\wedge\cdots\wedge\bm v_k$.
\end{prop}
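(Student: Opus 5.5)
We prove the identity $\langle \bm u_1\wedge\cdots\wedge\bm u_k,\ \bm v_1\wedge\cdots\wedge\bm v_k\rangle_\wedge=\det(\bm U^\top\bm V)$ (the Cauchy--Binet formula in disguise) by expanding both decomposable elements in the basis $\{\bm e_I\}_{I\in\mc I_k}$ and computing their coordinates. Write $\bm u_j=\sum_{i=1}^p U_{ij}\bm e_i$, so $U_{ij}$ are the entries of $\bm U$.

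By multilinearity,
\[
\bm u_1\wedge\cdots\wedge\bm u_k=\sum_{i_1,\dots,i_k=1}^p U_{i_1 1}\cdots U_{i_k k}\,\bm e_{i_1}\wedge\cdots\wedge\bm e_{i_k}.
\]
We reduce this sum in two steps. First, every term with a repeated index vanishes by the alternating property. Second, each term with distinct indices can be reordered into increasing order $I=(i_1<\dots<i_k)$, which introduces the sign of the sorting permutation $\pi$. Collecting terms gives
\[
u_I=\sum_{\pi\in S_k}\operatorname{sgn}(\pi)\prod_{j=1}^k U_{i_{\pi(j)} j}=\det(\bm U_{I,:}),
\]
where $\bm U_{I,:}$ is the $k\times k$ submatrix of $\bm U$ with rows indexed by $I$. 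The same computation gives $v_I=\det(\bm V_{I,:})$.

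Hence
\[
\langle\bm u,\bm v\rangle_\wedge=\sum_{I\in\mc I_k}\det(\bm U_{I,:})\det(\bm V_{I,:}).
\]

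It remains to identify this sum with $\det(\bm U^\top\bm V)$, which is exactly the Cauchy--Binet formula; we either cite it or prove it directly. For a direct proof, write $(\bm U^\top\bm V)_{ab}=\sum_i U_{ia}V_{ib}$ and expand the determinant multilinearly in the columns. This gives
\[
\det(\bm U^\top\bm V)=\sum_{i_1,\dots,i_k}\Big(\prod_b V_{i_b b}\Big)\det\big[U_{i_b a}\big]_{a,b}.
\]
The inner determinant vanishes whenever two of the $i_b$ coincide. Otherwise, we group the tuples by their underlying set $I$. Permuting columns of the inner determinant to sorted order produces $\operatorname{sgn}(\pi)\det(\bm U_{I,:})$, and summing $\operatorname{sgn}(\pi)\prod_b V_{i_{\pi(b)}b}$ over $\pi$ yields $\det(\bm V_{I,:})$. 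This matches the previous display and completes the proof.

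The only delicate point is sign bookkeeping under the permutations, and it is routine. One should also note that the result does not depend on which factorization of $\bm u$ or $\bm v$ is chosen. This holds because both sides transform by $\det(\bm T)$ under the change of factors $\bm U\mapsto\bm U\bm T$, so the identity is intrinsic.
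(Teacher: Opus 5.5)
Your proposal is correct and follows the paper's proof essentially verbatim. You expand each decomposable element in the basis $\{\bm e_I\}$, identify its coordinates as the minors $\det(\bm U_{I,:})$ and $\det(\bm V_{I,:})$, and recognize $\sum_{I} \det(\bm U_{I,:})\det(\bm V_{I,:}) = \det(\bm U^\top \bm V)$ as the Cauchy--Binet formula; the only difference is that you prove Cauchy--Binet inline and remark on independence of the factorization, whereas the paper simply cites Cauchy--Binet.
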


\noindent \textbf{Proof of Proposition \ref{prop:inner-product}.}
With $\bm u= \bm u_1 \wedge \bm u_2\wedge \dots \wedge \bm u_k$ and  $\bm v= \bm v_1 \wedge \bm v_2\wedge \dots \wedge \bm v_k$, write 
\[
\bm U:= \lb \bm u_1,\dots, \bm u_k \rb, \qquad \bm V:= \lb \bm v_1,\dots, \bm v_k \rb.
\]
Observe that
\begin{align*}
\bm u &= \lb \sum_{j_1=1}^p u_{1,j_1} \cdot  \bm e_{j_1} \rb \wedge \dots \wedge \lb  \sum_{j_k=1}^p u_{k,j_k} \cdot  \bm e_{j_k} \rb \\
&= \sum_{1\leq j_1,\dots, j_k \leq p} u_{1,j_1} \, u_{2,j_2}\dots u_{k,j_k} \cdot \bm e_{j_1} \wedge \dots \wedge \bm e_{j_k} = \sum_{I \in \mc I_k} \mbox{det} \lb \bm U_I^\top \rb \, \bm e_I. 
\end{align*}
Here $\bm U_I$ and $\bm V_I$ are the $k\times k$ submatrices of $\bm U$ and $\bm V$ consisting of the rows indexed by $I$.
Similarly, $\bm v =  \sum_{I \in \mc I_k} \mbox{det} \lb \bm V_I \rb \, \bm e_I$. Thus the Cauchy--Binet formula yields 
\[
\langle \bm u, \bm v \rangle_{\wedge} = \sum_{I \in \mc I_k}  \mbox{det} \lb \bm U_I^\top \rb \cdot  \mbox{det} \lb \bm V_I \rb = \det \lb \bm U^\top \bm V \rb.
\]
$\hfill$ $\square$

\subsection{Oja's algorithm on the exterior space}
From the exposition above, it is easy to see that the attractive feature of Pl\"ucker embedding is that it embeds a subspace of dimension $k$ in $V$ into $\bigwedge^k V$ so that the embedded image becomes a single vector. This transformation creates a one-to-one correspondence between the $k$-PCA in $\mb R^p$  with $k>1$ and $1$-PCA in the exterior product space $\bigwedge^k \mb R^p$, which we will make precise below. Of course, a clear advantage of this approach is that $1$-PCA is much easier to analyze than the case $k \geq 2$. 

To formalize this viewpoint, recall that
$\mc U_*:=\operatorname{span}(\bm u_1,\dots,\bm u_k)$
is the target top-$k$ principal subspace and define its exterior embedding by
\[
\bm \omega_*:=\bm u_1\wedge\cdots\wedge \bm u_k
\in \bigwedge^k \mb R^p.
\]
Let $\hat{\bm U}\in\mb R^{p\times k}$ be an estimator with orthonormal columns $\hat{\bm U}=[\hat{\bm u}_1,\dots,\hat{\bm u}_k]$ 
and put
\[
\hat{\mc U}:=\operatorname{span}(\hat{\bm u}_1,\dots,\hat{\bm u}_k).
\]
Define the associated exterior embedding
\[
\hat{\bm \omega}:=
\hat{\bm u}_1\wedge\cdots\wedge \hat{\bm u}_k
\in \bigwedge^k \mb R^p.
\]
By Proposition \ref{prop:inner-product}, one has
\begin{align}
\sin^2(\hat{\bm \omega},\bm \omega_*)
&:=
1-\langle \hat{\bm \omega},\bm \omega_*\rangle^2=
1-\det(\hat{\bm U}^{\top}\bm U_*)^2 =
1-\prod_{\ell=1}^k\cos^2\theta_\ell,
\label{eq:exterior-loss-principal-angles}
\end{align}
where $\bm U_*=[\bm u_1,\dots,\bm u_k]$ and $\theta_1,\dots,\theta_k$ are the principal angles between
$\hat{\mc U}$ and $\mc U_*$.

\begin{prop} \label{prop:equivalance-error}
    With the notation above, we have
    \begin{align}
        \|\sin \Theta(\hat{\mc U},\mc U_*)\|_{\rm op}^2
&\leq
\sin^2(\hat{\bm \omega},\bm \omega_*),
\label{eq:operator-loss-transfer}
    \end{align}
and 
\begin{align} \label{eq:frobenius-loss-transfer}
    \sin^2 \lb  \hat{\bm \omega},\bm \omega_* \rb
&\leq
\|\sin\Theta(\hat{\mc U},\mc U_*)\|_{\rm F}^2
\leq
\tan^2 \lb \hat{\bm \omega},\bm \omega_*  \rb
\end{align}    
where 
\[
\tan^2 \lb \hat{\bm \omega},\bm \omega_*  \rb := \frac{\sin^2 \lb \hat{\bm \omega},\bm \omega_*  \rb}{\cos^2 \lb \hat{\bm \omega},\bm \omega_*  \rb}.
\]
\end{prop}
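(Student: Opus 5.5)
Everything reduces to elementary inequalities on $c_\ell:=\cos^2\theta_\ell\in[0,1]$, $\ell=1,\dots,k$. By \eqref{eq:exterior-loss-principal-angles}, $\sin^2(\hat{\bm\omega},\bm\omega_*)=1-\prod_\ell c_\ell$ and $\cos^2(\hat{\bm\omega},\bm\omega_*)=\prod_\ell c_\ell$. Moreover, $\|\sin\Theta\|_{\rm op}^2=\max_\ell(1-c_\ell)$ and $\|\sin\Theta\|_{\rm F}^2=\sum_\ell(1-c_\ell)$. Proposition \ref{prop:equivalance-error} then amounts to the chain
\[
\max_\ell(1-c_\ell)\le 1-\prod_\ell c_\ell\le \sum_\ell (1-c_\ell)\le \frac{1-\prod_\ell c_\ell}{\prod_\ell c_\ell}.
\]
For the rightmost inequality I assume $\prod_\ell c_\ell>0$; otherwise the tangent is $+\infty$ and there is nothing to prove.

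For \eqref{eq:operator-loss-transfer}, I would use that each $c_\ell\le1$, so $\prod_\ell c_\ell\le c_{\ell_0}$ for every $\ell_0$. Hence $1-\prod_\ell c_\ell\ge 1-c_{\ell_0}$, and taking the maximum over $\ell_0$ gives the claim.

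For the lower inequality in \eqref{eq:frobenius-loss-transfer}, I would invoke the standard union-type bound $1-\prod_\ell c_\ell\le\sum_\ell(1-c_\ell)$ for numbers in $[0,1]$. It follows by induction on $k$ from the two-factor identity $1-ab=(1-a)+a(1-b)\le(1-a)+(1-b)$.

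For the upper inequality, I need $\prod_\ell c_\ell\sum_\ell(1-c_\ell)\le 1-\prod_\ell c_\ell$. Setting $x_\ell:=1/c_\ell\ge1$, it suffices to show $\sum_\ell(1-c_\ell)\le\prod_\ell x_\ell-1$. Since $1-c_\ell=1-1/x_\ell\le x_\ell-1$, it is enough that $\sum_\ell(x_\ell-1)\le\prod_\ell x_\ell-1$ for $x_\ell\ge1$. Writing $x_\ell=1+y_\ell$ with $y_\ell\ge0$, this is immediate: expanding $\prod_\ell(1+y_\ell)$ yields $1+\sum_\ell y_\ell$ plus nonnegative cross terms.

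The only mildly delicate step is this last inequality, specifically choosing the substitution $x_\ell=1/c_\ell$ so that the tangent bound becomes a product-expansion inequality. The degenerate case $\prod_\ell c_\ell=0$ is handled by the convention above, and the first two inequalities are routine.
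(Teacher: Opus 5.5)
Your proof is correct. The operator-norm bound and the lower Frobenius bound go exactly as in the paper; the paper calls the union-type bound $1-\prod_\ell c_\ell\le\sum_\ell(1-c_\ell)$ trivial, and your induction via $1-ab=(1-a)+a(1-b)$ supplies it.

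For the upper (tangent) bound you take a different elementary route. The paper uses the termwise inequality $\sin^2\theta_\ell\le-\log\cos^2\theta_\ell$. Summing gives exactly $-\log\prod_\ell\cos^2\theta_\ell=\log\{1+\tan^2(\hat{\bm\omega},\bm\omega_*)\}$, and the paper finishes with $\log(1+x)\le x$. You instead use $1-c_\ell\le 1/c_\ell-1$ together with the superadditivity $\sum_\ell y_\ell\le\prod_\ell(1+y_\ell)-1$ for $y_\ell\ge0$.

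Your version is purely algebraic, and it handles the degenerate case $\prod_\ell c_\ell=0$ explicitly, which the paper leaves implicit. The paper's logarithmic version loses nothing when aggregating, since a sum of logs is exactly the log of the product. It therefore yields the slightly sharper intermediate bound $\|\sin\Theta(\hat{\mc U},\mc U_*)\|_{\rm F}^2\le\log\{1+\tan^2(\hat{\bm\omega},\bm\omega_*)\}$. Both arguments end at the same stated inequality.
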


\noindent \textbf{Proof of Proposition \ref{prop:equivalance-error}.}
To show \eqref{eq:operator-loss-transfer}, write
\begin{align*}
  \|\sin \Theta(\hat{\mc U},\mc U_*)\|_{\rm op}^2 = \max_{1\leq i \leq k} \,  \sin^2 \theta_i  \leq 1-\prod_{i=1}^k\cos^2\theta_i \leq \sin^2(\hat{\bm \omega},\bm \omega_*).
\end{align*}
To show \eqref{eq:frobenius-loss-transfer}, note that the first inequality is trivial, and for the second one, observe that
\begin{align*}
    \|\sin\Theta(\hat{\mc U},\mc U_*)\|_{\rm F}^2 = \sum_{i=1}^k \sin^2 \theta_i &\leq -\sum_{i=1}^k \log \lb 1 - \sin^2 \theta_i \rb \\
    & = - \sum_{i=1}^k \log \lb \cos^2 \theta_i \rb = \log \left[ 1 + \tan^2 \lb \hat{\bm \omega},\bm \omega_*  \rb  \right] \leq \tan^2 \lb \hat{\bm \omega},\bm \omega_*  \rb.
\end{align*}
The proof is completed.
$\hfill$ $\square$

The two-sided bound \eqref{eq:frobenius-loss-transfer} in Proposition \ref{prop:equivalance-error} states that the sine-squared error in the exterior space is of the same order as the Frobenius norm of the sine matrix in the ambient space. In particular, an algorithm achieves an optimal convergence rate for the k-PCA problem if and only if its embedded version in the exterior space achieves the same rate. This gives an equivalence, one-to-one correspondence between the $k$-PCA in $\mb R^p$ and the $1$-PCA in $\bigwedge^k \mb R^p$. 

Now, put
\[
\widetilde{\mb R}_{k,p}:=\bigwedge^k\mb R^p,
\]
and, consistently with Theorem~\ref{k-PCA convergence}, define
\begin{equation}
    \bm B_t
    :=
    \bm I_p+\eta_t\bm A_t
    =
    \bm I_p+\eta_t\bm X_t\bm X_t^\top.
    \label{eq:oja-update-matrix}
\end{equation}
Write
\[
    \bm Q_t
    =
    [\bm q_{1,t},\ldots,\bm q_{k,t}]
\]
for the Oja iterate from Algorithm~\ref{alg:oja-k-pca}. Equivalently,
\begin{equation}
    \bm Q_t
    =
    \operatorname{orth}(\bm B_t\bm Q_{t-1}),
    \label{eq:block-oja-Bt}
\end{equation}
where \(\operatorname{orth}\) denotes the \(Q\)-factor in the QR
factorization with positive diagonal.

For a fixed matrix \(\bm M\in\mb R^{p\times p}\), define the linear map
\(\mathcal L_{\bm M}\) on \(\widetilde{\mb R}_{k,p}\) by
\begin{align} \label{L}
\mathcal L_{\bm M}
\left(
\bm w_1\wedge\cdots\wedge \bm w_k
\right)
:=
\sum_{j=1}^k
\bm w_1\wedge\cdots\wedge \bm w_{j-1}
\wedge \bm M\bm w_j
\wedge \bm w_{j+1}\wedge\cdots\wedge \bm w_k.
\end{align}
Our next result describes the Oja iteration in the exterior space.

\begin{thm}\label{thm: Plucker embedding}
Define the Pl\"ucker image of \(\bm Q_t\) by
\begin{equation}
    \bm\omega_t
    :=
    \bm q_{1,t}\wedge\cdots\wedge\bm q_{k,t}
    \in\widetilde{\mb R}_{k,p}.
    \label{eq:plucker-iterate}
\end{equation}
Then
\begin{equation}
    \bm\omega_t
    =
    \frac{
        \left(\bigwedge^k\bm B_t\right)\bm\omega_{t-1}
    }{
        \left\|
            \left(\bigwedge^k\bm B_t\right)\bm\omega_{t-1}
        \right\|_{\wedge}
    }.
    \label{embbeded Oja}
\end{equation}
Moreover, since \(\bm A_t=\bm X_t\bm X_t^\top\) has rank at most one,
\begin{equation}
    \bigwedge^k\bm B_t
    =
    \bm I_{\widetilde{\mb R}_{k,p}}
    +
    \eta_t\mathcal L_{\bm A_t}.
    \label{embedded Oja update}
\end{equation}
\end{thm}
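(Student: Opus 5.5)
The proof has two parts: the normalized-power form \eqref{embbeded Oja} of the iteration, and the expansion \eqref{embedded Oja update} of $\bigwedge^k\bm B_t$. For a matrix $\bm M\in\mb R^{p\times p}$ we use the compound operator $\bigwedge^k\bm M$, defined on decomposable elements by $(\bigwedge^k\bm M)(\bm w_1\wedge\cdots\wedge\bm w_k):=\bm M\bm w_1\wedge\cdots\wedge\bm M\bm w_k$ and extended linearly. It is well defined because the right-hand side is multilinear and alternating in $(\bm w_1,\dots,\bm w_k)$.

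\emph{Step 1: the iteration.} Write the QR factorization $\bm B_t\bm Q_{t-1}=\bm Q_t\bm R_t$, with $\bm R_t$ upper triangular with positive diagonal. Since $\bm B_t\succeq\bm I_p$ is invertible and $\bm Q_{t-1}$ has full column rank, $\bm R_t$ is invertible. Column $j$ of $\bm B_t\bm Q_{t-1}$ is $\sum_{i\le j}(\bm R_t)_{ij}\bm q_{i,t}$. Expanding the wedge product multilinearly and using the alternating property, only the diagonal term survives for triangular $\bm R_t$ (more generally one gets $\det\bm R_t$ for any $k\times k$ change of basis). This gives
\[
\Bigl(\bigwedge^k\bm B_t\Bigr)\bm\omega_{t-1}
=\bm B_t\bm q_{1,t-1}\wedge\cdots\wedge\bm B_t\bm q_{k,t-1}
=\det(\bm R_t)\,\bm\omega_t .
\]
By Proposition~\ref{prop:inner-product}, $\|\bm\omega_t\|_\wedge^2=\det(\bm Q_t^\top\bm Q_t)=1$. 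Since $\det\bm R_t=\prod_i(\bm R_t)_{ii}>0$, taking norms gives $\|(\bigwedge^k\bm B_t)\bm\omega_{t-1}\|_\wedge=\det\bm R_t$. Dividing yields \eqref{embbeded Oja}; the positive-diagonal convention in $\operatorname{orth}$ fixes the sign. Any QR convention would in any case change $\bm\omega_t$ only by a sign, which is irrelevant for the subspace.

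\emph{Step 2: the expansion.} Because both sides of \eqref{embedded Oja update} are linear, it suffices to check it on decomposable elements $\bm w_1\wedge\cdots\wedge\bm w_k$. Multilinearity gives
\[
(\bm w_1+\eta_t\bm A_t\bm w_1)\wedge\cdots\wedge(\bm w_k+\eta_t\bm A_t\bm w_k)
=\sum_{S\subseteq\{1,\dots,k\}}\eta_t^{|S|}\,\bm z^S_1\wedge\cdots\wedge\bm z^S_k,
\]
where $\bm z^S_j=\bm A_t\bm w_j$ for $j\in S$ and $\bm z^S_j=\bm w_j$ otherwise.
\begin{itemize}
\item The term $S=\emptyset$ is the identity.
\item The terms with $|S|=1$ give exactly $\eta_t\mathcal L_{\bm A_t}$ as defined in \eqref{L}.
\item For $|S|\ge2$, write $\bm A_t\bm w_j=(\bm X_t^\top\bm w_j)\bm X_t$. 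Then $\bm A_t\bm w_i$ and $\bm A_t\bm w_j$ with $i\neq j$ are both scalar multiples of the single vector $\bm X_t$. Pulling out the scalars leaves $\bm X_t\wedge\bm X_t=0$, so every such term vanishes.
\end{itemize}
This proves \eqref{embedded Oja update}.

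The main point requiring care is the well-definedness of $\bigwedge^k\bm M$ and of $\mathcal L_{\bm M}$ as linear maps on all of $\widetilde{\mb R}_{k,p}$, including non-decomposable elements. The clean route is the universal property of the exterior power. Both $(\bm w_1,\dots,\bm w_k)\mapsto\bm M\bm w_1\wedge\cdots\wedge\bm M\bm w_k$ and the derivation-type sum in \eqref{L} are multilinear and alternating. For the latter, swapping $\bm w_i$ and $\bm w_j$ permutes the summands and flips each sign, and if $\bm w_i=\bm w_j$ the summands cancel in pairs or vanish. Hence each factors through a unique linear map on $\bigwedge^k\mb R^p$, and the identities verified on decomposable elements extend to the whole space by linearity.
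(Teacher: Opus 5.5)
Your proof is correct and follows essentially the same route as the paper. Both arguments use the QR identity $(\bigwedge^k\bm B_t)\bm\omega_{t-1}=\det(\bm R_t)\,\bm\omega_t$ together with $\det\bm R_t>0$ and $\|\bm\omega_t\|_\wedge=1$, and then a multilinear expansion on decomposable vectors in which every term containing $\bm X_t$ twice vanishes; your justification via the universal property that $\bigwedge^k\bm M$ and $\mathcal L_{\bm M}$ are well defined, which the paper takes for granted, is also correct.
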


\noindent \textbf{Proof of Theorem \ref{thm: Plucker embedding}.}
We first recall the definition of the induced exterior map. For any linear
map \(\bm T:\mb R^p\to\mb R^p\), the induced linear map
\[
    \bigwedge^k\bm T:
    \widetilde{\mb R}_{k,p} \to \widetilde{\mb R}_{k,p}
\]
is defined on decomposable vectors by
\[
    \left(\bigwedge^k\bm T\right)
    (\bm w_1\wedge\cdots\wedge\bm w_k)
    =
    (\bm T\bm w_1)\wedge\cdots\wedge(\bm T\bm w_k),
\]
and then extended linearly to all of \(\bigwedge^k\mb R^p\).

\medskip
\noindent
\underline{\it Proof of \eqref{embbeded Oja}.}
Let
\[
    \bm Y_t:=\bm B_t\bm Q_{t-1}.
\]
By \eqref{eq:block-oja-Bt}, the QR factorization of \(\bm Y_t\) is
\[
    \bm Y_t=\bm Q_t\bm R_t,
\]
where \(\bm Q_t\) has orthonormal columns and \(\bm R_t\) is upper
triangular with positive diagonal. Taking the wedge product of the columns
of both sides gives
\begin{align*}
    (\bm B_t\bm q_{1,t-1})\wedge\cdots\wedge
      (\bm B_t\bm q_{k,t-1})
    &=
    \det(\bm R_t)\,
    \bm q_{1,t}\wedge\cdots\wedge\bm q_{k,t}.
\end{align*}
The left-hand side equals
\[
    \left(\bigwedge^k\bm B_t\right)\bm\omega_{t-1},
\]
while the right-hand side equals \(\det(\bm R_t)\bm\omega_t\). Hence
\begin{equation}
    \left(\bigwedge^k\bm B_t\right)\bm\omega_{t-1}
    =
    \det(\bm R_t)\bm\omega_t.
    \label{eq:wedge-qr-identity}
\end{equation}
Since the columns of \(\bm Q_t\) are orthonormal, $\|\bm\omega_t\|_{\wedge}^2=\det(\bm Q_t^\top\bm Q_t) =1$.
Moreover, the positive-diagonal convention gives \(\det(\bm R_t)>0\).
Taking norms in \eqref{eq:wedge-qr-identity} and dividing by
\(\det(\bm R_t)\) proves \eqref{embbeded Oja}.

\medskip
\noindent
\underline{ \it Proof of \eqref{embedded Oja update}.}
It suffices to verify the identity on a decomposable vector
\[
    \bm\omega=\bm w_1\wedge\cdots\wedge\bm w_k.
\]
By \eqref{eq:oja-update-matrix},
\[
    \bm B_t\bm w_j
    =
    \bm w_j+\eta_t\bm A_t\bm w_j
    =
    \bm w_j+\eta_t\langle\bm X_t,\bm w_j\rangle\bm X_t.
\]
Therefore,
\begin{align*}
    \left(\bigwedge^k\bm B_t\right)\bm\omega
    &=(\bm B_t\bm w_1)\wedge\cdots\wedge(\bm B_t\bm w_k)\\
    &=\bigwedge_{j=1}^k
      \left(
          \bm w_j+\eta_t\langle\bm X_t,\bm w_j\rangle\bm X_t
      \right).
\end{align*}
Expanding by multilinearity, every term containing \(\bm X_t\) in two or
more coordinates vanishes because \(\bm X_t\wedge\bm X_t=0\). Thus
\begin{align*}
    \left(\bigwedge^k\bm B_t\right)\bm\omega
    ={}&
    \bm w_1\wedge\cdots\wedge\bm w_k
    \\
    &+
    \eta_t\sum_{j=1}^k
    \bm w_1\wedge\cdots\wedge\bm w_{j-1}
    \wedge \bm A_t\bm w_j
    \wedge \bm w_{j+1}\wedge\cdots\wedge\bm w_k
    \\
    ={}&
    \left(
        \bm I_{\widetilde{\mb R}_{k,p}}
        +\eta_t\mathcal L_{\bm A_t}
    \right)\bm\omega.
\end{align*}
Decomposable vectors span \(\widetilde{\mb R}_{k,p}\), so the operator
identity \eqref{embedded Oja update} follows. \hfill\(\square\)

We next analyze the eigenvalues and eigenvectors of
\(\mathbb E[\mathcal L_{\bm A_t}]\). Recall that
\[
    \bm\Sigma=\mathbb E\bm A_t
\]
has the eigendecomposition
\[
    \bm\Sigma\bm u_i=\lambda_i\bm u_i,
    \qquad i=1,\ldots,p,
\]
where
\[
    \lambda_1\geq\lambda_2\geq\cdots\geq\lambda_p,
\]
and \(\bm u_1,\ldots,\bm u_p\) form an orthonormal basis of
\(\mb R^p\).

\begin{thm}
\label{thm:mean lift}
It holds that
$
    \mathcal L_{\bm\Sigma}
    =
    \mathbb E\left[\mathcal L_{\bm A_t}\right]
$.
Moreover, the two leading eigenvalues of \(\mathcal L_{\bm\Sigma}\) are
\begin{align*}
    \mu_1
    :=    \sum_{i=1}^k\lambda_i,
    \qquad 
    \mu_2
    :=    \sum_{i=1}^{k-1}\lambda_i+\lambda_{k+1}.
\end{align*}
In particular,
$\mu_1-\mu_2= \Delta_k$.
\end{thm}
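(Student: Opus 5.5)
The plan has two parts: first show that $\bm M\mapsto\mathcal L_{\bm M}$ is linear, then diagonalize $\mathcal L_{\bm\Sigma}$ explicitly in the Plücker basis built from the eigenvectors of $\bm\Sigma$.

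\emph{Linearity and the mean identity.} Before anything else, I will note that the definition \eqref{L} must be well defined on all of $\widetilde{\mb R}_{k,p}$. The right-hand side of \eqref{L} is multilinear and alternating in $(\bm w_1,\dots,\bm w_k)$. Indeed, swapping $\bm w_i$ and $\bm w_j$ permutes the summands and flips each sign. If $\bm w_i=\bm w_j$, the two terms in which $\bm M$ hits $\bm w_i$ or $\bm w_j$ cancel, and all other terms vanish. By the universal property of the exterior power, $\mathcal L_{\bm M}$ is therefore a well-defined linear map. Equivalently, $\mathcal L_{\bm M}=\frac{d}{ds}\big|_{s=0}\bigwedge^k(\bm I_p+s\bm M)$.

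From \eqref{L}, the map $\bm M\mapsto\mathcal L_{\bm M}$ is linear. Each entry of $\mathcal L_{\bm M}$ in the basis $\{\bm e_I\}$ is a fixed linear combination of the entries of $\bm M$. Taking expectations entrywise, which is legitimate because $\bm A_t$ is bounded, gives $\mb E[\mathcal L_{\bm A_t}]=\mathcal L_{\mb E\bm A_t}=\mathcal L_{\bm\Sigma}$.

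\emph{Spectrum of $\mathcal L_{\bm\Sigma}$.} I will work in the orthonormal basis $\bm u_I:=\bm u_{i_1}\wedge\cdots\wedge\bm u_{i_k}$, $I\in\mc I_k$. This is an orthonormal basis of $\widetilde{\mb R}_{k,p}$ by Proposition~\ref{prop:inner-product}, since $\det(\bm U_I^\top\bm U_J)=\delta_{IJ}$. Applying \eqref{L} with $\bm M=\bm\Sigma$ and $\bm\Sigma\bm u_{i_j}=\lambda_{i_j}\bm u_{i_j}$ gives
\[
\mathcal L_{\bm\Sigma}\,\bm u_I=\Big(\sum_{j=1}^k\lambda_{i_j}\Big)\bm u_I .
\]
So $\mathcal L_{\bm\Sigma}$ is diagonal in this basis, and its eigenvalues are exactly the sums of $k$ eigenvalues of $\bm\Sigma$ with distinct indices.

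\emph{Ordering the eigenvalues.} The largest such sum is obtained at $I=(1,\dots,k)$ and equals $\mu_1$. For the second largest, take any $I\neq(1,\dots,k)$. Then $I$ contains at least one index $\ge k+1$, so its sum is at most $\sum_{i\le k}\lambda_i-\lambda_{i^\ast}+\lambda_{k+1}$ for some $i^\ast\le k$. This is at most $\mu_2=\sum_{i\le k-1}\lambda_i+\lambda_{k+1}$, with equality at $I=(1,\dots,k-1,k+1)$.

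One subtlety is that $\lambda_i$ may repeat among $i\le k$. In that case several $I$ attain $\mu_2$, but the value is unchanged. Since $\Delta_k>0$, $\mu_1$ is a simple eigenvalue and strictly exceeds $\mu_2$, with $\mu_1-\mu_2=\lambda_k-\lambda_{k+1}=\Delta_k$.

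\emph{Main obstacle.} The only delicate step is the well-definedness of $\mathcal L_{\bm M}$ on non-decomposable elements. It is handled by the alternating-multilinear check above, or by the derivative representation. The rest is a routine computation.
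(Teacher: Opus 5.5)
Your proposal is correct and follows essentially the same route as the paper: pass the expectation inside by linearity (the paper checks this on decomposable vectors), then diagonalize $\mathcal L_{\bm\Sigma}$ in the orthonormal basis $\bm u_I$, where the eigenvalues are the $k$-fold sums $\sum_{j}\lambda_{i_j}$. Two of your steps are details the paper leaves implicit: the alternating-multilinear check that $\mathcal L_{\bm M}$ is well defined on non-decomposable elements, and the explicit argument that every $I\neq(1,\dots,k)$ gives a sum of at most $\mu_2$.
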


\noindent \textbf{Proof of Theorem \ref{thm:mean lift}.}  Define
$\bar{\mathcal L}:= \mb E\left[\mathcal L_{\bm A_t}\right]$. We first show that
$\bar{\mathcal L}=\mathcal L_{\bm\Sigma}$.
Since decomposable $k$-vectors span $\widetilde{\mb R}_{k,p}$, it suffices to
verify the above on a decomposable element
\[
\bm\omega
=
\bm w_1\wedge\cdots\wedge \bm w_k.
\]
By definition,
\[
\mathcal L_{\bm A_t}\bm\omega
=
\sum_{j=1}^k
\bm w_1\wedge\cdots\wedge \bm w_{j-1}
\wedge
\bm A_t \bm w_j
\wedge
\bm w_{j+1}\wedge\cdots\wedge \bm w_k.
\]
Taking expectation and using the linearity of the wedge product in each
argument gives
\begin{align*}
\mb E\left[\mathcal L_{\bm A_t}\bm\omega\right]
&=
\sum_{j=1}^k
\bm w_1\wedge\cdots\wedge \bm w_{j-1}
\wedge
\mb E[\bm A_t]\bm w_j
\wedge
\bm w_{j+1}\wedge\cdots\wedge \bm w_k \\
&= \sum_{j=1}^k
\bm w_1\wedge\cdots\wedge \bm w_{j-1}
\wedge
\bm\Sigma\bm w_j
\wedge
\bm w_{j+1}\wedge\cdots\wedge \bm w_k
=
\mathcal L_{\bm\Sigma}\bm\omega.
\end{align*}
Therefore, $\bar{\mathcal L}=\mathcal L_{\bm\Sigma}$. Next we compute the eigenvectors and eigenvalues of $\mathcal L_{\bm\Sigma}$.
For an index set
\[
I=\{i_1<\cdots<i_k\},
\]
define
\[
\bm u_I
=
\bm u_{i_1}\wedge\cdots\wedge \bm u_{i_k}.
\]
Since $\bm u_1,\dots,\bm u_p$ are an orthonormal basis of $\mb R^p$, the
collection
\[
\{\bm u_I: |I|=k\}
\]
is an orthonormal basis of $\bigwedge^k\mb R^p$.

Now apply $\mathcal L_{\bm\Sigma}$ to $\bm u_I$. By definition,
\[
\mathcal L_{\bm\Sigma}\bm u_I
=
\sum_{\ell=1}^k
\bm u_{i_1}\wedge\cdots\wedge
\bm\Sigma\bm u_{i_\ell}
\wedge\cdots\wedge
\bm u_{i_k}.
\]
Since $\bm\Sigma\bm u_{i_\ell}= \lambda_{i_\ell}\bm u_{i_\ell}$, we get
\[
\mathcal L_{\bm\Sigma}\bm u_I
=
\sum_{\ell=1}^k
\lambda_{i_\ell}
\bm u_{i_1}\wedge\cdots\wedge
\bm u_{i_\ell}
\wedge\cdots\wedge
\bm u_{i_k}.
\]
Therefore,
\[
\mathcal L_{\bm\Sigma}\bm u_I
=
\left(
\sum_{\ell=1}^k\lambda_{i_\ell}
\right)
\bm u_I.
\]
Thus the eigenvalues of $\bar{\mathcal L}$ are all $k$-fold sums of the
eigenvalues of $\bm\Sigma$. It remains to identify the top two eigenvalues. Since $\lambda_1\geq\lambda_2\geq\cdots\geq\lambda_p$,
the largest and second-largest possible sums of $k$ distinct eigenvalues are
\[
\mu_1=\lambda_1+\cdots+\lambda_k,
\qquad
\text{and}
\qquad
\mu_2=\lambda_1+\dots+\lambda_{k-1}+\lambda_{k+1}.
\]
This completes the proof. $\hfill$ $\square$

\section{Proof of Theorem \ref{k-PCA convergence}}

The proof of Theorem \ref{k-PCA convergence} can be adapted from Theorem 3.1 in \cite{jain2016streaming}, except for the analysis of the initialization. For the standard $1$-PCA case in $\mb R^p$, the initialization is uniformly distributed on the hypersphere $\mb S^{p-1}$. However, this is no longer true in the exterior space $\widetilde{\mb R}_{k,p}$. The reasoning is simple: the support of the initialization has to be a subset of the set of all decomposable vectors, which cannot be the whole sphere in $\widetilde{\mb R}_{k,p}$.

\subsection{Anti-concentration}
Let $\bm Q_0=\lb \bm q_{1,0},\dots,\bm q_{k,0} \rb$ be a uniformly distributed $k$-frame and define
\[
\bm \omega_0:= \bm q_{1,0} \wedge \dots\wedge \bm q_{k,0}  \in \widetilde{\mb R}_{k,p}
\]
Recall $r_k(\delta)$ as in Theorem \ref{k-PCA convergence}. The main result of this subsection is the following anti-concentration result:

\begin{prop} \label{anti-concentration}
For a sufficiently large, universal constant $C>0$ in the definition of $r_k(\delta)$, we have 
\[
\mb P \la \binom{p}{k} \cdot \left| \langle \bm \omega_0, \bm \alpha  \rangle_\wedge \right|^2  \leq \frac{1}{r_k(\delta)} \ra \leq \frac{\delta}{3}
\]
where $\bm \alpha \in \widetilde{\mb R}_{k,p}$ is an arbitrary decomposable vector with unit length. Here $\delta \in (0,1)$.
\end{prop}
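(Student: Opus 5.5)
By Proposition~\ref{prop:inner-product}, $\langle \bm\omega_0,\bm\alpha\rangle_\wedge=\det(\bm Q_0^\top \bm V)$, where $\bm V\in\mb R^{p\times k}$ is an orthonormal frame representing $\bm\alpha$. Since $\bm Q_0$ is uniformly distributed on the Stiefel manifold, its law is invariant under orthogonal transformations. We may therefore rotate and take $\bm V=[\bm e_1,\dots,\bm e_k]$. Then $\langle\bm\omega_0,\bm\alpha\rangle_\wedge=\det(\bm Q_{0,[k]})$, the determinant of the top $k\times k$ block of $\bm Q_0$.

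Next, write $\bm Q_0=\bm G(\bm G^\top\bm G)^{-1/2}$ with $\bm G$ a $p\times k$ Gaussian matrix. Equivalently, $\bm Q_0$ is the QR factor of $\bm G$, and the polar and QR versions have the same determinant up to sign because they differ by an orthogonal $k\times k$ factor. This gives
\[
\det(\bm Q_{0,[k]})^2=\frac{\det(\bm G_{[k]})^2}{\det(\bm G^\top\bm G)},
\]
where $\bm G_{[k]}$ is the top $k\times k$ Gaussian block.

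The task is then to lower bound the numerator and upper bound the denominator, each with probability $1-\delta/6$.

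\emph{Denominator.} By Bartlett's decomposition, $\det(\bm G^\top\bm G)=\prod_{i=1}^k \chi^2_{p-i+1}$ with independent factors. Hence
\[
\log\det(\bm G^\top\bm G)=\sum_i \log\chi^2_{p-i+1},
\]
whose mean is $\sum_i\log(p-i+1)+O(1)$ and whose fluctuations are small. A Markov or Chernoff bound on the product gives
\[
\det(\bm G^\top\bm G)\le \frac{p!}{(p-k)!}\cdot e^{C\log(C/\delta)}.
\]

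\emph{Numerator.} Similarly, $\det(\bm G_{[k]})^2=\prod_{i=1}^k\chi^2_{i}$ with independent factors, again by Bartlett applied to a square Gaussian matrix. Note that $\binom pk\cdot\frac{(p-k)!}{p!}=1/k!$. Comparing with $k!=\prod_i i$, it suffices to show
\[
\mb P\Big\{\sum_{i=1}^k \log(\chi^2_i/i)\le -\log k - C\sqrt{\log k\,\log(C/\delta)}-C\log(C/\delta)\Big\}\le\delta/6 .
\]

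\textbf{Main obstacle.} This lower-tail bound on a sum of logs of chi-squares is the main obstacle, because the small-degree terms have heavy left tails. In particular, $\log\chi^2_1$ can be very negative with polynomial probability $\mb P(\chi^2_1\le\epsilon)\asymp\sqrt\epsilon$. I would split the sum into two parts.

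\emph{Bulk terms.} For $i\ge i_0$, the variables $Y_i=\log(\chi^2_i/i)$ have mean $-1/i+O(i^{-2})$ and variance about $2/i$. Summing, the mean is about $-\log k$ and the variance about $2\log k$, which explains the $-\log k$ shift, that is, the extra factor $k$ in $r_k$. I would apply a Bernstein or Chernoff bound via the moment generating function $\mb E[(\chi^2_i/i)^{-s}]=\Gamma(i/2-s)(i/2)^{s}/\Gamma(i/2)$ for $s<i/2$. This yields a sub-Gaussian deviation $\sqrt{\log k\log(1/\delta)}$ plus a sub-exponential part $\log(1/\delta)$.

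\emph{Small-degree terms.} For the finitely many $i<i_0$, I would bound each term directly: $\mb P(\chi^2_i\le\epsilon)\le C\epsilon^{i/2}$, and a union bound with $\epsilon=(\delta/C)^{2}$ costs only $C\log(C/\delta)$.

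Alternatively, one can apply the Chernoff bound with a single $s\le 1/4$ to the entire sum. Then $\log\mb E[e^{-sY}]=\sum_i[\,s/i+s^2/i+O(s^3/i^2)\,]$, and optimizing over $s\in(0,1/4]$ gives exactly the two regimes $\sqrt{\log k\log(1/\delta)}$ and $\log(1/\delta)$. The $\delta/3$ target then follows by combining the two bounds. The remaining work is calibrating the universal constant $C$ in $r_k(\delta)$ so that both parts fit.
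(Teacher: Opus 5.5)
Your proof is correct, but it takes a different route from the paper's.

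\textbf{The paper's route.} The paper first derives the exact law $|\langle \bm\omega_0,\bm\alpha\rangle_\wedge|^2 \stackrel{d}{=} \prod_{j=1}^{r} B_j$, with independent $B_j\sim\mathrm{Beta}(j/2,(p-r)/2)$ and $r=\min\{k,p-k\}$. This comes from Rouault's result on squared determinants of corners of a Haar orthogonal matrix, plus a Sylvester-determinant argument that passes to the complementary block when $k>p/2$. It then shows in one step that $-\log\bigl(\binom pk\prod_j B_j\bigr)$ has mean at most $\log(Cr)$ and is sub-gamma with variance factor $C\log(er)$, by bounding the trigamma sum for $\theta\in[0,1/4]$.

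\textbf{Your route.} You write the squared overlap as $\det(\bm G_{[k]})^2/\det(\bm G^\top\bm G)$ and apply Bartlett to numerator and denominator separately. You then decouple these two dependent quantities with a union bound: Markov with the exact mean $p!/(p-k)!$ for the denominator, and a Chernoff bound for $\sum_{i\le k}\log(\chi^2_i/i)$. Your single-$s$ Chernoff bound with $s\le 1/4$ is the same computation as the paper's sub-gamma step: a Gamma-ratio moment generating function and a trigamma sum of order $\sum_i 1/i$. Because $s\le 1/4<1/2$, it already controls the heavy left tail of $\log\chi^2_1$, so your bulk/small-degree split is unnecessary.

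\textbf{Comparison.} Your argument is more elementary: it needs only Bartlett's decomposition and no special case for $k>p/2$. The price is the factor $6/\delta$ lost to Markov, i.e.\ an extra additive $\log(6/\delta)$ in the exponent, which is absorbed into $C\log(C/\delta)$. You also get dependence on $k$ rather than on $\min\{k,p-k\}$, which is immaterial because $r_k(\delta)$ is stated in terms of $k$.

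\textbf{One caution.} When you write out the drift term, keep the sharp estimate $\log x-\psi(x)\le \frac{1}{2x}+\frac{1}{12x^2}$ at $x=i/2$, as your expansion $-1/i+O(i^{-2})$ implicitly does. The cruder bound $\log x-\psi(x)\le 1/x$ would give $2\log k$, and hence a leading factor $k^2$ instead of $k$ in $r_k(\delta)$.
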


\noindent \textbf{Proof of Proposition \ref{anti-concentration}.}
We split the proof into several steps.

\noindent\underline{\it Step 1: Exact distributional representation of $\langle \bm \omega_0, \bm \alpha  \rangle_\wedge$.} We will show that with $r:=\min \la k, p-k \ra$, we have 
\begin{align} \label{diributional-wedge-product}
     \left| \langle \bm \omega_0, \bm \alpha  \rangle_\wedge \right|^2 \stackrel{d}{=} \prod_{j=1}^r B_j
\end{align}
where $B_j$'s are independent with distribution $B_j \sim \mbox{Beta} \lb j/2, (p-r)/2 \rb$.

Let $\bm A$ be the matrix representation of $\bm \alpha$. By Proposition \ref{prop:inner-product}, we have 
\[
 \left| \langle \bm\omega_0, \bm \alpha  \rangle_\wedge \right|^2 \stackrel{d}{=} \mbox{det} \lb \bm A^\top \bm Q_0 \rb^2. 
\]
By rotational invariance, we can assume, without loss of generality, that 
\[
\bm A:=
    \begin{pmatrix}
        \bm I_k\\
        \bm 0
    \end{pmatrix}.
\]
 Let $Z
    :=
    \left|
        \langle \bm\omega_0,\bm\alpha\rangle_\wedge
    \right|^2
$.
Write 
\[
    Z
    \stackrel{d}{=}
    \det\left(\bm E_k^\top\bm Q_0\right)^2,
    \qquad
    \bm E_k:=
    \begin{pmatrix}
        \bm I_k\\
        \bm 0
    \end{pmatrix}.
\]
Note that $\bm Q_0$ has the same distribution as the first $k$ columns of a Haar-distributed
orthogonal matrix $\bm H\in O(p)$. Therefore, $\bm E_k^\top \bm Q_0 \stackrel{d}{=} \bm H_{11}$, where $\bm H_{11}$ is the first $k\times k$ block of $\bm H$:
\[
\bm H =  \lb \begin{matrix}
    \bm H_{11} & \bm H_{12} \\
    \bm H_{21} & \bm H_{22}
\end{matrix}
\rb .
\]
Thus $Z\stackrel{d}{=}\det\left(\bm H_{11}\right)^2$.
By Proposition 2.1 in \cite{rouault2007asymptotic}, for every integer $1\leq m<p$,
\begin{equation}
    \det\left(\bm H_{1:m,1:m}\right)^2
    \stackrel{d}{=}
    \prod_{\ell=1}^{m}B_{\ell}^{(m)},
    \label{eq:haar-square-block-beta}
\end{equation}
where the random variables
$B_1^{(m)},\ldots,B_m^{(m)}$ are mutually independent and
\begin{equation}
    B_{\ell}^{(m)}
    \sim
    \operatorname{Beta}
    \left(
        \frac{\ell}{2},
        \frac{p-m}{2}
    \right),
    \qquad
    \ell=1,\ldots,m.
    \label{eq:haar-square-block-beta-variables}
\end{equation}

We now apply \eqref{eq:haar-square-block-beta} and \eqref{eq:haar-square-block-beta-variables} to $Z$. If
$k\leq p/2$, then $r=k$, and setting $m=k$ gives
\[
    Z
    \stackrel{d}{=}
    \prod_{\ell=1}^{k}B_\ell,
    \qquad
    B_\ell
    \stackrel{\mathrm{ind}}{\sim}
    \operatorname{Beta}
    \left(
        \frac{\ell}{2},
        \frac{p-k}{2}
    \right).
\]
This is exactly \eqref{diributional-wedge-product} in this case.

Suppose next that $k>p/2$, and put $r:=p-k<\frac p2$.
As computed above, we have
\[
    Z \stackrel{d}{=}\det(\bm H_{11})^2.
\]
On the other hand, orthogonality of the first $k$ columns gives
\[
    \bm H_{11}^{\top}\bm H_{11}
    +
    \bm H_{21}^{\top}\bm H_{21}
    =
    \bm I_k,
\]
whereas orthogonality of the last $r$ rows gives
\[
    \bm H_{21}\bm H_{21}^{\top}
    +
    \bm H_{22}\bm H_{22}^{\top}
    =
    \bm I_r.
\]
It follows from Sylvester's determinant identity that
\begin{align}
    \det(\bm H_{11})^2
    =
    \det\left(
        \bm I_k-\bm H_{21}^{\top}\bm H_{21}
    \right)
    =
    \det\left(
        \bm I_r-\bm H_{21}\bm H_{21}^{\top}
    \right)
    =
    \det\left(
        \bm H_{22}\bm H_{22}^{\top}
    \right)
    =
    \det(\bm H_{22})^2.
    \label{eq:complementary-haar-block}
\end{align}
By Haar invariance, $\bm H_{22}$ has the same distribution as
the top-left $r\times r$ block of a Haar orthogonal matrix.
Since $r=p-k<p/2$, the already proved case, applied with $k$
replaced by $r$, yields
\[
    Z
    \stackrel{d}{=}
    \prod_{\ell=1}^{r}B_\ell,
    \qquad
    B_\ell
    \stackrel{\mathrm{ind}}{\sim}
    \operatorname{Beta}
    \left(
        \frac{\ell}{2},
        \frac{p-r}{2}
    \right).
\]
Combining the two cases proves \eqref{diributional-wedge-product}.

\noindent \underline{\it Step 2: Sub-Gamma concentration.}
Put
\[
    q:=p-r,
    \qquad
    D_k:=\binom{p}{k}=\binom{p}{r},
    \qquad
    Z:=\prod_{j=1}^{r}B_j,
\]
where, by Step~1,
\[
    B_j
    \stackrel{\mathrm{ind}}{\sim}
    \operatorname{Beta}
    \left(
        \frac{j}{2},
        \frac{q}{2}
    \right),
    \qquad
    j=1,\ldots,r.
\]
Since $r=\min\{k,p-k\}$, we have $q\geq r$. Define
\[
    X:=-\log(D_kZ).
\]

We show that there exists a universal constant $C>0$ such that
\begin{equation}
    \mathbb E X\leq \log(Cr)
    \label{eq:mean-log-overlap}
\end{equation}
and that $X-\mathbb EX$ is sub-gamma on the right with variance
factor $C\log(er)$ and a universal scale factor.

Let $\psi$ and $\psi_1$ denote the digamma and trigamma functions,
respectively. Since
\[
    \mathbb E[-\log B_j]
    =
    \psi\left(\frac{q+j}{2}\right)
    -
    \psi\left(\frac{j}{2}\right),
\]
we have
\[
    \mathbb EX
    =
    -\log D_k
    +
    \sum_{j=1}^{r}
    \left[
        \psi\left(\frac{q+j}{2}\right)
        -
        \psi\left(\frac{j}{2}\right)
    \right].
\]
Using
\[
    \psi(x)\leq\log x,
    \qquad
    \psi(x)\geq\log\left(x-\frac12\right),
    \quad x>\frac12,
\]
and treating the term $j=1$ separately, we obtain
\begin{align*}
    \mathbb EX
    &\leq
    -\log D_k
    +
    \log(q+1)
    +
    C
    +
    \sum_{j=2}^{r}
    \log\left(\frac{q+j}{j-1}\right)
    \\
    &=
    -\log D_k
    +
    \log D_k
    +
    \log r
    +
    C
    \leq
    \log(Cr),
\end{align*}
which proves \eqref{eq:mean-log-overlap}. Note that we have used the identity
\[
  \sum_{j=2}^{r}
    \log\left(\frac{q+j}{j-1}\right) = \log \left[ \prod_{j=2}^r \frac{q+j}{j-1}  \right] = \log \binom{q+r}{r-1} = \log \binom{p}{r-1} = \log D_k + \log \lb \frac{r}{q+1} \rb
\]
to deduce the second line. Next, let
\[
    \Lambda(\theta)
    :=
    \log\mathbb E
    \exp\left\{
        \theta(X-\mathbb EX)
    \right\}.
\]
The beta negative-moment formula gives, for
$0\leq\theta<1/2$,
\[
    \mathbb E B_j^{-\theta}
    =
    \frac{
        \Gamma\left(\frac{j}{2}-\theta\right)
        \Gamma\left(\frac{q+j}{2}\right)
    }{
        \Gamma\left(\frac{j}{2}\right)
        \Gamma\left(\frac{q+j}{2}-\theta\right)
    }.
\]
Consequently,
\begin{align*}
    \Lambda''(\theta)  =  \frac{d^2}{d \theta^2} \left[ \sum_{j=1}^r \log \lb  \mb E B_j^{-\theta} \rb\right] &=   \frac{d^2}{d \theta^2} \left[ \sum_{j=1}^r \log \Gamma \lb \frac{j}{2} - \theta \rb - \log \Gamma \lb \frac{q+j}{2} - \theta \rb  \right] \\
  &= \sum_{j=1}^{r}
    \left[
        \psi_1\left(\frac{j}{2}-\theta\right)
        -
        \psi_1\left(\frac{q+j}{2}-\theta\right)
    \right].
\end{align*}
Since $\psi_1$ is positive and decreasing, for
$0\leq\theta\leq1/4$,
\begin{align*}
    \Lambda''(\theta)
    \leq
    \sum_{j=1}^{r}
    \psi_1\left(\frac{j}{2}-\theta\right)
    \leq
    C\sum_{j=1}^{r}
    \left(
        \frac1j+\frac1{j^2}
    \right)
    \leq
    C\log(er).
\end{align*}
Here we used the elementary bound
\[
    \psi_1(x)
    =
    \sum_{\ell=0}^{\infty}\frac1{(x+\ell)^2}
    \leq
    \frac1{x^2}+\frac1x.
\]
Next, note that $\Lambda(0)=\Lambda'(0)=0$, so integration twice yields
\[
    \Lambda(\theta)
    \leq
    C\theta^2\log(er),
    \qquad
    0\leq\theta\leq\frac14.
\]
In particular,
\[
    \Lambda(\theta)
    \leq
    \frac{
        C\theta^2\log(er)
    }{
        2(1-4\theta)
    },
    \qquad
    0<\theta<\frac14.
\]
Thus $X-\mathbb EX$ is sub-gamma on the right with variance
factor $C\log(er)$ and scale factor $4$.

By the sub-gamma concentration inequality, for every $u>0$,
\begin{equation}
    \mathbb P\left(
        X-\mathbb EX
        \geq
        C\sqrt{\log(er)\,u}
        +
        Cu
    \right)
    \leq
    e^{-u}.
    \label{eq:sub-gamma-log-overlap}
\end{equation}
Taking $u:=\log\left(\frac{3}{\delta}\right)$ and using \eqref{eq:mean-log-overlap}, we obtain
\begin{align*}
    \mathbb P\Bigg(
        X
        \geq
        \log(Cr)
        +
        C\sqrt{
            \log(er)
            \log\left(\frac{3}{\delta}\right)
        }
        +
        C\log\left(\frac{3}{\delta}\right)
    \Bigg)
    \leq
    \frac{\delta}{3}.
\end{align*}
Equivalently,
\begin{align}
    \mathbb P\Bigg(
        D_k Z
        \leq
        \frac{1}{
            Cr
            \exp\left\{
                C\sqrt{
                    \log(er)
                    \log(3/\delta)
                }
                +
                C\log(3/\delta)
            \right\}
        }
    \Bigg)
    \leq
    \frac{\delta}{3}.
    \label{eq:plucker-anticoncentration-subgamma}
\end{align}

Therefore, by taking the universal constant in
the definition of $r_k(\delta)$ to be sufficiently large and noting that $r\leq k$, we have
\[
    r_k(\delta)
    \geq
    Cr
    \exp\left\{
        C\sqrt{
            \log(er)\log(3/\delta)
        }
        +
        C\log(3/\delta)
    \right\}.
\]
It then follows from \eqref{eq:plucker-anticoncentration-subgamma} that
\[
    \mathbb P\left( \,
        \binom pk \cdot 
        \left|
            \langle\bm\omega_0,\bm\alpha\rangle_\wedge
        \right|^2
        \leq
        \frac1{r_k(\delta)}
    \right)
    \leq
    \frac{\delta}{3}.
\]

\subsection{Proof of \eqref{k-PCA-F-bound}.}

We adapt the proof of Theorem 3.1 in \cite{jain2016streaming}. The main difference is to have a better variance term, namely $\sigma_k^2$ in \eqref{k-PCA-F-bound}. Suppose $\bm \omega_\star$ is the image of  $\la \bm u_1,\dots, \bm u_k \ra$ under the Pl\"ucker embedding. Let us begin with a variance calculation. 

\begin{lemma} \label{lem:sigma_k}
    Suppose $\mathsf{\bm P}_\star:= \bm \omega_\star \bm \omega_\star^\top$ and  $ \mathsf{\bm P}_\star^\perp := \bm I_{\widetilde{\mb R}_{k,p}}
    -
    \mathsf{\bm P}_\star$. Then with $\mc L$ as in \eqref{L}, we have
    \[
   \mb E \lb  \| \mathsf{\bm P}_\star^\perp \, \mc L_{\bm A_t - \bm \Sigma} \, \bm \omega_{\star}  \|_\wedge^2   \rb=\sum_{i=1}^k\sum_{j=k+1}^p \mathbb E \left[  \left(  \bm u_j^\top (\bm A_t-\bm\Sigma) \bm u_i \right)^2 \right].
    \]
\end{lemma}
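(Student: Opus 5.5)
We fix a realization of $\bm A_t$, write $\bm N:=\bm A_t-\bm\Sigma$ (symmetric), and compute $\mathsf{\bm P}_\star^\perp \mc L_{\bm N}\bm\omega_\star$ exactly in the orthonormal basis $\{\bm u_I: |I|=k\}$ of $\widetilde{\mb R}_{k,p}$ from the proof of Theorem \ref{thm:mean lift}. The identity then holds pointwise, before taking expectations, and the lemma follows by taking $\mb E$ of both sides.

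\emph{Step 1: expand $\mc L_{\bm N}\bm\omega_\star$.} With $\bm\omega_\star=\bm u_1\wedge\cdots\wedge\bm u_k$, write $\bm N\bm u_i=\sum_{j=1}^p N_{ji}\bm u_j$, where $N_{ji}:=\bm u_j^\top\bm N\bm u_i$. By definition \eqref{L} and multilinearity,
\[
\mc L_{\bm N}\bm\omega_\star=\sum_{i=1}^k\sum_{j=1}^p N_{ji}\,\bm u_1\wedge\cdots\wedge\bm u_{i-1}\wedge\bm u_j\wedge\bm u_{i+1}\wedge\cdots\wedge\bm u_k .
\]
When $j\le k$ and $j\ne i$, the term vanishes because $\bm u_j$ appears twice. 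When $j=i$, the term equals $N_{ii}\bm\omega_\star$. When $j\ge k+1$, the term equals $\pm\bm u_{I(i,j)}$, where $I(i,j):=(\{1,\dots,k\}\setminus\{i\})\cup\{j\}$.

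\emph{Step 2: project.} The projection $\mathsf{\bm P}_\star^\perp$ removes exactly the $\bm\omega_\star$ component $\big(\sum_{i\le k}N_{ii}\big)\bm\omega_\star$. This leaves
\[
\mathsf{\bm P}_\star^\perp\mc L_{\bm N}\bm\omega_\star=\sum_{i=1}^k\sum_{j=k+1}^p \pm N_{ji}\,\bm u_{I(i,j)} .
\]
The map $(i,j)\mapsto I(i,j)$ is injective on $\{1,\dots,k\}\times\{k+1,\dots,p\}$: the set $I(i,j)$ determines $j$ as its unique element outside $\{1,\dots,k\}$, and determines $i$ as the unique missing index in $\{1,\dots,k\}$. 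Hence the vectors $\bm u_{I(i,j)}$ are distinct members of an orthonormal basis. By Pythagoras, $\|\mathsf{\bm P}_\star^\perp\mc L_{\bm N}\bm\omega_\star\|_\wedge^2=\sum_{i\le k<j}N_{ji}^2$, and taking expectations gives the claim.

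The only step needing care is this orthogonality and injectivity bookkeeping, together with the observation that the signs from reordering wedge factors are irrelevant after squaring. Orthonormality of the $\bm u_I$ basis follows from Proposition \ref{prop:inner-product}: for $I\ne J$ one has $\det(\bm U_I^\top\bm U_J)=0$, since the matrix has a zero row.
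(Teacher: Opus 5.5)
Your proof is correct and follows the paper's route. Both expand $\mc L_{\bm A_t-\bm\Sigma}\bm\omega_\star$ in the orthonormal basis $\{\bm u_I\}$, remove the component $\big(\sum_{i\le k}\bm u_i^\top(\bm A_t-\bm\Sigma)\bm u_i\big)\bm\omega_\star$ with $\mathsf{\bm P}_\star^\perp$, apply Parseval pointwise, and take expectations. Your explicit check that $(i,j)\mapsto I(i,j)$ is injective, and that reordering signs vanish after squaring, fills in bookkeeping the paper leaves implicit.
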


\noindent \textbf{Proof of Lemma \ref{lem:sigma_k}.}
Put 
\begin{align} \label{Et}
\bm E_t:= \bm A_t - \bm \Sigma.
\end{align}
Recall that $\bm \omega_\star =  \bm u_1 \wedge \dots \wedge \bm u_k$, so
\begin{align*}
    \mc L_{\bm A_t - \bm \Sigma} \, \bm \omega_{\star} =  \sum_{i=1}^k \bm u_1 \wedge \dots \wedge   \bm u_{i-1} \wedge \bm E_t \bm u_i  \wedge \bm u_{i+1} \wedge \dots \wedge  \bm u_k.
\end{align*}
Let $\la \bm u_{k+1},\dots, \bm u_p \ra$ be an orthonormal set such that the set $\la \bm u_1,\dots, \bm u_p \ra$ is an orthonormal basis of $\mb R^p$. Substituting $\bm E_t \bm u_i = \sum_{j=1}^p \lb \bm u_j^\top \, \bm E_t \, \bm u_i \rb \, \bm u_j$ into the preceding display, we obtain 
\begin{align*}
 \mc L_{\bm A_t - \bm \Sigma} \, \bm \omega_{\star} &= \left[ \sum_{i=1}^k  \bm u_i^\top \, \bm E_t \, \bm u_i \right] \cdot \bm \omega_\star \\
&+ \sum_{i=1}^k \sum_{j=k+1}^p  
 \lb \bm u_j^\top \, \bm E_t \, \bm u_i \rb \cdot \bm u_1 \wedge \dots \wedge   \bm u_{i-1} \wedge  \bm u_j  \wedge \bm u_{i+1} \wedge \dots \wedge  \bm u_k.
\end{align*}
Note that the operator $\mathsf{\bm P}_\star^\perp$ projects onto the orthogonal complement of $\bm \omega_\star$, so Parseval's identity yields 
\begin{align*}
 \| \mathsf{\bm P}_\star^\perp \, \mc L_{\bm A_t - \bm \Sigma} \, \bm \omega_{\star}  \|_\wedge^2  &= 
 \sum_{i=1}^k \sum_{j=k+1}^p  
 \lb \bm u_j^\top \, \bm E_t \, \bm u_i \rb^2 
 = \sum_{i=1}^k \sum_{j=k+1}^p  \left[    \bm u_j^\top \, (\bm A_t-\bm\Sigma) \, \bm u_i  \right]^2.
\end{align*}
The proof is completed by taking expectations on both sides. $\hfill$ $\square$

The next lemma collects some elementary estimates on the operator $\mc L$, the eigenvalues and variance profile of the embedded Oja iterates.

\begin{lemma} \label{embedded profile}
    For every integer $\ell \geq 1$, it holds that
    \begin{align}
          \|\bm A_t\|_{\rm op} &\leq L, \nonumber \\
           \mc L_{\bm A_t}^{\,\ell} &= \|\bm X_t\|^{2(\ell-1)} \mc L_{\bm A_t},  
           \qquad
    0\preceq \mc L_{\bm A_t} \preceq L \cdot \bm I_{\widetilde{\mb R}_{k,p}}, \label{eq:rank-one-powers-exterior}  \\
        \left\| \mb E\left[  \left( \mc L_{\bm A_t} -\mc L_{\bm\Sigma}\right)^2\right] \, \right\|_{\rm op}
   & \leq L\mu_1. \label{eq:exterior-global-variance-norm}
    \end{align}
\end{lemma}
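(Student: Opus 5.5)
The plan is to reduce everything to one structural fact about $\mathcal L_{\bm M}$ for symmetric $\bm M$: it is diagonal in the exterior basis built from an eigenbasis of $\bm M$. The first bound is immediate: $\|\bm A_t\|_{\rm op}\le\|\bm\Sigma\|_{\rm op}+\|\bm A_t-\bm\Sigma\|_{\rm op}\le\lambda_1+M=L$. In particular $\|\bm X_t\|^2=\|\bm A_t\|_{\rm op}\le L$.

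\emph{Structural fact.} Let $\bm M$ be symmetric with orthonormal eigenbasis $\bm v_1,\dots,\bm v_p$ and eigenvalues $\nu_1,\dots,\nu_p$. Then $\{\bm v_I\}_{|I|=k}$ is an orthonormal basis of $\widetilde{\mb R}_{k,p}$. Exactly as in the proof of Theorem~\ref{thm:mean lift}, one gets $\mathcal L_{\bm M}\bm v_I=(\sum_{i\in I}\nu_i)\bm v_I$. Hence $\mathcal L_{\bm M}$ is self-adjoint for $\langle\cdot,\cdot\rangle_\wedge$, and its eigenvalues are the $k$-fold sums of eigenvalues of $\bm M$.

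\emph{Rank-one identities.} Apply the structural fact to $\bm M=\bm A_t=\bm X_t\bm X_t^\top$ with $\bm X_t\neq 0$; the case $\bm X_t=0$ is trivial. Take $\bm v_1=\bm X_t/\|\bm X_t\|$, so $\nu_1=\|\bm X_t\|^2$ and all other $\nu_i=0$. Then each $\bm v_I$ is an eigenvector of $\mathcal L_{\bm A_t}$ with eigenvalue $\|\bm X_t\|^2$ if $1\in I$ and $0$ otherwise. So $\mathcal L_{\bm A_t}=\|\bm X_t\|^2\,\mathsf{\bm P}$, where $\mathsf{\bm P}$ is the orthogonal projection onto $\operatorname{span}\{\bm v_I:1\in I\}$. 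This immediately gives
\begin{itemize}
\item $\mathcal L_{\bm A_t}^{\ell}=\|\bm X_t\|^{2\ell}\mathsf{\bm P}=\|\bm X_t\|^{2(\ell-1)}\mathcal L_{\bm A_t}$;
\item $0\preceq\mathcal L_{\bm A_t}\preceq\|\bm X_t\|^2\,\bm I\preceq L\,\bm I$.
\end{itemize}

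\emph{Variance bound \eqref{eq:exterior-global-variance-norm}.} The map $\bm M\mapsto\mathcal L_{\bm M}$ is linear, and Theorem~\ref{thm:mean lift} gives $\mb E\mathcal L_{\bm A_t}=\mathcal L_{\bm\Sigma}$. Expanding the square therefore yields
\[
\mb E\big[(\mathcal L_{\bm A_t}-\mathcal L_{\bm\Sigma})^2\big]=\mb E\big[\mathcal L_{\bm A_t}^2\big]-\mathcal L_{\bm\Sigma}^2\preceq\mb E\big[\mathcal L_{\bm A_t}^2\big].
\]
Here $\mathcal L_{\bm\Sigma}^2\succeq0$ because $\mathcal L_{\bm\Sigma}$ is self-adjoint. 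By the $\ell=2$ case of the power identity, $\mathcal L_{\bm A_t}^2=\|\bm X_t\|^2\mathcal L_{\bm A_t}$. Since $\mathcal L_{\bm A_t}\succeq0$ and $\|\bm X_t\|^2\le L$, this is $\preceq L\,\mathcal L_{\bm A_t}$. Taking expectations gives $\mb E[\mathcal L_{\bm A_t}^2]\preceq L\,\mathcal L_{\bm\Sigma}$.

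Now $\bm\Sigma\succeq0$, so all $k$-fold eigenvalue sums are nonnegative and the largest is $\mu_1$ by Theorem~\ref{thm:mean lift}. Hence $0\preceq\mathcal L_{\bm\Sigma}\preceq\mu_1\bm I$. The left-hand side $\mb E[(\mathcal L_{\bm A_t}-\mathcal L_{\bm\Sigma})^2]$ is itself positive semidefinite. So its operator norm is at most $\|L\,\mathcal L_{\bm\Sigma}\|_{\rm op}=L\mu_1$.

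\emph{Main point to verify.} The only nontrivial step is the structural fact: that $\mathcal L_{\bm M}$ is diagonal in $\{\bm v_I\}$, which makes it self-adjoint. The eigenvector computation is already carried out in the proof of Theorem~\ref{thm:mean lift}, and the rest is operator-order bookkeeping.
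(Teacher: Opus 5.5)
Your proof is correct and follows essentially the same route as the paper. Both use the triangle inequality for $\|\bm A_t\|_{\rm op}\le L$, the eigenbasis from Theorem~\ref{thm:mean lift} for $0\preceq\mathcal L_{\bm A_t}\preceq L\,\bm I$, and the same expansion $\mb E[(\mathcal L_{\bm A_t}-\mathcal L_{\bm\Sigma})^2]=\mb E[\mathcal L_{\bm A_t}^2]-\mathcal L_{\bm\Sigma}^2\preceq\mb E[\mathcal L_{\bm A_t}^2]\preceq L\,\mathcal L_{\bm\Sigma}$.

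The one difference is the power identity. You read it off from $\mathcal L_{\bm A_t}=\|\bm X_t\|^2\mathsf{\bm P}$, whereas the paper computes $\mathcal L_{\bm A_t}^2$ directly on decomposable vectors, where cross terms vanish because $\bm A_t\bm w_i\wedge\bm A_t\bm w_j=0$, and then inducts. Your version also makes explicit the self-adjointness and the positive semidefiniteness of the left-hand side, which the paper uses tacitly when taking operator norms.
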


\noindent \textbf{Proof of Lemma \ref{embedded profile}.} The first bound follows from the triangle inequality. To show the first identity in \eqref{eq:rank-one-powers-exterior}, it suffices, by linearity, to consider a decomposable vector
$\bm\omega=\bm w_1\wedge\cdots\wedge\bm w_k$. This is because decomposable vectors span $\widetilde{\mb R}_{k,p}$.

Note that $\bm A_t^2 = \| \bm X_t\|^2 \cdot \bm A_t$, so we can write
\begin{align*}
\mc L_{\bm A_t}^2 \bm \omega = \mc L_{\bm A_t} \lb \mc L_{\bm A_t}  \bm \omega   \rb &=  \mc L_{\bm A_t} \lb \sum_{j=1}^k
\bm w_1\wedge\cdots\wedge \bm w_{j-1}
\wedge \bm A_t \bm w_j
\wedge \bm w_{j+1}\wedge\cdots\wedge \bm w_k  \rb \\
&= \sum_{i=1}^k \sum_{j=1}^k \bm w_1 \wedge \dots \wedge \bm A_t \bm w_i \wedge \dots\wedge \bm A_t \bm w_j \wedge \dots \wedge \bm w_k \\
&= \sum_{j=1}^k \bm w_1\wedge\cdots\wedge \bm w_{j-1}
\wedge \bm A_t^2  \bm w_j
\wedge \bm w_{j+1}\wedge\cdots\wedge \bm w_k \\
&= \| \bm X_t\|^2 \cdot \sum_{j=1}^k \bm w_1\wedge\cdots\wedge \bm w_{j-1}
\wedge \bm A_t  \bm w_j
\wedge \bm w_{j+1}\wedge\cdots\wedge \bm w_k
\end{align*}
where the third equality follows from the fact that for $i \neq j$,
\[
\bm A_t \bm w_i \wedge \bm A_t \bm w_j = \langle \bm X_t, \bm w_i \rangle \cdot \langle  \bm X_t, \bm w_j \rangle \cdot \bm X_t \wedge \bm X_t =\bm 0. 
\]
Thus $\mc L_{\bm A_t}^2 = \| \bm X_t\|^2 \cdot \mc L_{\bm A_t}$. A simple induction argument then yields the first item in \eqref{eq:rank-one-powers-exterior}. The second item in \eqref{eq:rank-one-powers-exterior} follows from Theorem \ref{thm:mean lift}, which implies that the eigenvalues of $\mc L_{\bm A_t}$ are either $0$ or $\| \bm X_t \|^2$, and 
$\| \bm X_t \|^2 = \| \bm A_t \|_{\rm op} \leq L$.

To show \eqref{eq:exterior-global-variance-norm},  write 
\begin{align*}
\mb E\left[ \left( \mc L_{\bm A_t}-\mc L_{\bm\Sigma} \right)^2\right]
    = \mb E\left[ \mc L_{\bm A_t}^{\,2} \right] - \mc L_{\bm\Sigma}^{\,2} 
    \preceq \mb E\left[ \mc L_{\bm A_t}^{\,2} \right] \preceq L \cdot \mb E \left[ \mc L_{\bm A_t} \right]
\end{align*}
where the last bound follows from $\mc L_{\bm A_t}^2 = \| \bm X_t\|^2 \cdot \mc L_{\bm A_t}$ and $\| \bm X_t \|^2 \leq L$. Taking operator norms on both sides in the display above, we get \eqref{eq:exterior-global-variance-norm}. $\hfill$ $\square$

We split the proof of \eqref{k-PCA-F-bound} into several steps. Define
\begin{align*}
    \mfH_t:= \bm I_{\widetilde{\mb R}_{k,p}} + \eta_t \mc L _{\bm A_t} ; \qquad \mfB_t= \prod_{i=t}^1 \mfH_i; \qquad \mfB_0:=  \bm I_{\widetilde{\mb R}_{k,p}}.
\end{align*}
By Theorem \ref{thm: Plucker embedding}, the image of the Oja iterate $\bm Q_n$ under the embedding is 
\[
\bm \omega_n = \frac{\mfB_n \, \bm \omega_0}{\| \mfB_n \,  \bm \omega_0  \|_\wedge}.
\]

\noindent \underline{ \it Step 1: Removing the initialization.} Write
\begin{align*}
    \left\|
        \sin\Theta(\bm Q_n,\bm U_k^\star)
    \right\|_{\rm F}^2 \leq  \tan^2 \lb \bm \omega_n , \bm \omega_\star  \rb
    = \frac{1 - \left| \langle \bm \omega_\star, \bm \omega_n  \rangle_\wedge \right|^2 }{\left| \langle \bm \omega_\star, \bm \omega_n  \rangle_\wedge \right|^2} =  \frac{X_n}{Y_n \cdot \left| \langle \bm \omega_0, \bm \alpha_n  \rangle_\wedge \right|^2}
\end{align*}
where 
\[
X_n:= \bm \omega_0^\top \mfB_n^\top \mathsf{\bm P}_\star^\perp \mfB_n \bm \omega_0, \qquad Y_n:= \bm \omega_\star^\top \mfB_n \mfB_n^\top \bm \omega_\star, \qquad \bm \alpha_n:= \frac{ \mfB_n^\top \bm \omega_\star }{\| \mfB_n^\top \bm \omega_\star  \|_\wedge}.
\]
Put $D_k:= \binom{p}{k}$. By Proposition \ref{anti-concentration}, we have 
\[
\mb P \lb D_k \cdot \left| \langle \bm \omega_0, \bm \alpha_n  \rangle_\wedge \right|^2 \geq \frac{1}{r_k(\delta)} \, \Big|  \bm X_1,\dots, \bm X_n  \rb \geq 1 - \frac{\delta}{3}
\]
conditionally on the data $\bm X_i$'s.

Thus with probability at least $1-\delta/3$, 
\begin{align} \label{conditional bound-1}
 \tan^2 \lb \bm \omega_n , \bm \omega_\star  \rb  \leq D_k \cdot r_k(\delta) \cdot \frac{X_n}{Y_n}.
\end{align}

\noindent \underline{\it Step 2: High-probability bound on $X_n$.}
For $0\leq t\leq n$, define
\begin{align} \label{g}
    \mathsf{\bm G}_t
    :=
    \mb E\left[
        \mfB_t\mfB_t^\top
    \right],
    \qquad
    g_t
    :=
    \prod_{s=1}^t
    \left(
        1+\eta_s\mu_1
    \right)^2, 
    \qquad 
    g_0:=1.
\end{align}
Put
\[
    m_t
    :=
    \operatorname{tr}
    \left(
        \mathsf{\bm P}_\star^\perp
        \mathsf{\bm G}_t
    \right).
\]
Our goal is to bound $m_t$ via a recursion analysis. Independence gives
\[
    \mathsf{\bm G}_t
    =
    \mb E\left[
        \mfH_t
        \mathsf{\bm G}_{t-1}
        \mfH_t
    \right].
\]
Since
\[
    \mb E\left[
        \mfH_t^2
    \right]
    =
    \bm I_{\widetilde{\mb R}_{k,p}}
    +
    2\eta_t\mc L_{\bm\Sigma}
    +
    \eta_t^2
    \mb E\left[
        \mc L_{\bm A_t}^{\,2}
    \right],
\]
we obtain
\[
    \left\|
        \mb E\left[
            \mfH_t^2
        \right]
    \right\|_{\rm op}
    \leq
    1+2\eta_t\mu_1+\eta_t^2L\mu_1 \leq  \left(
        1+\eta_t\mu_1
    \right)^2
    \exp\left(
        L\mu_1\eta_t^2
    \right).
\]
Moreover, since $\mfG_{t-1} \preceq \| \mfG_{t-1} \|_{\rm op} \cdot \bm I_{\widetilde{\mb R}_{k,p}}$ and $\mfH_t$ is symmetric, it follows that
\begin{align*}
    \mfH_t \mfG_{t-1} \mfH_t = \mfH_t^\top \mfG_{t-1} \mfH_t \preceq \| \mfG_{t-1} \|_{\rm op} \cdot \mfH_t^2 
\end{align*}
Taking expectation, we deduce that
\[
\| \mfG_{t} \|_{\rm op} \leq \| \mfG_{t-1} \|_{\rm op} \cdot \left\| \mb E \left[ \mfH_t^2 \right] \right\|_{\rm op}
\leq  \| \mfG_{t-1} \|_{\rm op} \cdot \left(
        1+\eta_t\mu_1
    \right)^2
    \exp\left(
        L\mu_1\eta_t^2
    \right).
\]
It follows inductively that
\begin{equation}
    \left\|
        \mathsf{\bm G}_t
    \right\|_{\rm op}
    \leq
    g_t
    \exp\left(
        L\mu_1
        \sum_{s=1}^t\eta_s^2
    \right).
    \label{eq:total-second-moment}
\end{equation}
We now derive the recursion for $m_t$. For any
$\bm\xi\in\widetilde{\mb R}_{k,p}$, write
\[
    \bm\xi
    =
    \langle \bm \xi, \bm \omega_\star \rangle_\wedge \cdot\bm\omega_\star+\bm\xi_\perp,
    \qquad
    \bm\xi_\perp\perp\bm\omega_\star.
\]
 With $\bm E_t$ as in \eqref{Et}, Lemma~\ref{lem:sigma_k} and \eqref{eq:exterior-global-variance-norm} imply
\begin{align*}
    \bm\xi^\top
    \mb E\left[
        \mc L_{\bm E_t} \,
        \mathsf{\bm P}_\star^\perp \,
        \mc L_{\bm E_t}
    \right]
    \bm\xi
    &=
    \mb E\left\|
        \mathsf{\bm P}_\star^\perp \,
        \mc L_{\bm E_t}
        \left(
            \langle \bm \xi, \bm \omega_\star \rangle_\wedge \cdot \bm\omega_\star+\bm\xi_\perp
        \right)
    \right\|_\wedge^2
    \\
    &\leq
    2\langle \bm \xi, \bm \omega_\star \rangle_\wedge^2 \cdot 
    \mb E\left\|
        \mathsf{\bm P}_\star^\perp \,
        \mc L_{\bm E_t} \, 
        \bm\omega_\star
    \right\|_\wedge^2
    +
    2\mb E\left\|
        \mc L_{\bm E_t}
        \, \bm\xi_\perp
    \right\|_\wedge^2
    \\
    &\leq
    2\langle \bm \xi, \bm \omega_\star \rangle_\wedge^2 \cdot  \sigma_k^2
    +
    2L\mu_1 \cdot \|\bm\xi_\perp\|_\wedge^2
\end{align*}
where we have used the Cauchy--Schwarz inequality in the first estimate. Therefore,
\begin{equation}
        \mb E\left[
        \mc L_{\bm E_t} \,
        \mathsf{\bm P}_\star^\perp \,
        \mc L_{\bm E_t}
    \right]
    \preceq
    2\sigma_k^2\mathsf{\bm P}_\star
    +
    2L\mu_1\mathsf{\bm P}_\star^\perp.
    \label{eq:directional-variance-operator}
\end{equation}
Consequently, with $\bm E_t$ as in \eqref{Et} and by using $\mfH_t = \bm I_{\widetilde{\mb R}_{k,p}} + \eta_t \mc L_{\bm \Sigma} + \eta_t \mc L_{\bm E_t}$, we get
\begin{align*}
     \mb E\left[ \mfH_t\,
        \mathsf{\bm P}_\star^\perp\,
        \mfH_t
    \right]
    &= \mb E \left[ \lb  \bm I_{\widetilde{\mb R}_{k,p}} + \eta_t \mc L_{\bm \Sigma} + \eta_t \mc L_{\bm E_t} \rb \,  \mathsf{\bm P}_\star^\perp   
    \, \lb  \bm I_{\widetilde{\mb R}_{k,p}} + \eta_t \mc L_{\bm \Sigma} + \eta_t \mc L_{\bm E_t} \rb
    \right] \\
    &= \lb \bm I_{\widetilde{\mb R}_{k,p}} + \eta_t \mc L_{\bm \Sigma} \rb \,  \mathsf{\bm P}_\star^\perp\  \,
     \lb \bm I_{\widetilde{\mb R}_{k,p}} + \eta_t \mc L_{\bm \Sigma} \rb
     + \eta_t^2 \cdot \mb E \left[   \mc L_{\bm E_t} 
     \, \mathsf{\bm P}_\star^\perp \, \mc L_{\bm E_t} 
    \right] \\
    &+ \eta_t \cdot \underbrace{\mb E \left[ \lb  \bm I_{\widetilde{\mb R}_{k,p}} + \eta_t \mc L_{\bm \Sigma} \rb \,  \mathsf{\bm P}_\star^\perp \,  \mc L_{\bm E_t} \right]}_{= \bm 0}
    + \eta_t \cdot  \underbrace{\mb E \left[  \mc L_{\bm E_t} \, \mathsf{\bm P}_\star^\perp \, \lb  \bm I_{\widetilde{\mb R}_{k,p}} + \eta_t \mc L_{\bm \Sigma} \rb \right]}_{= \bm 0} \\
    &= \lb \bm I_{\widetilde{\mb R}_{k,p}} + \eta_t \mc L_{\bm \Sigma} \rb \,  \mathsf{\bm P}_\star^\perp\  \,
     \lb \bm I_{\widetilde{\mb R}_{k,p}} + \eta_t \mc L_{\bm \Sigma} \rb
     + \eta_t^2 \cdot \mb E \left[   \mc L_{\bm E_t} 
     \, \mathsf{\bm P}_\star^\perp \, \mc L_{\bm E_t} 
    \right] 
\end{align*}
Note that the two terms in the third line of the display above vanish because
\[
\mb E \left[ \mc L_{\bm E_t} \right] = \mc L_{\mb E \lft \bm E_t \rht} = \bm 0.
\]
Furthermore, since $\mathsf{\bm P}_\star^\perp$ commutes with $\mc L_{\bm \Sigma}$, we get from \eqref{eq:directional-variance-operator} that
\begin{align*}
\mb E\left[ \mfH_t\,
        \mathsf{\bm P}_\star^\perp\,
        \mfH_t
    \right]
   & = \mathsf{\bm P}_\star^\perp \, \lb \bm I_{\widetilde{\mb R}_{k,p}} + \eta_t \mc L_{\bm \Sigma}  \rb^2 \, \mathsf{\bm P}_\star^\perp + \eta_t^2 \cdot \mb E \left[   \mc L_{\bm E_t} 
     \, \mathsf{\bm P}_\star^\perp \, \mc L_{\bm E_t} 
    \right] \\
    & \preceq \lb 1 + \eta_t \mu_2 \rb^2 \cdot \mathsf{\bm P}_\star^\perp + \eta_t^2 \lb 2\sigma_k^2\mathsf{\bm P}_\star
    +
    2L\mu_1\mathsf{\bm P}_\star^\perp  \rb.
\end{align*}
Now, observe that
\begin{align*}
    m_t = \mb E \mbox{tr} \left[  \mathsf{\bm P}_\star^\perp \mfB_t  \mfB_t^\top\right]  = \mb E \mbox{tr} \left[ \mathsf{\bm P}_\star^\perp \mfH_t \mfB_{t-1} \mfB_{t-1}^\top \mfH_t  \right] 
    &= \mb E \mbox{tr} \left[ \mfB_{t-1} \mfB_{t-1}^\top \mfH_t  \mathsf{\bm P}_\star^\perp \mfH_t \right] \\
    &= \mbox{tr} \la \mfG_{t-1} \, \mb E\left[ \mfH_t\,
        \mathsf{\bm P}_\star^\perp\,
        \mfH_t
    \right] \ra
\end{align*}
Note that if $\bm A \preceq \bm B$ and $\bm C \succeq 0$, then $ \mbox{tr} \lb \bm A \bm C \rb \leq \mbox{tr} \lb \bm B \bm C \rb$, so 
\begin{equation}
\begin{aligned}
    m_t 
    \leq&
    \left[
        \left(
            1+\eta_t(\mu_1-\Delta_k)
        \right)^2
        +
        2L\mu_1\eta_t^2
    \right]
    m_{t-1}
    +
    2\sigma_k^2\eta_t^2
    \left\|
        \mathsf{\bm G}_{t-1}
    \right\|_{\rm op}.
\end{aligned}
\label{eq:orthogonal-mass-recursion-short}
\end{equation}
For every $t$,
\begin{align*}
    &\left(
        1+\eta_t(\mu_1-\Delta_k)
    \right)^2
    +
    2L\mu_1\eta_t^2
\leq
    \exp\left\{
        2\eta_t(\mu_1-\Delta_k)
        +
        3L\mu_1\eta_t^2
    \right\},
\end{align*}
where we used
$(\mu_1-\Delta_k)^2\leq\mu_1^2\leq L\mu_1$.
Also,
\[
    g_t
    \geq
    \exp\left\{
        2\mu_1\sum_{s=1}^t\eta_s
        -
        L\mu_1\sum_{s=1}^t\eta_s^2
    \right\},
\]
because $\log(1+x)\geq x-x^2/2$ for $x\geq0$.

Iterating \eqref{eq:orthogonal-mass-recursion-short}, using
\eqref{eq:total-second-moment}, and dividing by $g_n$, we obtain
\begin{align}
    \frac{m_n}{g_n}
    \leq
    C
    \exp\left(
        4L\mu_1
        \sum_{t=1}^n\eta_t^2
    \right)
    \Bigg[
        &
        D_k
        \exp\left(
            -2\Delta_k
            \sum_{t=1}^n\eta_t
        \right)
        +
        \sigma_k^2
        \sum_{i=1}^n
        \eta_i^2
        \exp\left(
            -2\Delta_k
            \sum_{j=i+1}^n\eta_j
        \right)
    \Bigg].
    \label{eq:expected-orthogonal-mass-final}
\end{align}
By the learning-rate condition
$L\mu_1\sum_{t=1}^n\eta_t^2\leq c\delta$, the first exponential
factor in \eqref{eq:expected-orthogonal-mass-final} is bounded by
a universal constant, provided $c>0$ is sufficiently small. Hence
\begin{equation}
\begin{aligned}
    \frac{m_n}{g_n}
    \lesssim
    &
    D_k
    \exp\left(
        -2\Delta_k
        \sum_{t=1}^n\eta_t
    \right)
    +
    \sigma_k^2
    \sum_{i=1}^n
    \eta_i^2
    \exp\left(
        -2\Delta_k
        \sum_{j=i+1}^n\eta_j
    \right).
\end{aligned}
\label{eq:expected-orthogonal-mass-simplified}
\end{equation}
Finally, Haar invariance gives
\[
    \mb E\left[
        \bm\omega_0\bm\omega_0^\top
    \right]
    =
    \frac1{D_k}
    \bm I_{\widetilde{\mb R}_{k,p}}.
\]
Thus
\[
    \mb E X_n
    =
    \frac{1}{D_k}
    \mb E\operatorname{tr}
    \left(
        \mathsf{\bm P}_\star^\perp
        \mfB_n\mfB_n^\top
    \right)
    =
    \frac{m_n}{D_k}.
\]
Therefore, by Markov's inequality,
\begin{equation}
    X_n
    \leq
    \frac{3m_n}{D_k\delta}
    \label{eq:Xn-high-probability}
\end{equation}
with probability at least $1-\delta/3$.

\noindent \underline{\it Step 3: High-probability bound on $Y_n$.}
Recall that
\[
    Y_n
    =
    \bm\omega_\star^\top
    \mfB_n\mfB_n^\top
    \bm\omega_\star.
\]
As in the moment computation in Step 2, we have 
\begin{align*}
    \mb E\left[
        \mfH_t
        \mathsf{\bm P}_\star
        \mfH_t
    \right]
    =
    \left(
        \bm I+\eta_t\mc L_{\bm\Sigma}
    \right)
    \mathsf{\bm P}_\star
    \left(
        \bm I+\eta_t\mc L_{\bm\Sigma}
    \right)
    +
    \eta_t^2
    \mb E\left[
        \mc L_{\bm E_t}
        \mathsf{\bm P}_\star
        \mc L_{\bm E_t}
    \right]
    \succeq
    \left(
        1+\eta_t\mu_1
    \right)^2
    \mathsf{\bm P}_\star.
\end{align*}
Therefore, with the same argument used in \eqref{eq:orthogonal-mass-recursion-short},
\begin{align*}
    \mb E Y_n
    =
    \operatorname{tr}
    \left(
        \mathsf{\bm P}_\star
        \mathsf{\bm G}_n
    \right) 
   &= \mb E \mbox{tr} \lb \mfB_{n-1} \mfB_{n-1}^\top \, \mfH_n \mathsf{\bm P}_{\star} \mfH_n  \rb  \\
   &= \mbox{tr} \lb \mfG_{n-1} \mb E \left[ \mfH_n \mathsf{\bm P}_{\star} \mfH_n  \right] \rb \\
   & \geq \lb 1 + \eta_n \mu_1 \rb^2 \mbox{tr} \lb \mathsf{\bm P}_\star
        \mathsf{\bm G}_{n-1} \rb 
    = \lb 1 + \eta_n \mu_1 \rb^2 \cdot \mb E Y_{n-1}
\end{align*}
Thus $\mb E Y_n \geq g_n$, where $g_n$ is as in \eqref{g}. We next bound the second moment of $Y_n$. Set
\[
    \bm v_n:=\bm\omega_\star,
    \qquad
    \bm v_{t-1}:=\mfH_t\bm v_t,
    \qquad
    t=n,n-1,\ldots,1.
\]
Then
\[
    \bm v_0
    =
    \mfB_n^\top\bm\omega_\star,
    \qquad
    Y_n=\|\bm v_0\|_\wedge^2.
\]
Moreover, $\bm v_t$ depends only on
$\mfH_{t+1},\ldots,\mfH_n$ and is therefore independent of
$\mfH_t$. By Cauchy--Schwarz,
\begin{align*}
    \|\mfH_t\bm v_t\|_\wedge^4
    &=
    \left(
        \bm v_t^\top
        \mfH_t^2
        \bm v_t
    \right)^2
    \leq
    \|\bm v_t\|_\wedge^2 \cdot 
    \bm v_t^\top \,
    \mfH_t^4 \,
    \bm v_t.
\end{align*}
Conditioning on $\bm v_t$ and iterating gives
\begin{equation}
    \mb E Y_n^2
    \leq
    \prod_{t=1}^n
    \left\|
        \mb E\left[
            \mfH_t^4
        \right]
    \right\|_{\rm op}.
    \label{eq:Yn-second-moment-product}
\end{equation}
Using \eqref{eq:rank-one-powers-exterior},
\begin{align*}
    \mb E\left[
        \mfH_t^4
    \right]
    &=
    \bm I
    +
    4\eta_t\mc L_{\bm\Sigma}
    +
    6\eta_t^2
    \mb E\left[
        \mc L_{\bm A_t}^{\,2}
    \right]
    +
    4\eta_t^3
    \mb E\left[
        \mc L_{\bm A_t}^{\,3}
    \right]
    +
    \eta_t^4
    \mb E\left[
        \mc L_{\bm A_t}^{\,4}
    \right].
\end{align*}
Since
\[
    \mb E\left[
        \mc L_{\bm A_t}^{\,\ell}
    \right]
    \preceq
    L^{\ell-1}
    \mc L_{\bm\Sigma},
    \qquad
    \ell=2,3,4,
\]
and $\eta_tL\leq1/4$, it follows that
\begin{align*}
    \left\|
         \mb E\left[
            \mfH_t^4
        \right]  
    \right\|_{\rm op}
    &\leq
    1
    +
    4\eta_t\mu_1
    +
    6\eta_t^2L\mu_1
    +
    4\eta_t^3L^2\mu_1
    +
    \eta_t^4L^3\mu_1
    \leq
    \exp\left(
        4\eta_t\mu_1
        +
        8L\mu_1\eta_t^2
    \right).
\end{align*}
Combining this with \eqref{eq:Yn-second-moment-product} yields
\[
    \mb E Y_n^2
    \leq
    \exp\left\{
        4\mu_1\sum_{t=1}^n\eta_t
        +
        8L\mu_1\sum_{t=1}^n\eta_t^2
    \right\}.
\]
On the other hand,
\[
    g_n^2
    =
    \prod_{t=1}^n
    \left(
        1+\eta_t\mu_1
    \right)^4
    \geq
    \exp\left\{
        4\mu_1\sum_{t=1}^n\eta_t
        -
        2L\mu_1\sum_{t=1}^n\eta_t^2
    \right\}.
\]
Since $\mb E Y_n\geq g_n$, we conclude that
\begin{equation}
    \frac{
        \operatorname{Var}(Y_n)
    }{
        \left(
            \mb E Y_n
        \right)^2
    }
    \leq
    \exp\left\{
        10L\mu_1
        \sum_{t=1}^n\eta_t^2
    \right\}
    -1.
    \label{eq:Yn-relative-variance}
\end{equation}
By choosing the universal constant $c$ in
\eqref{step size cond} sufficiently small,
\[
    \exp\left\{
        10L\mu_1
        \sum_{t=1}^n\eta_t^2
    \right\}
    -1
    \leq
    \frac{\delta}{12}.
\]
Chebyshev's inequality and $\mb E Y_n\geq g_n$ therefore imply
\begin{align}
    \mb P\left(
        Y_n<\frac{g_n}{2}
    \right)
    &\leq
    \mb P\left(
        |Y_n-\mb E Y_n|
        >
        \frac{\mb E Y_n}{2}
    \right)
    \leq
    4
    \frac{
        \operatorname{Var}(Y_n)
    }{
        (\mb E Y_n)^2
    }
    \leq
    \frac{\delta}{3}.
    \label{eq:Yn-high-probability}
\end{align}
Thus, with probability at least $1-\delta/3$, $Y_n\geq\frac{g_n}{2}$. Combining \eqref{conditional bound-1},
\eqref{eq:Xn-high-probability}, and \eqref{eq:Yn-high-probability}, and using a union bound, we obtain,
with probability at least $1-\delta$,
\begin{align*}
    \left\|
        \sin\bm\Theta
        \left(
            \bm Q_n,\bm U_k^\star
        \right)
    \right\|_{\rm F}^2
    \leq
    \tan^2
    \left(
        \bm\omega_n,\bm\omega_\star
    \right)
    \leq
    D_k \cdot r_k(\delta) \cdot 
    \frac{X_n}{Y_n}
    \leq
    \frac{6r_k(\delta)}{\delta} \cdot 
    \frac{m_n}{g_n}.
\end{align*}
The desired bound \eqref{k-PCA-F-bound} now follows from
\eqref{eq:expected-orthogonal-mass-simplified}.
\hfill$\square$

\section{Proof of Corollary \ref{cor:two-phase-kpca}}

\noindent\textbf{Proof of Corollary~\ref{cor:two-phase-kpca}.}
Put $N:=n-m$ and denote the constant learning rate in the first phase by
\[
\eta_{\rm w}
:=
\sqrt{
\frac{c_0\delta}{L\mu_1m}
}.
\]
We first verify the conditions of
Theorem~\ref{k-PCA convergence}, and then bound separately the
initialization term and the variance term in
\eqref{k-PCA-F-bound}.

\medskip
\noindent
\underline{\it Step 1: Verification of the learning-rate conditions.}
We begin with the first condition in \eqref{step size cond}. Write
\begin{align*}
    \max_{1\le t\le n}\eta_t = \max \la \eta_w, \max_{1\le s\le N} \eta_{m+s}  \ra \leq \max \la \sqrt{
\frac{c_0\delta}{L\mu_1m}
}, \, \max_{1\le s\le N} \frac{6}{\Delta_k(\beta+s)} \ra
\end{align*}
By definition,
\[
m \ge C_0
\frac{L\mu_1}{\delta\Delta_k^2} \cdot
k^2\log^2\left(\frac{ep}{k}\right),
\]
and thus 
\begin{align*}
\eta_{\rm w}
=
\sqrt{
\frac{c_0\delta}{L\mu_1m}
}
&\le
\sqrt{\frac{c_0}{C_0}}\cdot 
\frac{
\delta\Delta_k
}{
L\mu_1k\log(ep/k)
}
\leq \frac{1}{4L}
\end{align*}
if $C_0$ is sufficiently large relative to $c_0$ since $\Delta_k\le\lambda_k\le\mu_1$, $\delta \leq 1/4$
and $k\log(ep/k)\ge1$.

For the second term, writing $t=m+s$ with $1\le s\le N$, we have
\[
\eta_{m+s}
=
\frac{6}{\Delta_k(\beta+s)}
\le
\frac{6}{\Delta_k\beta}.
\]
Since $\beta \ge C_0\frac{L}{\Delta_k}$, it follows that
\[
\max_{1\le s\le N}\eta_{m+s}
\le
\frac{6}{C_0L}
\le
\frac{1}{4L}
\]
if $C_0$ is sufficiently large. Hence
\begin{equation}
\label{eq:cor-max-step}
\max_{1\le t\le n}\eta_t
\le
\frac{1}{4L}.
\end{equation}
This verifies the first condition in \eqref{step size cond}. We next control the sum of the squared learning rates. Write
\begin{align*}
L\mu_1\sum_{t=1}^n\eta_t^2 &= L\mu_1 \cdot \sum_{t=1}^m \eta_t^2 + L\mu_1 \cdot \sum_{s=1}^N \eta^2_{m+s} \\
&\leq L\mu_1 \cdot \frac{c_0 \delta}{L \mu_1} + L\mu_1 \cdot \frac{36}{\Delta_k^2} \sum_{s=1}^N\frac{1}{(\beta+s)^2} \\
&\leq c_0 \delta + L\mu_1 \cdot \frac{36}{\Delta_k^2}
\int_{\beta}^{\infty}\frac{dx}{x^2} = c_0 \delta + \frac{36L \mu_1}{\Delta_k^2 \, \beta}.
 \end{align*}
Moreover, $\beta \ge C_0 \cdot (L\mu_1)/({\delta\Delta_k^2})$,
and therefore
\[
L\mu_1\sum_{t=1}^n\eta_t^2 \le \left( c_0+\frac{36}{C_0}\right)\delta.
\]
The last expression involving $c_0$ and $C_0$ can be made arbitrarily small by choosing $c_0$ small enough and $C_0$ large enough. Together with \eqref{eq:cor-max-step}, this verifies all the learning-rate
conditions in \eqref{step size cond}.

\medskip
\noindent
\underline{\it Step 2: Error bound via Theorem \ref{k-PCA convergence}.}
By taking $C_1$
sufficiently large, condition
\eqref{eq:two-phase-kpca-sample-size} implies
\begin{equation}
\label{eq:cor-N-properties}
m\le\frac n2,
\qquad
N=n-m\ge\frac n2,
\qquad
N\ge C \cdot \beta^{6/5}
\end{equation}
for a sufficiently large universal constant $C>0$.

By Theorem \ref{k-PCA convergence}, we have, with probability at least
$1-\delta$,
\begin{align*}
\left\|
\sin\bm\Theta
\left(
\bm Q_n,\bm U_k^\star
\right)
\right\|_{\rm F}^2
&\lesssim
\frac{r_k(\delta)}{\delta} \cdot 
\left[ \binom pk \cdot \exp\lb -2\Delta_k\sum_{t=1}^n\eta_t \rb 
+
\sigma_k^2 \cdot \sum_{i=1}^n \eta_i^2 \cdot \exp\lb -2\Delta_k\sum_{j=i+1}^n\eta_j\rb
\right].
\end{align*}

To bound the first term inside the bracket, write 
\begin{align} \label{first term}
     \binom pk \cdot \exp\lb -2\Delta_k\sum_{t=1}^n\eta_t \rb &= \exp \la \log \binom pk   -2\Delta_k\sum_{t=1}^n\eta_t \ra \nonumber \\
     &\leq \exp \la  k \cdot \log \lb ep/k \rb - 2\Delta_k\cdot m \eta_w - 2\Delta_k \sum_{s=1}^N\eta_{m+s}  \ra \nonumber \\
     &\leq \exp \la k\cdot \log \lb \frac{ep}{k} \rb - 2\Delta_k \cdot \sqrt{\frac{c_0 \delta m}{L \mu_1}} - 12 \sum_{s=1}^N \frac{1}{\beta+s} \ra \nonumber \\
     &\leq \exp \la  k\cdot \log \lb \frac{ep}{k} \rb - 2\sqrt{c_0C_0}\cdot 
k\log\left(\frac{ep}{k}\right) - 12\log \lb \frac{\beta+N+1}{\beta+1} \rb \ra \nonumber \\
&\leq \left( \frac{\beta+1}{\beta+N+1} \right)^{12},
\end{align}
where the last bound follows if $c_0$ and $C_0$ are chosen such that $c_0C_0 \geq 1$.
By \eqref{eq:cor-N-properties},
\[
\beta^{12}
=
\left(\beta^{6/5}\right)^{10}
\lesssim
N^{10}.
\]
Hence
\begin{equation}
\label{eq:cor-initialization-bound}
\binom pk
\exp\left\{
-2\Delta_k\sum_{t=1}^n\eta_t
\right\}
\lesssim
\frac{1}{n^2}.
\end{equation}

We next bound the convolution term. Split the convolution term as
\[
\sum_{i=1}^n
\eta_i^2
\exp\left\{
-2\Delta_k\sum_{j=i+1}^n\eta_j
\right\}
= \underbrace{\sum_{i=1}^m \eta_i^2 \exp\left\{ -2\Delta_k\sum_{j=i+1}^n\eta_j\right\} }_{V_{\rm I}}
+ \underbrace{\sum_{s=1}^N \eta_{m+s}^2 \cdot  \exp\left\{ -2\Delta_k\sum_{j=s+1}^N \eta_{m+j} \right\}}_{ V_{\rm II}}
\]
Let us bound $V_{\rm I}$. The exponential term is largest when $i=m$, hence
\[
\begin{aligned}
V_{\rm I}
&\leq
\exp\left\{-2\Delta_k\sum_{j=m+1}^n\eta_j\right\}
\cdot \sum_{i=1}^m \eta_i^2 \\
&= O\lb n^{-2}\rb \cdot \frac{c_0\delta}{L\mu_1}
\lesssim \frac{\delta}{L\mu_1 n^2}
\leq \frac{\delta}{\Delta_k^2 n^2},
\end{aligned}
\]
where the $O(n^{-2})$ bound follows from the same argument as in
\eqref{first term}. The last two inequalities use
$L\geq\lambda_1\geq\Delta_k$,
$\mu_1\geq\lambda_k\geq\Delta_k$, and $\delta<1$.

To bound $V_{\rm II}$, writing $i=m+s$, we have
\begin{align*}
V_{\rm II}
&=
\frac{36}{\Delta_k^2}
\sum_{s=1}^N
\frac{1}{(\beta+s)^2}
\exp\left\{
-12
\sum_{\ell=s+1}^N
\frac{1}{\beta+\ell}
\right\}.
\end{align*}
For every $1\le s\le N$,
\begin{align*}
\sum_{\ell=s+1}^N\frac{1}{\beta+\ell}
&\ge
\int_{s+1}^{N+1}\frac{dx}{\beta+x}
=
\log\left(
\frac{\beta+N+1}{\beta+s+1}
\right).
\end{align*}
Hence
\begin{align*}
V_{\rm II}
&\le
\frac{36}{
\Delta_k^2(\beta+N+1)^{12}
}
\sum_{s=1}^N
\frac{
(\beta+s+1)^{12}
}{
(\beta+s)^2
}.
\end{align*}
Because $\beta+s\ge1$,
\[
\frac{
(\beta+s+1)^{12}
}{
(\beta+s)^2
}
\le
2^{12}(\beta+s)^{10}.
\]
It follows that
\begin{align*}
V_{\rm II}
&\le
\frac{C}{
\Delta_k^2(\beta+N+1)^{12}
}
\sum_{s=1}^N
(\beta+s)^{10}
\le
\frac{
CN(\beta+N)^{10}
}{
\Delta_k^2(\beta+N+1)^{12}
}
\le
\frac{C}{
\Delta_k^2(\beta+N+1)
}.
\end{align*}
Using $N\ge n/2$ from \eqref{eq:cor-N-properties}, we conclude that
\begin{equation*}
V_{\rm II}
\lesssim
\frac{1}{\Delta_k^2N}
\lesssim
\frac{1}{\Delta_k^2 \, n}.
\end{equation*}
Thus, the bounds on $V_{\rm I}$ and $V_{\rm II}$ give
\begin{equation}
\label{eq:cor-total-variance}
\sum_{i=1}^n
\eta_i^2
\exp\left\{
-2\Delta_k\sum_{j=i+1}^n\eta_j
\right\}
\lesssim
\frac{1}{\Delta_k^2 \, n}.
\end{equation}
Consequently, \eqref{eq:cor-initialization-bound} and
\eqref{eq:cor-total-variance} give, with probability at least
$1-\delta$,
\begin{align*}
\left\|
\sin\bm\Theta
\left(
\bm Q_n,\bm U_k^\star
\right)
\right\|_{\rm F}^2
&\lesssim
\frac{r_k(\delta)}{\delta}
\left[
\frac{1}{n^2}
+
\sigma_k^2
\frac{1}{\Delta_k^2n}
\right].
\end{align*}
This is precisely \eqref{eq:two-phase-kpca-rate} and completes the
proof.
\hfill$\square$

\section{Proof of Theorem \ref{thm:kpca-warm-start}}

For each $r \in [R]$, let $\bm W_0^{(r)},\bm W_1^{(r)},\ldots,\bm W_{m_0}^{(r)}$ denote the Oja iterates generated from the $r$-th data block, and write
\[
\bm W_t^{(r)}
=
[\bm w_{1,t}^{(r)},\ldots,\bm w_{k,t}^{(r)}].
\]
Apply Theorem~\ref{k-PCA convergence} directly to the original $k$-PCA iterates $\bm W_t^{(r)}$, with run length
$m_0$, constant learning rate $\eta_{\rm w}$, and confidence parameter $1/4$, to deduce that, for every $r=1,\ldots,R$,
\begin{align}
\mb P\Bigg\{ \left\| \sin\Theta \left(
\bm W_{m_0}^{(r)},
\bm U_k^\star
\right)
\right\|_{\rm F}^2
\nonumber 
\lesssim
r_k \lb  \frac 14 \rb \cdot 
\left[
\binom pk \cdot 
\exp\la
-2\Delta_km_0\eta_{\rm w}
\ra
+
\sigma_k^2\eta_{\rm w}^2
\cdot 
\sum_{\ell=0}^{m_0-1}
\exp\left\{
-2\ell \cdot \Delta_k\eta_{\rm w}
\right\}
\right]
\Bigg\}
\ge\frac34,
\label{eq:warm-single-run}
\end{align}
where $r_k(.)$ is as in Theorem \ref{k-PCA convergence}.

Since
$
\Delta_k\eta_{\rm w}
\le
L\eta_{\rm w}
\le 1/4
$,
we have
$
1-\exp\{-2\Delta_k\eta_{\rm w}\}
\ge
\Delta_k\eta_{\rm w}.
$
Therefore,
\begin{align*}
\eta_{\rm w}^2
\sum_{\ell=0}^{m_0-1}
\exp\left\{
-2\Delta_k\eta_{\rm w}\ell
\right\}
&\le
\frac{\eta_{\rm w}^2}
{1-\exp\{-2\Delta_k\eta_{\rm w}\}}
\le
\frac{\eta_{\rm w}}{\Delta_k}.
\end{align*}
It follows that
\begin{equation*}
\mb P\left\{
d^2\left(
\bm W_{m_0}^{(r)},\bm U_k^\star
\right)
\le a_n
\right\}
\ge\frac34,
\end{equation*}
where
\[
a_n
:=
C \cdot r_k(1/4) \cdot 
\left[
\binom pk
\exp\left\{
-2\Delta_km_0\eta_{\rm w}
\right\}
+
\frac{\sigma_k^2\eta_{\rm w}}{\Delta_k}
\right]
; \qquad 
d(\bm W,\bm U)
=
\|\sin\Theta(\bm W,\bm U)\|_{\rm F}.
\]
Proposition~\ref{prop:boosting} therefore gives
\[
\mb P\left\{
d^2\left(
\widehat{\bm U}_{\rm warm},
\bm U_k^\star
\right)
\le9a_n
\right\}
\ge1-\delta,
\]
since
$R =\left\lceil24\log\frac1\delta\right\rceil$.

Consequently,
\begin{align}
\left\|
\sin\Theta
\left(
\widehat{\bm U}_{\rm warm},
\bm U_k^\star
\right)
\right\|_{\rm F}^2
\le
C r_k(1/4)
\left[
\binom pk
\exp\left\{
-2\Delta_km_0\eta_{\rm w}
\right\}
+
\frac{\sigma_k^2\eta_{\rm w}}{\Delta_k}
\right]
\label{eq:warm-boosted-exact}
\end{align}
with probability at least $1-\delta$.

Finally, because $n\ge4R$ and $\eta_{\rm w}=c_\alpha n^{-\alpha}$, we have
\[
m_0\ge\frac{n}{4R}, \qquad R\le C\log\frac1\delta, \qquad 2\Delta_km_0\eta_{\rm w}
\ge \frac{ \bar c_\alpha\Delta_kn^{1-\alpha}}{\log(1/\delta)},
\qquad 
\frac{\sigma_k^2\eta_{\rm w}}{\Delta_k}
=
\frac{
c_\alpha\sigma_k^2
}{
\Delta_kn^\alpha
}.
\]
Substituting these two bounds into \eqref{eq:warm-boosted-exact} proves  \eqref{eq:kpca-warm-exact}. $\hfill$ $\square$

\section{Proof of Theorem \ref{thm:kpca-phase-two}}

Throughout the proof, $C>0$ denotes a universal constant and
$C_\alpha>0$ denotes a constant depending only on $\alpha$. Their
values may change from line to line.

Set
\[
\bm U_\star:=\bm U_k^\star,
\qquad
\bm U_\perp:=
[\bm u_{k+1},\ldots,\bm u_p],
\]
and
\begin{align} \label{eq:Lambda-diag}
\bm\Lambda_\star
:=
\operatorname{diag}
(\lambda_1,\ldots,\lambda_k),
\qquad
\bm\Lambda_\perp
:=
\operatorname{diag}
(\lambda_{k+1},\ldots,\lambda_p).
\end{align}
Recall that
\[
m=\left\lfloor\frac n2\right\rfloor,
\qquad
N=n-m,
\]
and hence
\begin{equation}
\label{eq:thm3-mN}
m\asymp n,
\qquad
N\asymp n.
\end{equation}

Let $\mc F_m$ contain the observations and all the randomizations
used in the warm-start phase, and, for $t\ge m+1$, define
\[
\mc F_t
:=
\sigma\left(
\mc F_m,\bm X_{m+1},\ldots,\bm X_t
\right).
\]
The observations used during Phase II are independent of $\mc F_m$.

\medskip
\noindent
\underline{\it Step 1: Verifying the warm start.}
Denote by $(\rm i)$--$(\rm iv)$ the four terms inside the maximum in
\eqref{eq:kpca-phase-two-sample-size}, in the order listed. Phase I is run at
confidence level $\delta/4$, so that
\[
R = \left\lceil 24\log\frac4\delta \right\rceil \lesssim \ell_\delta,
\qquad
m_0 = \left\lfloor\frac{m}{R}\right\rfloor \le n,
\qquad
\eta_{\rm w} = c_\alpha n^{-\alpha}.
\]
Note that $(\rm iv)$ reads $n^\alpha \ge C_\alpha^\alpha M_{n,\delta}$, and
$C_\alpha^\alpha$ can be made arbitrarily large by increasing $C_\alpha$; we use
this repeatedly below without further comment.

We first verify the hypotheses \eqref{eq:kpca-warm-basic-conditions} of
Theorem~\ref{thm:kpca-warm-start}. Term $(\rm i)$ gives $n\ge 4R$. Term
$(\rm iv)$, together with $M_{n,\delta}\ge L$, gives
$\eta_{\rm w} = c_\alpha n^{-\alpha}\le 1/(4L)$. Term $(\rm iii)$ gives
\[
L\mu_1 m_0 \cdot \eta_{\rm w}^2
\le
c_\alpha^2 \, L\mu_1 \cdot  n^{1-2\alpha}
\le
c_0 .
\]
Theorem~\ref{thm:kpca-warm-start}, applied with confidence parameter $\delta/4$,
therefore yields an event $\mc E_{\rm w}$ with
$\mb P(\mc E_{\rm w})\ge 1-\delta/4$ on which \eqref{eq:kpca-warm-exact} holds.

It remains to simplify the bound in \eqref{eq:kpca-warm-exact}. Term $(\rm ii)$ states
\[
\frac{\Delta_k n^{1-\alpha}}{\log(4/\delta)}
\ge
C_\alpha \left\{ k\log\left(\frac{ep}{k}\right) + \log(en) \right\},
\]
so, since $\log\binom pk \le k\log(ep/k)$, the initialization term in
\eqref{eq:kpca-warm-exact} obeys
\[
\binom pk
\exp\left\{ -\frac{c_\alpha \Delta_k n^{1-\alpha}}{\log(4/\delta)} \right\}
\le
n^{-\alpha}
\]
after increasing $C_\alpha$. Hence, on $\mc E_{\rm w}$,
\begin{align}
\left\| \sin\bm\Theta \lb \bm Q_m,\bm U_\star \rb \right\|_{\rm F}^2
\le
C_\alpha \, r_k(1/4) \lb 1+\frac{\sigma_k^2}{\Delta_k} \rb n^{-\alpha}
\le
C_\alpha \, M_{n,\delta} \, n^{-\alpha}
=: r_n ,
\label{eq:thm3-warm-bound}
\end{align}
the last inequality by \eqref{eq:phase-two-Mn}. Finally, applying $(\rm iv)$
once more, with its constant now taken large relative to the one appearing in
$r_n$, gives
\begin{equation}
\label{eq:thm3-local-smallness}
r_n\le c_0,
\qquad
L m^{-\alpha}\le\frac14 .
\end{equation}

\medskip
\noindent
\underline{\it Step 2: Maximal inequality via the Pl\"ucker embedding.}
Let $\bm\omega_t := \bm q_{1,t}\wedge\cdots\wedge\bm q_{k,t}$ denote the
Pl\"ucker image of $\bm Q_t = [\bm q_{1,t},\ldots,\bm q_{k,t}]$, put
$\bm\omega_\star := \bm u_1\wedge\cdots\wedge\bm u_k$, and let
\[
S_t^\wedge
:=
1-\left| \left\langle \bm\omega_t,\bm\omega_\star \right\rangle_\wedge \right|^2 .
\]
By Theorem~\ref{thm: Plucker embedding} the iterates $\bm\omega_t$ obey the
embedded rank-one recursion \eqref{embbeded Oja}, and $\eta_t L \le 1/4$ for
$t \ge m+1$ by \eqref{eq:thm3-local-smallness}. Lemma~\ref{lem:one-step-exterior}
therefore applies and yields the one-step inequality
\begin{align}
S_t^\wedge
\le{}&
\left(
1-2\Delta_k\eta_t
\right) \cdot S_{t-1}^\wedge
+
2\Delta_k\eta_t \cdot 
\left(S_{t-1}^\wedge\right)^2
+
Q_t
+
CL^2 \cdot \eta_t^2.
\label{eq:thm3-exterior-one-step}
\end{align}
where $\la Q_t \ra$ satisfies  \eqref{eq:one-step-Q}.

The goal of this step is to convert \eqref{eq:thm3-exterior-one-step} into a
bound on $S_t^\wedge$ that holds \emph{simultaneously} for all $m \le t \le n$.
Throughout we assume the three conditions
\begin{align}
\Delta_k m^{1-\alpha} \ge 8\alpha,
\qquad
\Delta_k m^{-\alpha} \le \frac12,
\qquad
H_n m^{-\alpha} \le \frac12,
\label{eq:thm3-max-conditions}
\end{align}
where
\[
H_n := C_\alpha M_{n,\delta},
\qquad
b_t := H_n \eta_t = H_n t^{-\alpha},
\qquad m \le t \le n,
\]
and $C_\alpha$ is chosen large enough that \eqref{eq:thm3-warm-bound} gives
\begin{align} \label{eq:thm3-initial-quarter}
S_m^\wedge \le \frac14 b_m .
\end{align}
The point of \eqref{eq:thm3-max-conditions} is to make the maximal inequality for the stopped process below valid. We assume these conditions hold, and then verify \eqref{eq:thm3-max-conditions} at the end of this step. Define the
stopping time
\[
\tau := \inf \la m \le t \le n : S_t^\wedge > b_t \ra,
\]
with the convention $\tau = \infty$ if the set is empty, and the stopped
processes
\[
\widetilde S_t := S_t^\wedge \cdot \bm 1_{\la t \le \tau \ra},
\qquad
\widetilde Q_t := Q_t \cdot \bm 1_{\la t \le \tau \ra}.
\]

\medskip
\noindent
\underline{\emph{Step 2a: the stopped recursion.}}
The third condition in \eqref{eq:thm3-max-conditions} gives $b_t \le 1/2$ for
all $t \in [m,n]$. Since $\la t \le \tau \ra \subset \la t-1 \le \tau \ra$ and
$\la t \le \tau \ra \subset \la S_{t-1}^\wedge \le b_{t-1} \ra$, multiplying
\eqref{eq:thm3-exterior-one-step} by $ \bm 1_{\la t \le \tau \ra}$ yields
 the one-step recursion 
\begin{align}
\widetilde S_t
&\le
\lb 1-2\Delta_k\eta_t \rb S_{t-1}^\wedge \bm 1_{\la t-1 \le \tau \ra}
+ 2\Delta_k\eta_t b_{t-1} S_{t-1}^\wedge \bm 1_{\la t-1 \le \tau \ra}
+ \widetilde Q_t + CL^2\eta_t^2
\nonumber \\
&\le
\lb 1-\Delta_k\eta_t \rb \widetilde S_{t-1} + \widetilde Q_t + CL^2\eta_t^2 .
\label{eq:thm3-stopped-recursion}
\end{align}
Note that \eqref{eq:thm3-stopped-recursion} holds for all $t \ge m+1$. Put
\[
a_t := 1-\Delta_k\eta_t,
\qquad
G_{i,t} := \prod_{j=i+1}^t a_j,
\qquad
G_{t,t} := 1 .
\]
The second condition in \eqref{eq:thm3-max-conditions} gives
$\Delta_k \eta_t \le 1/2$, hence $a_t \in [1/2,1]$ for all $t \in [m,n]$.
Iterating \eqref{eq:thm3-stopped-recursion} gives
\begin{align}
\widetilde S_t
\le
G_{m,t} \cdot S_m^\wedge
+
\sum_{i=m+1}^t G_{i,t} \cdot \widetilde Q_i
+
CL^2 \sum_{i=m+1}^t G_{i,t} \cdot \eta_i^2 .
\label{eq:thm3-exterior-unrolled}
\end{align}
There are three terms in the right-hand side of \eqref{eq:thm3-exterior-unrolled}: the first and third terms are deterministic conditionally on $S_m^\wedge$, and the second term is the terminal value of a martingale in the summation index for each fixed terminal time $t$. We will bound them separately in what follows.

\medskip
\noindent
\underline{\emph{Step 2b: bounding the deterministic terms in \eqref{eq:thm3-exterior-unrolled}.}} Only the first term and the third term are deterministic.

Consider the first term. Since $\alpha\in(1/2,1)$ and $\Delta_k m^{1-\alpha}\geq8\alpha$ by
\eqref{eq:thm3-max-conditions}, for every $j\geq m+1$,
\[
j^\alpha-(j-1)^\alpha
\leq \alpha m^{\alpha-1}
\leq \Delta_k.
\]
Consequently,
\[
0\leq a_j=1-\Delta_k \cdot j^{-\alpha}
\leq \left(\frac{j-1}{j}\right)^\alpha,
\]
and multiplying over $j=m+1,\ldots,t$ gives
\[
G_{m,t}
=\prod_{j=m+1}^t a_j
\leq \left(\frac{m}{t}\right)^\alpha.
\]
Using \eqref{eq:thm3-initial-quarter}, we therefore obtain
\begin{align}
G_{m,t}S_m^\wedge
&\leq
\left(\frac{m}{t}\right)^\alpha
\cdot\frac14 H_n m^{-\alpha}
=
\frac14 H_n t^{-\alpha}
=
\frac14 b_t.
\label{eq:thm3-kernel-init}
\end{align}
For the third term in \eqref{eq:thm3-exterior-unrolled}, we will show that
\begin{align}
\sum_{i=m+1}^t G_{i,t}\eta_i^2 \le \frac{2\eta_t}{\Delta_k} .
\label{eq:thm3-kernel-one}
\end{align}
Indeed, by using
$$G_{i,t}-G_{i-1,t} = \lb 1-a_i \rb G_{i,t} = \Delta_k \eta_i G_{i,t},$$
we get
$$G_{i,t}\eta_i^2 = \Delta_k^{-1}\eta_i \lb G_{i,t}-G_{i-1,t}\rb.$$
Therefore,
\[
\sum_{i=m+1}^t \eta_i \lb G_{i,t}-G_{i-1,t} \rb
=
\eta_t - \eta_{m+1}G_{m,t} + \sum_{i=m+1}^{t-1} \lb \eta_i-\eta_{i+1} \rb G_{i,t}
\le
\eta_t + \sum_{i=m+1}^{t-1} \lb \eta_i-\eta_{i+1} \rb G_{i,t} .
\]
Since $\eta_i - \eta_{i+1} \le \alpha i^{-\alpha-1}$ and, by the first
condition in \eqref{eq:thm3-max-conditions}, $\Delta_k i^{1-\alpha} \ge 8\alpha$
for $i \ge m$, we have $\eta_i-\eta_{i+1} \le \frac{\Delta_k}{2}\eta_i^2$.
Substituting and dividing by $\Delta_k$,
\[
\sum_{i=m+1}^t G_{i,t}\eta_i^2
\le
\frac{\eta_t}{\Delta_k} + \frac12 \sum_{i=m+1}^t G_{i,t}\eta_i^2 ,
\]
which is \eqref{eq:thm3-kernel-one}. Consequently,
\begin{align}
CL^2 \sum_{i=m+1}^t G_{i,t}\eta_i^2
\le
\frac{2CL^2 \eta_t}{\Delta_k}
\le
\frac14 b_t ,
\label{eq:thm3-kernel-var}
\end{align}
because $M_{n,\delta} \ge L^2\ell_{n,\delta}/\Delta_k \ge L^2/\Delta_k$ by
\eqref{eq:phase-two-Mn}, so $H_n \ge 8CL^2/\Delta_k$ after increasing
$C_\alpha$.

\medskip
\noindent
\underline{\emph{Step 2c: bounding the martingale term in \eqref{eq:thm3-exterior-unrolled}.}}
Since $\la t \le \tau \ra \in \mc F_{t-1}$, we have
$\mb E[\widetilde Q_t \mid \mc F_{t-1}] = 0$. Moreover, on
$\la i \le \tau \ra$ we have $S_{i-1}^\wedge \le b_{i-1} = H_n \eta_{i-1}$, so
\eqref{eq:one-step-Q} gives
\begin{align} \label{eq:thm3-stopped-Q-size}
\left| \widetilde Q_i \right|
\le
4L\eta_i \sqrt{H_n \eta_{i-1}} .
\end{align}
Fix $t \in [m+1,n]$ and set
\[
M_{s,t} := \sum_{i=m+1}^{s \wedge t} G_{i,t}\widetilde Q_i,
\qquad m \le s \le t,
\]
which is a martingale in $s$ with increments bounded, by
\eqref{eq:thm3-stopped-Q-size}, by
$4L\sqrt{H_n}\, G_{i,t}\eta_i\sqrt{\eta_{i-1}}$. We claim
\begin{align}
\sum_{i=m+1}^t G_{i,t}^2 \eta_i^2 \eta_{i-1}
\le
\frac{4\eta_t^2}{\Delta_k} .
\label{eq:thm3-kernel-two}
\end{align}
Indeed, since $\Delta_k\eta_i \le 1/2$,
\[
G_{i,t}^2 - G_{i-1,t}^2
=
\Delta_k\eta_i \lb 2-\Delta_k\eta_i \rb G_{i,t}^2
\ge
\Delta_k \eta_i G_{i,t}^2 .
\]
Putting $w_i := \eta_i\eta_{i-1}$, this gives
$G_{i,t}^2\eta_i^2\eta_{i-1} \le \Delta_k^{-1} w_i \lb G_{i,t}^2-G_{i-1,t}^2 \rb$,
and thus
\[
\sum_{i=m+1}^t G_{i,t}^2\eta_i^2\eta_{i-1}
\le
\frac{w_t}{\Delta_k}
+
\frac1{\Delta_k}\sum_{i=m+1}^{t-1} \lb w_i-w_{i+1} \rb G_{i,t}^2 .
\]
Since $w_i - w_{i+1} = \eta_i \lb \eta_{i-1}-\eta_{i+1} \rb
\le 2\alpha \eta_i (i-1)^{-\alpha-1}$ and $i-1 \ge i/2$ for $i \ge 2$, the
first condition in \eqref{eq:thm3-max-conditions} gives
$w_i - w_{i+1} \le \frac{\Delta_k}{2}\eta_i w_i$. Absorbing the resulting sum
into the left-hand side and using $w_t = \eta_t\eta_{t-1} \le 2\eta_t^2$ proves
\eqref{eq:thm3-kernel-two}. Therefore,
\[
\sum_{i=m+1}^t \left| M_{i,t}-M_{i-1,t} \right|^2
\le
16L^2 H_n \sum_{i=m+1}^t G_{i,t}^2\eta_i^2\eta_{i-1}
\le
\frac{64 L^2 H_n}{\Delta_k}\eta_t^2.
\]
Now, Azuma--Hoeffding's inequality gives, for every $x>0$,
\begin{align}
\mb P \lb
\left| M_{t,t} \right|
>
8\sqrt 2 \, L \eta_t \sqrt{\frac{H_n x}{\Delta_k}}
\;\Big|\; \mc F_m
\rb
\le
2e^{-x} .
\label{eq:thm3-azuma-fixed-t}
\end{align}

\medskip
\noindent
\underline{\emph{Step 2d: the maximal inequality.}}
Take $x := \log \lb 8n/\delta \rb \le \ell_{n,\delta}$ in
\eqref{eq:thm3-azuma-fixed-t} and apply a union bound over
$t = m+1,\ldots,n$. This produces an event $\mc E_{\rm max}$ with
\[
\mb P \lb \mc E_{\rm max}^c \mid \mc F_m \rb
\le
2n e^{-x}
=
\frac{\delta}{4},
\]
on which
\begin{align}
\left| \sum_{i=m+1}^t G_{i,t}\widetilde Q_i \right|
\le
8\sqrt 2\, L\eta_t \sqrt{\frac{H_n \ell_{n,\delta}}{\Delta_k}}
\le
\frac14 b_t
\label{eq:thm3-max-martingale}
\end{align}
simultaneously for all $m+1 \le t \le n$. The last inequality holds because
$M_{n,\delta} \ge L^2\ell_{n,\delta}/\Delta_k$, so that
$H_n \ge 2048\, L^2 \ell_{n,\delta}/\Delta_k$ after increasing $C_\alpha$.

Combining \eqref{eq:thm3-exterior-unrolled}, \eqref{eq:thm3-kernel-init},
\eqref{eq:thm3-kernel-var} and \eqref{eq:thm3-max-martingale}, we obtain, on
$\mc E_{\rm w} \cap \mc E_{\rm max}$,
\begin{align} \label{eq:thm3-stopped-three-quarters}
\widetilde S_t \le \frac34 b_t
\qquad \text{for all } m+1 \le t \le n,
\end{align}
while $\widetilde S_m = S_m^\wedge \le \frac14 b_m$ by
\eqref{eq:thm3-initial-quarter}. We claim $\tau > n$ on this event. Suppose
instead $\tau \le n$. By the definition of $\tau$,
$\widetilde S_\tau = S_\tau^\wedge > b_\tau$, whereas
\eqref{eq:thm3-stopped-three-quarters} gives
$\widetilde S_\tau \le \frac34 b_\tau < b_\tau$, a contradiction. Hence
$\tau > n$ and therefore $S_t^\wedge \le b_t$ for all $m \le t \le n$. Since
$m = \lfloor n/2 \rfloor \ge n/3$ for $n \ge 3$, we have
$t^{-\alpha} \le 3^\alpha n^{-\alpha}$ for $t \ge m$, and we conclude that
\begin{align} \label{eq:thm3-exterior-maximal}
\max_{m \le t \le n} S_t^\wedge
\le
3^\alpha H_n n^{-\alpha}
=
C_\alpha \cdot M_{n,\delta} \cdot n^{-\alpha}
=
r_n
\end{align}
on $\mc E_{\rm w} \cap \mc E_{\rm max}$.

\medskip
\noindent
\underline{\emph{Verification of \eqref{eq:thm3-max-conditions}.}}
Recall $m \asymp n$ by \eqref{eq:thm3-mN}. The second term in
\eqref{eq:kpca-phase-two-sample-size} gives
$\Delta_k n^{1-\alpha} \ge C_\alpha$, whence
$\Delta_k m^{1-\alpha} \ge 8\alpha$ after increasing $C_\alpha$. The last term
in \eqref{eq:kpca-phase-two-sample-size} gives
$n^\alpha \ge C_\alpha M_{n,\delta}$; since
$\Delta_k \le \lambda_k \le L \le M_{n,\delta}$, this yields
$\Delta_k m^{-\alpha} \le 1/2$, and since $H_n = C_\alpha M_{n,\delta}$ it also
yields $H_n m^{-\alpha} \le 1/2$, which is
\eqref{eq:thm3-max-conditions}.

\medskip
\noindent
\underline{\it Step 3: Constructing the recursion.}
Define
\begin{align} \label{eq:thm3-T-definition}
\bm T_t
:=
\bm U_\perp^\top\bm Q_t
\left(
\bm U_\star^\top\bm Q_t
\right)^{-1}
\in\mb R^{(p-k)\times k}.
\end{align}
Note that the singular values of \(\bm T_t\) are the tangents of the principal
angles. Thus on the event in \eqref{eq:thm3-exterior-maximal}, Proposition~\ref{prop:equivalance-error} gives
\begin{align}
\|\bm T_t\|_{\rm F}^2
=
\sum_{\ell=1}^k\tan^2\theta_{\ell,t}
&\le
\frac{
S_t^\wedge
}{
1-S_t^\wedge
}
\le
2r_n.
\label{eq:thm3-T-maximal}
\end{align}
Let us get a recursion for $\bm T_t$. From the QR decomposition, $ \bm B_t \bm Q_{t-1} = \bm Q_t \bm R_t$, with $\bm R_t$ being the upper triangular matrix with positive diagonal entries, so 
\[
\bm T_t = \bm U_\perp^\top\bm Q_t \left( \bm U_\star^\top\bm Q_t \right)^{-1}
=  \bm U_\perp^\top \bm Q_t \bm R_t  \lb \bm U_\star^\top \bm Q_t \bm R_t \rb^{-1}
= \bm U_\perp^\top \bm B_t \bm Q_{t-1} \cdot \lb \bm U_\star^\top \bm B_t \bm Q_{t-1} \rb^{-1}.
\]
For \(t\ge m+1\), put
\[
\bm a_t := \bm U_\star^\top\bm X_t,
\qquad
\bm b_t := \bm U_\perp^\top\bm X_t.
\]
Observe that $\bm X_t = \bm U_\star \bm a_t + \bm U_\perp \bm b_t$ and $\bm U_\perp^\top \bm Q_{t-1} = \bm T_{t-1} \lb \bm U_\star^\top \bm Q_{t-1} \rb $, so 
\[
\bm X_t^\top \bm Q_{t-1} = \bm a_t^\top \lb \bm U_\star^\top \bm Q_{t-1} \rb + \bm b_t^\top \bm T_{t-1} \lb \bm U_\star^\top \bm Q_{t-1} \rb
= \lb \bm a_t^\top + \bm b_t^\top \bm T_{t-1} \rb \cdot \lb \bm U_\star^\top \bm Q_{t-1} \rb.
\]
Thus
\[
\bm U_\star^\top \bm B_t \bm Q_{t-1} = \bm U_\star^\top \bm Q_{t-1} + \eta_t \cdot \bm U_\star^\top \bm X_t \bm X_t^\top \bm Q_{t-1}
= \left[ \bm I_k + \eta_t \cdot \bm a_t \lb \bm a_t^\top + \bm b_t^\top \bm T_{t-1} \rb  \right] \cdot \lb \bm U_\star^\top \bm Q_{t-1} \rb,
\]
and
\[
\bm U_\perp^\top \bm B_t \bm Q_{t-1} =  \bm U_\perp^\top \bm Q_{t-1} + \eta_t \cdot \bm U_\perp^\top \bm X_t \bm X_t^\top \bm Q_{t-1}
= \left[ \bm T_{t-1} + \eta_t \cdot  \bm b_t \lb \bm a_t^\top + \bm b_t^\top \bm T_{t-1} \rb  \right] \cdot \lb \bm U_\star^\top \bm Q_{t-1} \rb.
\]
Consequently,
\[
\bm T_t = \left[ \bm T_{t-1} + \eta_t \cdot  \bm b_t \lb \bm a_t^\top + \bm b_t^\top \bm T_{t-1} \rb  \right] \cdot  
\left[ \bm I_k + \eta_t \cdot \bm a_t \lb \bm a_t^\top + \bm b_t^\top \bm T_{t-1} \rb  \right]^{-1}.
\]
Now, recall the Sherman-Morrison formula (which is elementary and can be verified easily)
\[
\lb \bm I_k + \bm u \bm v^\top \rb^{-1} = \bm I_k - \frac{\bm u \bm v^\top}{1+ \bm v^\top \bm u}, 
\qquad \text{for all vectors $\bm u, \bm v \in \mb R^k$}.
\]
Applying the formula above for $\bm u \equiv \eta_ta_t$ and $\bm v \equiv \bm a_t+  \bm T_{t-1}^\top \bm b_t$, we obtain
\begin{align}
\bm T_t
=
\bm T_{t-1}
+
\frac{\eta_t}{d_t}
\left(
\bm b_t-\bm T_{t-1}\bm a_t
\right)
\left(
\bm a_t+\bm T_{t-1}^\top\bm b_t
\right)^\top,
\label{eq:thm3-exact-T}
\end{align}
where
\[
d_t
:=
1+
\eta_t
\bm a_t^\top
\left(
\bm a_t+\bm T_{t-1}^\top\bm b_t
\right).
\]
Recall $\bm \Lambda_\star$ and $\bm \Lambda_\perp$ in \eqref{eq:Lambda-diag}. For $\bm T \in \mb R^{(p-k) \times k}$, define 
\[
\mc G(\bm T)
:=
\bm T\bm\Lambda_\star
-
\bm\Lambda_\perp\bm T,
\qquad 
\bm\xi_t := \bm b_t\bm a_t^\top,
\qquad
\bm\zeta_t :=
\left( \bm b_t\bm b_t^\top-\bm\Lambda_\perp \right)\bm T_{t-1}
- \bm T_{t-1} \left( \bm a_t\bm a_t^\top-\bm\Lambda_\star \right).
\]
Here, we view $\mc G(.)$ as a linear operator $\mc G: \mb R^{(p-k)\times k} \to \mb R^{(p-k)\times k}$. Regarding $\mc G$ as a $(pk-k^2)\times(pk-k^2)$ square matrix, we have the entrywise equality
\[
[
\mc G(\bm T)
]_{j-k,i}
=
(\lambda_i-\lambda_j)T_{j-k,i},
\qquad
i\le k<j.
\]
Thus
\begin{equation}
\label{eq:thm3-G-inverse}
\|
\mc G^{-1}
\|_{{\rm F}\to{\rm F}}
\le
\Delta_k^{-1}.
\end{equation}
Expanding \eqref{eq:thm3-exact-T} gives
\begin{align*}
    \frac{\eta_t}{d_t} \cdot \left( \bm b_t-\bm T_{t-1}\bm a_t \right) \cdot \left( \bm a_t+\bm T_{t-1}^\top\bm b_t \right)^\top 
    = &  \frac{\eta_t}{d_t} \cdot \left[  \bm b_t \bm a_t^\top + \bm b_t \bm b_t^\top \cdot \bm T_{t-1} - \bm T_{t-1} \cdot \bm a_t \bm a_t^\top - \bm T_{t-1} \cdot \bm a_t \bm b_t^\top \cdot \bm T_{t-1}   \right] \\
    = & \frac{\eta_t}{d_t} \cdot    \bm b_t \bm a_t^\top + \frac{\eta_t}{d_t} \cdot \left[ \bm b_t \bm b_t^\top \cdot \bm T_{t-1} - \bm T_{t-1} \cdot \bm a_t \bm a_t^\top \right] -  \frac{\eta_t}{d_t} \cdot \left[ 
     \bm T_{t-1} \cdot \bm a_t \bm b_t^\top \cdot \bm T_{t-1}\right] \\
    =&  \frac{\eta_t}{d_t} \cdot \left[ \bm \zeta_t + \bm \xi_t - \mc G \lb \bm T_{t-1} \rb \right]  -  \frac{\eta_t}{d_t} \cdot \left[ 
     \bm T_{t-1} \cdot \bm a_t \bm b_t^\top \cdot \bm T_{t-1}\right].
\end{align*}
By the decomposition $1/d_t=1+\lb 1/d_t -1 \rb$, we can rewrite the above as
\begin{align}
\bm T_t
={}&
\bm T_{t-1}
-
\eta_t \cdot \mc G(\bm T_{t-1})
+
\eta_t\bm\xi_t
+
\eta_t\bm\zeta_t
+
\eta_t\bm{\mathcal R}_t,
\label{eq:thm3-linear-T}
\end{align}
where
\begin{align}
\bm{\mathcal R}_t
:={}&
\left(
\frac1{d_t}-1
\right)
\left[
\bm\xi_t
-
\mc G(\bm T_{t-1})
+
\bm\zeta_t
\right]
-
\frac1{d_t}
\bm T_{t-1}\bm a_t
\bm b_t^\top\bm T_{t-1}.
\label{eq:thm3-remainder}
\end{align}
The goal of the decomposition is to center the terms $\bm b_t \bm b_t^\top$ and $\bm a_t \bm a_t^\top$: since \(\bm U_\star\) and \(\bm U_\perp\) are eigenvector matrices, we have
\[
\mb E[
\bm b_t\bm a_t^\top]
=
\bm0,
\quad
\mb E[
\bm b_t\bm b_t^\top]
=
\bm\Lambda_\perp,
\quad
\mb E[
\bm a_t\bm a_t^\top]
=
\bm\Lambda_\star.
\]
Let us collect some properties of these terms in what follows. Observe that
\begin{equation}
\label{eq:thm3-xi-zeta-centered}
\mb E[
\bm\xi_t\mid\mc F_{t-1}]
=
\mb E[
\bm\zeta_t\mid\mc F_{t-1}]
=
\bm0,
\end{equation}
due to the centering effect. So $\la \bm \xi_t; m+1\leq t\leq n \ra$ and $\la \bm \zeta_t; m+1\leq t\leq n \ra$ are (matrix-valued) martingale difference sequences. 

To do the averaging, let us collect some useful bounds on these quantities. 
\begin{align}
    \max \la \|\bm a_t\|, \|\bm b_t\|\  \ra &\leq \sqrt{L}, \nonumber \\
    \|\bm\zeta_t\|_{\rm F} &\le 4L \cdot \|\bm T_{t-1}\|_{\rm F}, \label{eq:thm3-zeta-size} \\
    \mb E\left[ \|\bm\zeta_t\|_{\rm F}^2 \mid\mc F_{t-1} \right] &\le 4L\lambda_1 \|\bm T_{t-1}\|_{\rm F}^2, 
    \label{eq:thm3-zeta-variance} \\
\|\bm{\mathcal R}_t\|_{\rm F} &\le C\left[ L \cdot \|\bm T_{t-1}\|_{\rm F}^2 + L^2\cdot \eta_t \right] \label{eq:thm3-remainder-bound}.
\end{align}
The first bound follows from \(\|\bm X_t\|^2\le L\), and immediately yields \eqref{eq:thm3-zeta-size}. To get \eqref{eq:thm3-zeta-variance}, note that
\begin{align*}
\mb E\left[ \left( \bm b_t\bm b_t^\top-\bm\Lambda_\perp \right)^2 \right]
\preceq \mb E \left[  \|\bm b_t\|^2\bm b_t\bm b_t^\top \right] \preceq L\bm\Lambda_\perp \preceq L\lambda_1 \cdot \bm I,
\end{align*}
and the analogous inequality holds for \(\bm a_t\bm a_t^\top-\bm\Lambda_\star\).



On the event \eqref{eq:thm3-T-maximal}, we have 
\[
|d_t-1|
\le
\eta_tL
\left(
1+\|\bm T_{t-1}\|_{\rm F}
\right)
\le\frac12.
\]
Thus
\[
\left| \frac{1}{d_t} -1 \right| = \frac{|d_t-1|}{|d_t|} \leq 2|d_t-1| \leq 2\eta_tL \cdot \lb 1 + \| \bm T_{t-1} \|_{\rm F} \rb.
\]
Consequently, using this in \eqref{eq:thm3-remainder}, we obtain
\begin{align*}
    \|\bm{\mathcal R}_t\|_{\rm F} &\leq \left| \frac{1}{d_t} -1 \right| \cdot \left[ \| \bm \xi_t \|_{\rm F} +   \| \mc G(\bm T_{t-1}) \|_{\rm F} + \| \bm \zeta_t \|_{\rm F}   \right]
    + \frac{1}{d_t} \cdot \| \bm T_{t-1} \bm a_t
\bm b_t^\top\bm T_{t-1} \|_{\rm F} \\
&\lesssim \eta_tL \cdot L + \| \bm T_{t-1} \|_{\rm F} \cdot \| \bm a_t \| \cdot \| \bm b_t \| \cdot \| \bm T_{t-1} \|_{\rm F} \\
&\lesssim L^2 \cdot \eta_t + L \cdot  \| \bm T_{t-1} \|_{\rm F}^2,
\end{align*}
which is  \eqref{eq:thm3-remainder-bound}.

\medskip
\noindent
\underline{\it Step 4: Averaging decomposition.}
Rearranging \eqref{eq:thm3-linear-T}, applying
\(\mc G^{-1}\), and averaging over \(t=m+1,\ldots,n\) gives
\begin{align}
\frac1N
\sum_{t=m}^{n-1}\bm T_t
={}&
\underbrace{
\frac1N
\sum_{t=m+1}^n
\mc G^{-1}(\bm\xi_t)
}_{I}
+
\underbrace{
\frac1N
\sum_{t=m+1}^n
\mc G^{-1}(\bm\zeta_t)
}_{II}
\nonumber\\
&+
\underbrace{
\frac1N
\sum_{t=m+1}^n
\mc G^{-1}(\bm{\mathcal R}_t)
}_{III}
+
\underbrace{
\frac1N
\sum_{t=m+1}^n
\mc G^{-1}
\left(
\frac{
\bm T_{t-1}-\bm T_t
}{
\eta_t
}
\right)
}_{IV}.
\label{eq:thm3-PR-decomposition}
\end{align}

We bound the four terms above separately.

\medskip
\noindent
\underline{\emph{Term \(I\).}}
The matrices
\(\mc G^{-1}(\bm\xi_t)\) are independent and centered so term I is the average of i.i.d. matrices. Note that by \eqref{eq:thm3-G-inverse}, 
\[
\|
\mc G^{-1}(\bm\xi_t)
\|_{\rm F}
\le
\frac L{\Delta_k}.
\]
Furthermore,
\begin{align*}
\mb E\left[ \| \mc G^{-1}(\bm\xi_t) \|_{\rm F}^2 \right] 
&= \sum_{i=1}^k \sum_{j=k+1}^p 
\frac{ \mb E\left[ \left\{ \bm u_j^\top (\bm A_t-\bm\Sigma) \bm u_i \right\}^2 \right] }{ (\lambda_i-\lambda_j)^2},
\end{align*}
because 
\[
[\bm\xi_t]_{j-k,i}
=
\bm u_j^\top\bm A_t\bm u_i
=
\bm u_j^\top
(\bm A_t-\bm\Sigma)
\bm u_i
\qquad 
\text{for \(i\le k<j\). }
\]
Recall $\ell_\delta$ in \eqref{ell}. Applying Lemma \ref{Bernstein's inequality} with \(x=\ell_\delta\), we deduce that 
with probability at least \(1-\delta/8\),
\begin{align}
\|I\|_{\rm F}^2
\lesssim \left[
\frac{\ell_\delta}{N}
\sum_{i=1}^k
\sum_{j=k+1}^p
\frac{
\mb E\left[
\left\{
\bm u_j^\top
(\bm A_1-\bm\Sigma)
\bm u_i
\right\}^2
\right]
}{
(\lambda_i-\lambda_j)^2
}
+
\frac{
L^2\ell_\delta^2
}{
\Delta_k^2N^2
}
\right].
\label{eq:thm3-I-bound}
\end{align}

\medskip
\noindent
\underline{\emph{Term \(II\).}}
To avoid conditioning on a future event, define
\[
\tau_T := \inf\left\{ t\ge m: \|\bm T_t\|_{\rm F}^2>2r_n \right\},
\qquad 
\widetilde{\bm\zeta}_t := \bm\zeta_t \bm 1_{\{\tau_T\ge t\}}.
\]
Then
\[
\mb E[
\widetilde{\bm\zeta}_t
\mid\mc F_{t-1}]
=
\bm0.
\]
By \eqref{eq:thm3-zeta-size},
\eqref{eq:thm3-zeta-variance}, and
\eqref{eq:thm3-G-inverse},
\[
\|
\mc G^{-1}
(\widetilde{\bm\zeta}_t)
\|_{\rm F}
\le
\frac{
CL\sqrt{r_n}
}{
\Delta_k
}
\]
and
\[
\mb E\left[
\|
\mc G^{-1}
(\widetilde{\bm\zeta}_t)
\|_{\rm F}^2
\mid\mc F_{t-1}
\right]
\le
\frac{
CL\lambda_1r_n
}{
\Delta_k^2
}.
\]
Applying Lemma \ref{Bernstein's inequality} with the same $x=\ell_\delta$, where $\ell_\delta$ is as in \eqref{ell}, yields 
\begin{align*}
\left\| \frac1N \sum_{t=m+1}^n \mc G^{-1} (\widetilde{\bm\zeta}_t) \right\|_{\rm F}^2
\lesssim \left[ \frac{ L\lambda_1r_n\ell_\delta }{ \Delta_k^2N}
+ \frac{L^2r_n\ell_\delta^2}{\Delta_k^2N^2} \right]
\end{align*}
with probability at least \(1-\delta/8\).

On \(\mc E_{\rm max } \cap \mc E_w\) (recall that we define $\mc E_w$ in step 1 and $\mc E_{\rm max}$ in step 2d), we have  \(\tau_T>n\), so
\(\widetilde{\bm\zeta}_t=\bm\zeta_t\) for all \(t\le n\), and thus
\begin{align}
\left\| \frac1N \sum_{t=m+1}^n \mc G^{-1} (\widetilde{\bm\zeta}_t) \right\|_{\rm F}^2
\lesssim \left[ \frac{ L\lambda_1r_n\ell_\delta }{ \Delta_k^2N}
+ \frac{L^2r_n\ell_\delta^2}{\Delta_k^2N^2} \right]
\label{eq:thm3-II-bound}
\end{align}
with probability at least \(1-\delta/8\).

\medskip
\noindent
\underline{\emph{Term \(III\).}}
Using \eqref{eq:thm3-T-maximal}, \eqref{eq:thm3-G-inverse}, and \eqref{eq:thm3-remainder-bound}, we obtain 
\begin{align*}
\|III\|_{\rm F}
\le
\frac{C}{\Delta_k \cdot N}
\sum_{t=m+1}^n
\left[
Lr_n+L^2\eta_t
\right]
\le
\frac{C_\alpha}{\Delta_k}
\left[
Lr_n+L^2n^{-\alpha}
\right].
\end{align*}
Therefore,
\begin{equation}
\label{eq:thm3-III-bound}
\|III\|_{\rm F}^2
\le
\frac{C_\alpha}{\Delta_k^2}
\left[
L^2r_n^2+L^4n^{-2\alpha}
\right].
\end{equation}

\medskip
\noindent
\underline{\emph{Term \(IV\).}}
Note that since \(\eta_t=t^{-\alpha}\), we can write 
\begin{align*}
\sum_{t=m+1}^n \frac{ \bm T_{t-1}-\bm T_t}{\eta_t}
=(m+1)^\alpha\bm T_m - n^\alpha\bm T_n
+
\sum_{t=m+1}^{n-1} \left[ (t+1)^\alpha-t^\alpha \right]\bm T_t.
\end{align*}
By \eqref{eq:thm3-T-maximal}, the Frobenius norm of the right-hand side is bounded by $C_\alpha n^\alpha\sqrt{r_n}$.
Hence
\begin{equation} \label{eq:thm3-IV-bound}
\|IV\|_{\rm F}^2 \le \frac{ C_\alpha \cdot n^{2\alpha-2} \cdot r_n}{\Delta_k^2}.
\end{equation}
Combining
\eqref{eq:thm3-PR-decomposition}--\eqref{eq:thm3-IV-bound}, we obtain
\begin{align}
\left\|
\frac1N
\sum_{t=m}^{n-1}\bm T_t
\right\|_{\rm F}^2
\nonumber
&\le
C_\alpha \cdot \left[
\frac{\ell_\delta}{N} \cdot 
\sum_{i=1}^k
\sum_{j=k+1}^p
\frac{
\mb E\left[
\left\{
\bm u_j^\top
(\bm A_1-\bm\Sigma)
\bm u_i
\right\}^2
\right]
}{
(\lambda_i-\lambda_j)^2
}
\right] \\
&+ C_\alpha \cdot \left[
\frac{
L^2\ell_\delta^2
}{
\Delta_k^2N^2
}
+ 
\frac{
L\lambda_1r_n\ell_\delta
}{
\Delta_k^2N
}
+
\frac{
L^2r_n\ell_\delta^2
}{
\Delta_k^2N^2
}
+
\frac{
L^2r_n^2+L^4n^{-2\alpha}
}{
\Delta_k^2
}
+
\frac{
n^{2\alpha-2}r_n
}{
\Delta_k^2 
}
\right].
\label{eq:thm3-average-T}
\end{align}
Keep in mind that $r_n \asymp n^{-\alpha}$ and $\alpha \in (1/2,1)$, so the second term in the right-hand side of \eqref{eq:thm3-average-T} is of order $o(1/N)$.

\medskip
\noindent
\underline{\it Step 5: Bounding the sine-squared loss.}
Define
\[
\widehat{\bm T} := \bm U_\perp^\top \widehat{\bm U}_{\rm avg} \left( \bm U_\star^\top \widehat{\bm U}_{\rm avg} \right)^{-1}.
\]
Note that
\begin{align}
\left\| \sin\bm\Theta
\left( \widehat{\bm U}_{\rm avg}, \bm U_\star \right) \right\|_{\rm F}^2
&\le
\|\widehat{\bm T}\|_{\rm F}^2.
\label{eq:thm3-sine-tangent}
\end{align}
Therefore, to finish the proof, it suffices to show that
\begin{equation} \label{eq:thm3-procrustes-final}
\left\| \widehat{\bm T} - \frac1N \sum_{t=m}^{n-1}\bm T_t \right\|_{\rm F}^2
\lesssim r_n^3.
\end{equation}
Given \eqref{eq:thm3-sine-tangent} and \eqref{eq:thm3-procrustes-final}, the proof is completed by employing \eqref{eq:thm3-average-T}.

It remains to prove \eqref{eq:thm3-procrustes-final}, which is a deterministic
statement on the event \eqref{eq:thm3-T-maximal}. Put $\rho := \sqrt{2r_n}$, so that
\begin{align} \label{eq:thm3-rho}
\max_{m \le t \le n-1} \| \bm T_t \|_{\rm F} \le \rho,
\end{align}
and $\rho^2 \le 2c_0$ is smaller than any prescribed universal constant by
\eqref{eq:thm3-local-smallness}.

Let $\bm R_t$, $m+1 \le t \le n-1$, be the alignment matrices produced by
Algorithm~\ref{alg:kpca-averaging} with $t_0 = m$, and set $\bm R_m := \bm I_k$.
Define the aligned frames, their $\bm U_\star$-blocks and the running averages
\[
\bm Z_t := \bm Q_t \bm R_t,
\qquad
\bm \Gamma_t := \bm U_\star^\top \bm Z_t,
\qquad
\bm G_t := \bm U_\star^\top \overline{\bm Q}_t,
\qquad m \le t \le n-1,
\]
together with the convention $\bm G_{m-1} := \bm \Gamma_m$. Since
$\gamma_t = (t-m+1)^{-1}$, unrolling Algorithm~\ref{alg:kpca-averaging} shows that
the running average is a uniform average,
\begin{align} \label{eq:thm3-uniform-average}
\overline{\bm Q}_t = \frac{1}{t-m+1} \sum_{i=m}^t \bm Z_i,
\qquad \text{and} \qquad
\bm G_t = \frac{1}{t-m+1} \sum_{i=m}^t \bm \Gamma_i .
\end{align}
Also, $\bm U_\star \bm U_\star^\top + \bm U_\perp \bm U_\perp^\top = \bm I_p$ and
$\bm U_\perp^\top \bm Q_t = \bm T_t \lb \bm U_\star^\top \bm Q_t \rb$ give the
factorization
\begin{align} \label{eq:thm3-frame-factorization}
\bm Z_t = \lb \bm U_\star + \bm U_\perp \bm T_t \rb \bm \Gamma_t,
\qquad
\bm U_\perp^\top \bm Z_t = \bm T_t \bm \Gamma_t .
\end{align}
Note that with this notation, our final output has the form
\[
\bm{\hat{U}}_{\rm avg} = \mbox{polar} \lb \overline{\bm Q}_{n-1} \rb, 
\]
and that $\overline{\bm Q}_{n-1}$ is the unnormalized version of $\bm{\hat{U}}_{\rm avg}$. Define 
\[
\overline{\bm T} := \frac1N \sum_{t=m}^{n-1} \bm T_t.
\]
By \eqref{eq:thm3-uniform-average} and \eqref{eq:thm3-frame-factorization},
\[
\bm U_\star^\top \overline{\bm Q}_{n-1} = \bm G_{n-1},
\qquad
\bm U_\perp^\top \overline{\bm Q}_{n-1}
= \frac1N \sum_{t=m}^{n-1} \bm T_t \bm \Gamma_t .
\]
Thus
\begin{align} 
\widehat{\bm T} - \overline{\bm T}
=   \bm U_\perp^\top \,  \overline{\bm Q}_{n-1} \cdot  \left( \bm U_\star^\top \,  \overline{\bm Q}_{n-1} \right)^{-1}  - \overline{\bm T} \nonumber 
&= \left[  \frac1N \sum_{t=m}^{n-1} \bm T_t \bm \Gamma_t \right] \cdot  \bm G_{n-1}^{-1}  - \overline{\bm T} \nonumber \\
&=\left[
\frac1N \sum_{t=m}^{n-1} \bm T_t \lb \bm \Gamma_t - \bm G_{n-1} \rb
\right]
\bm G_{n-1}^{-1} . \nonumber
\end{align}
Consequently,
\begin{align} \label{eq:thm3-covariance-identity}
\widehat{\bm T} - \overline{\bm T}
=
\left[
\frac1N \sum_{t=m}^{n-1} \bm T_t \lb \bm \Gamma_t - \bm G_{n-1} \rb
\right]
\bm G_{n-1}^{-1} .
\end{align}
Write
\begin{align}
    \| \widehat{\bm T} - \overline{\bm T}  \|_{\rm F} &\leq \| \bm G_{n-1}^{-1} \|_{\rm op} 
    \cdot \frac 1N \sum_{t=m}^{n-1} \| \bm T_t \|_{\rm F} \cdot \|  \bm \Gamma_t - \bm G_{n-1}  \|_{\rm op} \nonumber \\
    &\leq \rho \cdot \| \bm G_{n-1}^{-1} \|_{\rm op}  \cdot  \frac 1N \sum_{t=m}^{n-1} \|  \bm \Gamma_t - \bm G_{n-1}  \|_{\rm op}
    \label{eq:averaging-bound}
\end{align}
where we have used the elementary bound 
\[
\left\| \bm A \bm B \bm C \right\|_{\rm F} \leq \| \bm A \|_{\rm F} \cdot \| \bm B \|_{\rm op} \cdot \| \bm C \|_{\rm op}.
\]
To finish the proof, we will bound two terms $ \| \bm G_{n-1}^{-1} \|_{\rm F}$ and $ \sum_{t=m}^{n-1} \|  \bm \Gamma_t - \bm G_{n-1}  \|_{\rm F}$ separately. Before doing so, let us collect some useful bounds. Write
\begin{align*}
    \bm I_k = \bm Z_t^\top \bm Z_t & = \bm \Gamma_t^\top \cdot \lb \bm U_\star + \bm U_\perp \bm T_t \rb^\top \cdot \lb \bm U_\star + \bm U_\perp \bm T_t \rb \cdot \bm \Gamma_t \\
    &= \bm \Gamma_t^\top \lb \bm I_k + \bm T_t^\top \bm T_t \rb \bm \Gamma_t.
\end{align*}
Consequently $\bm \Gamma_t^\top \bm \Gamma_t \preceq \bm I_k$, so
$\| \bm \Gamma_t \|_{\rm op} \le 1$, and
\[
\left\| \bm \Gamma_t^\top \bm \Gamma_t - \bm I_k \right\|_{\rm op}
= \left\| \bm \Gamma_t^\top \bm T_t^\top \bm T_t \bm \Gamma_t \right\|_{\rm op} \le \rho^2
\]
for all $t \in [m,n-1]$. 

Write $\bm O_t := \operatorname{polar}(\bm \Gamma_t)$, so that
$\bm \Gamma_t = \bm O_t \lb \bm \Gamma_t^\top \bm \Gamma_t \rb^{1/2}$.
Lemma~\ref{lem:psd-sqrt} applied with $\bm B = \bm I_k$ yields
\begin{align} \label{eq:thm3-Gamma-near-orth}
\left\| \bm \Gamma_t - \bm O_t \right\|_{\rm op} \le \rho^2 .
\end{align}
Also, by \eqref{eq:thm3-uniform-average},
\eqref{eq:thm3-frame-factorization} and \eqref{eq:thm3-rho}, we have
$\left\| \bm G_t \right\|_{\rm op} \le 1$ and
\begin{align} \label{eq:thm3-perp-block}
\left\| \bm U_\perp^\top \overline{\bm Q}_t \right\|_{\rm op}
\le \max_{m \le i \le t} \left( \| \bm T_i \|_{\rm op} \cdot  \| \bm \Gamma_i \|_{\rm op} \right)
\le \rho .
\end{align}

\noindent
\underline{\emph{Step 5a: Bounding  $ \| \bm G_{n-1}^{-1} \|_{\rm op}$.}}
Set 
\[\bm P_t := \lb \bm G_t^\top \bm G_t \rb^{1/2},\] 
so that $\bm P_t \preceq \bm I_k$, because $\left\| \bm G_t \right\|_{\rm op} \le 1$. We prove by induction that, for a universal
constant $c_1 \ge 1$ and all $m \le t \le n-1$,
\begin{align} \label{eq:thm3-induction}
\bm P_t \succeq \lb 1 - c_1 \rho^2 \rb \bm I_k,
\qquad \text{and} \qquad
\| \bm \Gamma_t - \bm G_{t-1} \|_{\rm op} \le 2c_1 \rho^2 ,
\end{align}
where $\rho$ is small enough that $c_1 \rho^2 \le 1/2$.

Note that once \eqref{eq:thm3-induction} is proved, we immediately get
\begin{align} \label{eq:thm3-Gn^-1-inverse}
    \left\| \bm G_{n-1}^{-1} \right\|_{\rm op} \leq 2. 
\end{align}
For $t=m$ both claims are immediate: $\bm G_m = \bm G_{m-1} = \bm \Gamma_m$, and
$$\bm P_m^2 = \bm \Gamma_m^\top \bm \Gamma_m \succeq \lb 1-\rho^2 \rb \bm I_k.$$
Let $t \ge m+1$, put $j := t-m+1$, and assume the claims in
\eqref{eq:thm3-induction} are true  at $t-1$; in particular
$\lambda_{\min}(\bm P_{t-1}) \ge 1/2$, so $\bm G_{t-1}$ is invertible and admits the
polar decomposition 

$$\bm G_{t-1} = \bm \Omega_{t-1} \bm P_{t-1} \qquad \text{with
$\bm \Omega_{t-1} \in O(k)$.}$$ 
Define $\bm P' := \lb \overline{\bm Q}_{t-1}^\top \overline{\bm Q}_{t-1} \rb^{1/2}$.
Then $\bm P'^2 = \bm P_{t-1}^2
+ \lb \bm U_\perp^\top \overline{\bm Q}_{t-1} \rb^\top
\lb \bm U_\perp^\top \overline{\bm Q}_{t-1} \rb$, so
$\bm P'^2 \succeq \bm P_{t-1}^2$ and, by \eqref{eq:thm3-perp-block},
$ \| \bm P'^2 - \bm P_{t-1}^2 \|_{\rm op} \le \rho^2$. Also, 
\[\lambda_{\rm min} \lb \bm P'  \rb \geq \lambda_{\rm min} \lb \bm P_{t-1}  \rb \geq 1/2.\] 
Note that the bound above also yields $\| \bm P'^{-1} \|_{\rm op} \le 2$. Lemma~\ref{lem:psd-sqrt} then gives
\[
\| \bm P' - \bm P_{t-1} \|_{\rm op} \le \frac{ \| \bm P'^2 - \bm P_{t-1}^2 \|_{\rm op}}{ \lambda_{\rm min} \lb \bm P'  \rb +\lambda_{\rm min} \lb \bm P_{t-1}  \rb  } \leq \frac{\rho^2}{1/2+1/2} = \rho^2. 
\]
Since $\widetilde{\bm Q}_{t-1} = \overline{\bm Q}_{t-1} \bm P'^{-1}$, we obtain
\begin{align} \label{eq:thm3-tilde-blocks}
\left\| \bm U_\star^\top \widetilde{\bm Q}_{t-1} - \bm \Omega_{t-1} \right\|_{\rm op}
= \left\| \bm G_{t-1} \,  \bm P'^{-1} -  \bm \Omega_{t-1}   \right\|_{\rm op}
&= \left\| \bm \Omega_{t-1} \bm P_{t-1} \,  \bm P'^{-1} -  \bm \Omega_{t-1}   \right\|_{\rm op} \nonumber \\
&=\left\| \bm P_{t-1} \bm P'^{-1} - \bm I_k \right\|_{\rm op} \le 2\rho^2,  \\
\left\| \bm U_\perp^\top \widetilde{\bm Q}_{t-1} \right\|_{\rm op} &\le 2\rho , \nonumber
\end{align}
where the last bound follows from \eqref{eq:thm3-perp-block}. 

By the definition of $\bm R_t$ in Algorithm~\ref{alg:kpca-averaging}, the matrix
$\bm Z_t^\top \widetilde{\bm Q}_{t-1} = \bm U_t \bm D_t \bm U_t^\top$ is symmetric
positive semi-definite. We claim that there exists $\bm F \in \mb R^{k \times k} $ such that
\[
\bm Z_t^\top \widetilde{\bm Q}_{t-1}
= \bm \Gamma_t^\top \bm U_\star^\top \widetilde{\bm Q}_{t-1}
+ \bm \Gamma_t^\top \bm T_t^\top \bm U_\perp^\top \widetilde{\bm Q}_{t-1}
= \bm O_t^\top \bm \Omega_{t-1} + \bm F,
\qquad \text{and} \qquad 
\| \bm F \|_{\rm op} \le 6\rho^2 .
\]
To see this, substitute $\bm \Gamma_t= \bm \Gamma_t-\bm O_t+\bm O_t$ and $\bm U_\star^\top \widetilde{\bm Q}_{t-1} = \bm \Omega_{t-1} + \bm E$ with $\| \bm E \|_{\rm op} \leq 2\rho^2$ into the above, then write
\begin{align*}
    \bm \Gamma_t^\top \bm U_\star^\top \widetilde{\bm Q}_{t-1} + \bm \Gamma_t^\top \bm T_t^\top \bm U_\perp^\top \widetilde{\bm Q}_{t-1}
    = \bm O_t^\top \bm \Omega_{t-1} + \bm F
\end{align*}
where
\[
\bm F:=  \bm \Gamma_t^\top \bm T_t^\top \bm U_\perp^\top \widetilde{\bm Q}_{t-1} +  \bm O_t^\top \bm E+ \lb \bm \Gamma_t - \bm O_t \rb^\top \cdot \lb \bm \Omega_{t-1} + \bm E \rb.
\]
The claim that $\| \bm F \|_{\rm op} \le 6\rho^2$ follows from 
\begin{align*}
    \left\|    \bm \Gamma_t^\top \bm T_t^\top \bm U_\perp^\top \widetilde{\bm Q}_{t-1} \right\|_{\rm op} &\leq 2\rho^2,
    \qquad 
    \left\|  \bm O_t^\top \bm E \right\|_{\rm op} \leq 2\rho^2, \\
        \left\|  \lb \bm \Gamma_t - \bm O_t \rb^\top \cdot \lb \bm \Omega_{t-1} + \bm E \rb \right\|_{\rm op}
        &\leq \rho^2(1+2\rho^2) \leq 2\rho^2. 
\end{align*}
Here we use \eqref{eq:thm3-Gamma-near-orth} and the triangle inequality to deduce the third estimate. 

Since $\bm O_t^\top \bm \Omega_{t-1}$ is orthogonal, Lemma~\ref{lem:orth-psd} gives
$\| \bm O_t^\top \bm \Omega_{t-1} - \bm I_k \|_{\rm op} \le 6\sqrt2 \rho^2$, and
therefore, using \eqref{eq:thm3-Gamma-near-orth} once more,
\begin{align} \label{eq:thm3-Gamma-Omega}
\| \bm \Gamma_t - \bm \Omega_{t-1} \|_{\rm op} \le 10 \rho^2.
\end{align}
Now, the second claim in \eqref{eq:thm3-induction} follows from
\[
\| \bm \Gamma_t - \bm G_{t-1} \|_{\rm op}  \leq \| \bm \Gamma_t - \bm \Omega_{t-1} \|_{\rm op} + \| \bm \Omega_{t-1} - \bm G_{t-1}\|_{\rm op} \leq 10\rho^2 +c_1 \rho^2 \leq 2c_1\rho^2,
\]
which holds as long as $c_1 \geq 10$.

For the first claim, \eqref{eq:thm3-uniform-average} gives
\[
\bm \Omega_{t-1}^\top \bm G_t
=
\frac{j-1}{j} \cdot \bm P_{t-1}
+
\frac1j \cdot  \bm \Omega_{t-1}^\top \bm \Gamma_t .
\]
Its symmetric part is bounded below, by the induction hypothesis and
\eqref{eq:thm3-Gamma-Omega}, by
\[\left[ \frac{j-1}{j} \lb 1-c_1\rho^2 \rb + \frac1j \lb 1-c_1\rho^2 \rb \right] \bm I_k
= \lb 1 - c_1 \rho^2 \rb \bm I_k.\] 
Consequently,
\[
\lambda_{\min}(\bm P_t) = \sigma_{\min}(\bm G_t)
= \sigma_{\min} \lb \bm \Omega_{t-1}^\top \bm G_t \rb
\ge 1 - c_1 \rho^2 .
\]
This finishes the induction.

\noindent
\underline{\emph{Step 5b: Bounding  $ \sum_{t=m}^{n-1} \|  \bm \Gamma_t - \bm G_{n-1}  \|_{\rm op}$.}}
Write
\[
\frac1N \sum_{t=m}^{n-1} \| \bm \Gamma_t - \bm G_{n-1} \|_{\rm op}
\le
\frac1N \sum_{t=m}^{n-1}
\left[ \| \bm \Gamma_t - \bm G_{t-1} \|_{\rm op}
+ \| \bm G_{t-1} - \bm G_{n-1} \|_{\rm op} \right] \leq 2c_1\rho^2 + \frac 1N 
 \sum_{t=m}^{n-1} \| \bm G_t - \bm G_{t-1} \|_{\rm op}.
\]
where the second inequality follows from \eqref{eq:thm3-induction}. 

For $i \ge m+1$ we have $\bm G_i - \bm G_{i-1} = \frac{1}{i-m+1} \lb \bm \Gamma_i - \bm G_{i-1} \rb$, and \eqref{eq:thm3-induction} again gives
\[
(i-m+1) \cdot \| \bm G_i - \bm G_{i-1} \|_{\rm op} \leq 2c_1\rho^2. 
\]
Consequently,
\[
\sum_{t=m}^{n-1} \| \bm G_{t-1} - \bm G_{n-1} \|_{\rm op}
\le
\sum_{t=m}^{n-1} \sum_{i=t}^{n-1} \| \bm G_i - \bm G_{i-1} \|_{\rm op}
=
\sum_{i=m}^{n-1} (i-m+1) \| \bm G_i - \bm G_{i-1} \|_{\rm op}
\le
2c_1 N \rho^2 .
\]
Thus
\[
\frac1N \sum_{t=m}^{n-1} \| \bm \Gamma_t - \bm G_{n-1} \|_{\rm op}
\le
\frac1N \sum_{t=m}^{n-1}
\left[ \| \bm \Gamma_t - \bm G_{t-1} \|_{\rm op}
+ \| \bm G_{t-1} - \bm G_{n-1} \|_{\rm op} \right]
\le
4c_1 \rho^2 .
\]
Now, substituting the above into \eqref{eq:averaging-bound} and using  \eqref{eq:thm3-Gn^-1-inverse}, we get
\[
\left\| \widehat{\bm T} - \overline{\bm T} \right\|_{\rm F} 
\le
2 \cdot \frac1N \sum_{t=m}^{n-1}
\| \bm T_t \|_{\rm F} \cdot  \| \bm \Gamma_t - \bm G_{n-1} \|_{\rm op}
\le
8 c_1 \rho^3 .
\]
Squaring and substituting $\rho^2 = 2r_n$ proves
\eqref{eq:thm3-procrustes-final}. 

Finally, from \eqref{eq:thm3-average-T}, our final bound is of the form 
\begin{align*}
\left\|
\sin\bm\Theta
\left(
\widehat{\bm U}_{\rm avg},
\bm U_k^\star
\right)
\right\|_{\rm F}^2
\le
C_\alpha
\left[
\frac{\ell_\delta}{n}
\sum_{i=1}^k
\sum_{j=k+1}^p
\frac{
\mb E\left[
\left\{
\bm u_j^\top
(\bm A_1-\bm\Sigma)
\bm u_i
\right\}^2
\right]
}{
(\lambda_i-\lambda_j)^2
}
+
\mc E_n
\right]
\end{align*}
with $\mc E_n =o(1/n)$. The warm-start event fails with probability at most \(\delta/4\),
the maximal event fails conditionally with probability at most
\(\delta/4\), and the two Bernstein events fail with total
probability at most \(\delta/4\). A union bound therefore gives the
asserted probability at least \(1-\delta\). This proves
\eqref{eq:kpca-phase-two-final}.

\hfill $\square$

\section{Technical results}

\subsection{Proof of Theorem \ref{thm:eca}} \label{sec:eca-proof}

The proof consists of verifying that the transformed data satisfy the
assumptions of Section~\ref{sec:k-PCA-convergence} with universal constants, and
then invoking Theorem~\ref{thm:kpca-phase-two}. Write
\[
\widetilde{\bm\Sigma}
:=
\mb E\left[ \bm{\mc X}_1\bm{\mc X}_1^\top \right],
\qquad
\bm{\mc A}_t := \bm{\mc X}_t\bm{\mc X}_t^\top .
\]
By Lemma~\ref{lem:eca-sign-data}, the stream $\bm{\mc X}_1,\ldots,\bm{\mc X}_n$
satisfies the assumptions of Section~\ref{sec:k-PCA-convergence} with
\[
M=1, \qquad L\le2, \qquad \mu_1\le1, \qquad \lambda_1(\widetilde{\bm\Sigma})\le1, \qquad \sigma_k^2\le1,
\]
target $\bm U_k^\star$ and eigengap $\widetilde\Delta_k>0$. 

Substituting these
into \eqref{eq:phase-two-Mn} gives $M_{n,\delta}\lesssim\widetilde
M_{n,\delta}$, and also substituting these bounds into
\eqref{eq:kpca-phase-two-error-terms}--\eqref{eq:kpca-phase-two-remainder} gives
$\mc E_n\lesssim\widetilde{\mc E}_n$.
Finally, note that
$$(L\mu_1)^{1/(2\alpha-1)}\le2^{1/(2\alpha-1)}\lesssim_\alpha1 \qquad \text{and} \qquad \ell_\delta\le\widetilde M_{n,\delta}^{1/\alpha},$$ 
so
\eqref{eq:eca-sample-size} implies
\eqref{eq:kpca-phase-two-sample-size}. 

Theorem~\ref{thm:kpca-phase-two} now
gives \eqref{eq:eca-final}, the variance term being the one computed in
Lemma~\ref{lem:eca-sign-data}\emph{(iv)}.  The proof is completed. $\hfill\square$

\subsection{Technical lemmas}

\begin{lemma}[Bernstein's inequality] \label{Bernstein's inequality}
 Suppose $\bm Y_1,\ldots,\bm Y_q$ are martingale differences in a finite-dimensional Hilbert space $(\mb H, \|.\|)$   with respect to some filtration $\la  \mc F_t; t \geq 1\ra$. Assume that
\[
\|\bm Y_s\|\le b, \qquad \text{and} \qquad 
\sum_{s=1}^q
\mb E\left[
\|\bm Y_s\|^2
\mid\mc F_{s-1}
\right]
\le v.
\]
Then, for every $x>0$, with probability at least $1-2e^{-x}$,
\begin{equation*}
\left\| \sum_{s=1}^q\bm Y_s \right\| \le C\left( \sqrt{vx}+bx\right)
\end{equation*}
for some universal constant $C>0$.
\end{lemma}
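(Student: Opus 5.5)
The plan is to reduce the Hilbert-space statement to a scalar supermartingale argument via Pinelis' technique. Since $\mb H$ is finite-dimensional, hence a $(2,1)$-smooth Banach space, the cleanest route is Pinelis' inequality for martingales in $2$-smooth spaces. I would nonetheless argue directly. Put $\bm S_t:=\sum_{s\le t}\bm Y_s$, and for $\lambda>0$ consider the function $u(\bm x):=\cosh(\lambda\|\bm x\|)$.

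The main step is a one-step bound on the conditional increment of $u$. The key fact, which follows from a second-order Taylor expansion of $\bm x\mapsto\cosh(\lambda\|\bm x\|)$ (as in Pinelis 1994, Theorem 3), is
\[
\mb E\left[\cosh(\lambda\|\bm S_{t-1}+\bm Y_t\|)\mid\mc F_{t-1}\right]\le \cosh(\lambda\|\bm S_{t-1}\|)\cdot\left(1+\mb E\left[e^{\lambda\|\bm Y_t\|}-1-\lambda\|\bm Y_t\|\mid\mc F_{t-1}\right]\right).
\]
Two facts drive this. The first-order term vanishes because $\mb E[\bm Y_t\mid\mc F_{t-1}]=\bm 0$ and the derivative of $u$ at $\bm S_{t-1}$ is $\mc F_{t-1}$-measurable. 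The Hessian of $\cosh(\lambda\|\cdot\|)$ along any direction $\bm h$ is bounded by $\lambda^2\|\bm h\|^2\cosh(\lambda\|\bm x\|)$, using $\sinh(r)/r\le\cosh(r)$. Integrating the Taylor remainder along the segment $\bm S_{t-1}+s\bm Y_t$ and using $\cosh(a+b)\le\cosh(a)e^{|b|}$ gives the bound above. This Hessian computation, in particular the radial/tangential split of the Hessian of a function of the norm, is where I expect most of the care to be needed.

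Given $\|\bm Y_t\|\le b$, the elementary bound $e^{y}-1-y\le \frac{y^2}{2}\,\frac{1}{1-y/3}$ for $0\le y<3$ gives
\[
\mb E\left[e^{\lambda\|\bm Y_t\|}-1-\lambda\|\bm Y_t\|\mid\mc F_{t-1}\right]\le \frac{\lambda^2}{2(1-\lambda b/3)}\,\mb E\left[\|\bm Y_t\|^2\mid\mc F_{t-1}\right]=:\lambda^2 c_\lambda V_t,
\]
with $\lambda<3/b$. Next define $V_t:=\mb E[\|\bm Y_t\|^2\mid\mc F_{t-1}]$, which is $\mc F_{t-1}$-measurable. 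Then
\[
W_t:=\cosh(\lambda\|\bm S_t\|)\exp\left(-\lambda^2c_\lambda\sum_{s\le t}V_s\right)
\]
is a nonnegative supermartingale with $W_0=1$, since $1+x\le e^x$. Because $\sum_s V_s\le v$, Markov's inequality combined with $e^{r}\le 2\cosh r$ yields
\[
\mb P(\|\bm S_q\|\ge r)\le 2\exp\left(-\lambda r+\frac{\lambda^2 v}{2(1-\lambda b/3)}\right).
\]

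Finally, optimizing over $\lambda$ exactly as in the scalar Bernstein inequality gives $\mb P(\|\bm S_q\|\ge r)\le 2\exp\{-r^2/(2(v+br/3))\}$. Solving for $r$ shows that $r=\sqrt{2vx}+\tfrac{2}{3}bx$ makes the right-hand side at most $2e^{-x}$, which proves the claim with $C=2$. The only genuine obstacle is the smoothness and Taylor step in the first display. Everything after it is the standard scalar Bernstein argument.
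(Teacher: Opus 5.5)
Your proposal is correct and follows essentially the same route as the paper. The paper simply cites Pinelis (1994, Theorem 3.4) for the Bennett-type maximal bound $\mb P(\max_{j\le q}\|\bm S_j\|\ge t)\le 2\exp\{-\tfrac{v}{b^2}h(bt/v)\}$ and then relaxes it via $h(u)\ge u^2/(2(1+u/3))$; you instead re-derive the underlying $\cosh(\lambda\|\cdot\|)$ supermartingale step yourself (your Hessian bound and Taylor-remainder computation are right) and go directly to the Bernstein exponent, reaching the same threshold $\sqrt{2vx}+\tfrac23 bx$ and losing only the maximal form over $j\le q$, which the statement does not need.
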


\noindent \textbf{Proof of Lemma \ref{Bernstein's inequality}.}
Let $\bm S_j:= \sum_{s=1}^j \bm Y_s$. Theorem 3.4 in \cite{pinelis1994optimum} yields
\[
\mb P \lb \max_{1\leq j \leq q} \| \bm S_j \| \geq t \rb \leq 2 \exp \la -\frac{v}{b^2} \cdot h \lb \frac{bt}{v} \rb \ra
\]
where $h(u):= (1+u) \log (1+u) - u$.

By using the elementary inequality
\[
h(u) \geq \frac{u^2}{2(1+u/3)}, \qquad u\geq 0,
\]
this implies
\[
\mb P \lb \max_{1\leq j \leq q} \| \bm S_j \| \geq t \rb \leq 2 \cdot \exp \la - \frac{t^2}{2(v+bt/3)} \ra.
\]
Consequently, with probability at least $1-2e^{-x}$,
\[
 \| \bm S_q \|  \leq \max_{1\leq j \leq q} \| \bm S_j \| \leq \sqrt{2vx} + \frac{2bx}{3}. 
\]
Note that this is stronger than what we need. The proof is completed. $\hfill$ $\square$

\begin{lemma} \label{lem:psd-sqrt}
Let $\bm A,\bm B\in\mb R^{k\times k}$ be symmetric positive semi-definite matrices such that
$$\lambda_{\min}(\bm A)+\lambda_{\min}(\bm B)>0.$$ 
Then
\[
\| \bm A - \bm B \|_{\rm op}
\le
\frac{\| \bm A^2 - \bm B^2 \|_{\rm op}}
{\lambda_{\min}(\bm A)+\lambda_{\min}(\bm B)} .
\]
\end{lemma}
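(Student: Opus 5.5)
We prove Lemma~\ref{lem:psd-sqrt} with a Sylvester-equation argument. Set $\bm D := \bm A - \bm B$, which is symmetric. The first step is the algebraic identity
\[
\bm A^2 - \bm B^2 = \bm A \bm D + \bm D \bm B ,
\]
which holds because $\bm A\bm D + \bm D\bm B = \bm A^2 - \bm A\bm B + \bm A\bm B - \bm B^2$. The plan is to test this identity against an eigenvector of $\bm D$ attaining its operator norm.

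Since $\bm D$ is symmetric, choose a unit vector $\bm x$ with $\bm D\bm x = d\,\bm x$ and $|d| = \|\bm D\|_{\rm op}$. Then
\[
\bm x^\top(\bm A^2-\bm B^2)\bm x = \bm x^\top \bm A \bm D \bm x + \bm x^\top \bm D \bm B \bm x = d\,\bm x^\top \bm A \bm x + d\,\bm x^\top \bm B \bm x ,
\]
where the second term uses $\bm x^\top \bm D = d\,\bm x^\top$, which follows from the symmetry of $\bm D$. Hence
\[
|d| \lb \bm x^\top \bm A\bm x + \bm x^\top \bm B \bm x\rb = |\bm x^\top(\bm A^2-\bm B^2)\bm x| \le \|\bm A^2-\bm B^2\|_{\rm op}.
\]

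Positive semi-definiteness gives the Rayleigh quotient lower bounds $\bm x^\top \bm A\bm x \ge \lambda_{\min}(\bm A)$ and $\bm x^\top \bm B\bm x \ge \lambda_{\min}(\bm B)$. The sum of these lower bounds is positive by assumption, so we can divide by it. This yields
\[
\|\bm D\|_{\rm op} = |d| \le \frac{\|\bm A^2-\bm B^2\|_{\rm op}}{\lambda_{\min}(\bm A)+\lambda_{\min}(\bm B)},
\]
which is the claim.

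The one step needing care is using the quadratic form at the extremal eigenvector of $\bm D$, rather than trying to invert the Sylvester operator $\bm D\mapsto \bm A\bm D+\bm D\bm B$ in operator norm. Inverting that operator only gives clean bounds in Frobenius norm. The eigenvector trick avoids the inversion entirely, so this step poses no real obstacle.
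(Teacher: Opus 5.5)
Your proof is correct and follows essentially the same route as the paper: both use the identity $\bm A^2-\bm B^2=\bm A(\bm A-\bm B)+(\bm A-\bm B)\bm B$ and test it against a unit eigenvector of the symmetric matrix $\bm A-\bm B$ whose eigenvalue attains the operator norm. Both then finish with the Rayleigh-quotient lower bounds $\bm x^\top\bm A\bm x\ge\lambda_{\min}(\bm A)$ and $\bm x^\top\bm B\bm x\ge\lambda_{\min}(\bm B)$.
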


\noindent \textbf{Proof of Lemma \ref{lem:psd-sqrt}.}
 Let $\bm v$ be a unit eigenvector of $\bm A-\bm B$ with
eigenvalue $\nu$ satisfying $|\nu| = \| \bm A - \bm B \|_{\rm op}$. From the expression
$$\bm A^2 - \bm B^2 = \bm A \lb \bm A - \bm B \rb + \lb \bm A - \bm B \rb \bm B$$
and the symmetry of $\bm A - \bm B$, we have 
\[
\bm v^\top \lb \bm A^2 - \bm B^2 \rb \bm v
= \nu \, \bm v^\top \bm A \bm v + \nu \, \bm v^\top \bm B \bm v
= \nu \, \bm v^\top \lb \bm A + \bm B \rb \bm v .
\]
Hence 
$$|\nu| \cdot \left[ \lambda_{\min}(\bm A)+\lambda_{\min}(\bm B) \right]
\le \| \bm A^2 - \bm B^2 \|_{\rm op}.$$ $\hfill \square$

\begin{lemma} \label{lem:orth-psd}
Let $\bm O\in\mb R^{k\times k}$ be an orthogonal matrix and let $\bm F\in\mb R^{k\times k}$
satisfy $\|\bm F\|_{\rm op}\le\ve\le 1/2$. Suppose $\bm O+\bm F$ is positive
semi-definite. Then, we have 
$$\| \bm O - \bm I_k \|_{\rm op} \le \sqrt2 \, \ve.$$
\end{lemma}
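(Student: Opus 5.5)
The plan is to diagonalize $\bm O$ and test positive semi-definiteness of $\bm O+\bm F$ along well-chosen directions. Since $\bm O$ is real orthogonal, it is normal, so its eigenvalues lie on the unit circle. They are $1$, $-1$, or conjugate pairs $e^{\pm i\theta}$ with $\theta\in(0,\pi)$, and there is a real orthogonal change of basis bringing $\bm O$ to block-diagonal form with $2\times 2$ rotation blocks and $\pm1$ entries. Because $\|\bm O-\bm I_k\|_{\rm op}=\max|e^{i\theta}-1|=\max 2|\sin(\theta/2)|$ over the eigen-angles, the goal reduces to showing that every eigen-angle $\theta$ satisfies $2|\sin(\theta/2)|\le\sqrt2\,\ve$.

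The key observation is that positive semi-definiteness of $\bm O+\bm F$ only constrains its symmetric part: $\bm x^\top(\bm O+\bm F)\bm x\ge0$ for all real $\bm x$. This is equivalent to the symmetric part $\tfrac12(\bm O+\bm O^\top)+\tfrac12(\bm F+\bm F^\top)$ being PSD, if we read "positive semi-definite" as the quadratic form being nonnegative. In the application, $\bm O+\bm F$ is in fact symmetric, which is even better. Take a unit vector $\bm x$ in the real invariant plane, or line, of an eigen-angle $\theta$. Then $\bm x^\top\bm O\bm x=\cos\theta$ and $|\bm x^\top\bm F\bm x|\le\ve$, so $\cos\theta\ge-\ve$.

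That alone only gives $\theta\le\pi/2+O(\ve)$, which is too weak. So I would also use symmetry. If $\bm O+\bm F$ is symmetric, then $\bm O-\bm O^\top=\bm F^\top-\bm F$, so $\|\bm O-\bm O^\top\|_{\rm op}\le2\ve$. On a rotation block this reads $2|\sin\theta|\le2\ve$, hence $|\sin\theta|\le\ve$. Combined with $\cos\theta\ge-\ve>-1/\sqrt2$ (using $\ve\le1/2$), this rules out $\theta$ near $\pi$, and in particular the eigenvalue $-1$. Hence $\cos\theta\ge\sqrt{1-\ve^2}$, and
\[
|e^{i\theta}-1|^2=2-2\cos\theta\le 2-2\sqrt{1-\ve^2}\le 2\ve^2,
\]
which gives $\sqrt2\,\ve$, exactly the stated constant. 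This suggests the intended reading is that $\bm O+\bm F$ is symmetric PSD, as in Step 5a where $\bm Z_t^\top\widetilde{\bm Q}_{t-1}=\bm U_t\bm D_t\bm U_t^\top$.

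The main obstacle is exactly this symmetry issue. A merely nonnegative quadratic form would not give an $O(\ve)$ bound: a rotation by angle near $\pi/2$ has a nearly zero quadratic form and a tiny $\bm F$ can make it nonnegative. So I would state explicitly that PSD is understood to include symmetry and build the proof on the two facts $\|\bm O-\bm O^\top\|_{\rm op}\le2\ve$ and $\lambda_{\min}\!\big(\tfrac12(\bm O+\bm O^\top)\big)\ge-\ve$. After that, the remaining work is routine bookkeeping over the canonical real Schur blocks, including $1\times1$ blocks, where $-1$ is excluded by $-1\ge-\ve$ failing.
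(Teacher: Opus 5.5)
Your proof is correct and is essentially the paper's argument written in the real canonical basis of $\bm O$. The paper splits $\bm O=\bm M+\bm K$ into symmetric and antisymmetric parts and uses orthogonality to get $\bm I_k-\bm M^2=\bm K^\top\bm K$, which is your blockwise $\cos^2\theta+\sin^2\theta=1$; it bounds $\|\bm K\|_{\rm op}\le\ve$ from the symmetry of $\bm O+\bm F$ and gets $\bm M\succeq-\ve\bm I_k$ from positive semi-definiteness, then concludes via $(\bm O-\bm I_k)^\top(\bm O-\bm I_k)=2(\bm I_k-\bm M)$. Your caveat about symmetry is also right: the paper's proof explicitly uses that $\bm O+\bm F$ is symmetric, and in the application $\bm Z_t^\top\widetilde{\bm Q}_{t-1}=\bm U_t\bm D_t\bm U_t^\top$ is indeed symmetric positive semi-definite.
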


\noindent \textbf{Proof of Lemma \ref{lem:orth-psd}.}
Decompose 
$$\bm O = \bm M + \bm K$$ 
with
\[
\bm M := \frac12 \lb \bm O + \bm O^\top \rb \qquad \text{and} \qquad \bm K := \frac12 \lb \bm O - \bm O^\top \rb.
\]
Note that $\bm M$ is symmetric and $\bm K$ is antisymmetric. Using 
$\bm O^\top\bm O = \bm I_k$ and $\bm O\bm O^\top = \bm I_k$  gives $\bm M^2 - \bm K^2 = \bm I_k$. Thus
\[
\bm I_k - \bm M^2 = \bm K^\top \bm K \succeq \bm 0 .
\]
Since $\bm O + \bm F$ is symmetric, $\bm O - \bm O^\top = \bm F^\top - \bm F$, so
$\| \bm K \|_{\rm op} \le \ve$. Therefore, every eigenvalue $\mu$ of $\bm M$
satisfies 
$$\mu^2 \ge 1-\ve^2.$$ 
Being symmetric, $\bm O + \bm F$ coincides with its
own symmetric part $\bm M + \frac12 \lb \bm F + \bm F^\top \rb$, so positive
semi-definiteness gives $\bm M \succeq -\ve \bm I_k$. 

As $\ve\le1/2$ we have
$\sqrt{1-\ve^2}>\ve$, and hence $\mu \ge \sqrt{1-\ve^2}$ for every such $\mu$.
Finally,
\[
\lb \bm O - \bm I_k \rb^\top \lb \bm O - \bm I_k \rb = 2 \lb \bm I_k - \bm M \rb,
\]
so $\| \bm O - \bm I_k \|_{\rm op}^2 \le 2 \lb 1 - \sqrt{1-\ve^2} \rb \le 2\ve^2$.
$\hfill \square$

\begin{lemma} \label{lem:one-step-exterior}
Let $\la \bm\omega_t; t \geq 1 \ra$ be generated by the embedded Oja recursion
\eqref{embbeded Oja} and  let $\la \mc F_t \ra$ be the filtration generated by the
data. Suppose $\eta_t L \le 1/4$. Put
\[
S_t^\wedge := 1 - \left| \langle \bm\omega_t, \bm\omega_\star \rangle_\wedge \right|^2 .
\]
Then there is a sequence $\la Q_t \ra$, adapted to $\la \mc F_t \ra$ (the natural filtration generated by the data), such that
\begin{align} \label{eq:one-step-Q}
\mb E \left[ Q_t \mid \mc F_{t-1} \right] = 0,
\qquad
\left| Q_t \right| \le 4L\eta_t \sqrt{S_{t-1}^\wedge},
\end{align}
and such that, for a universal constant $C>0$,
\begin{align} \label{eq:exterior-one-step}
S_t^\wedge
\le
\lb 1 - 2\Delta_k\eta_t \rb \cdot S_{t-1}^\wedge
+
2\Delta_k\eta_t \cdot \lb S_{t-1}^\wedge \rb^2
+
Q_t
+
CL^2 \cdot \eta_t^2 .
\end{align}
\end{lemma}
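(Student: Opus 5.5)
The plan is to treat the embedded recursion \eqref{embbeded Oja} exactly as a rank-one Oja step in $\widetilde{\mb R}_{k,p}$ and expand $S_t^\wedge$ to second order in $\eta_t$. I drop the time index on the previous iterate: $\bm\omega:=\bm\omega_{t-1}$, $S:=S_{t-1}^\wedge$, and $\mfH:=\mfH_t=\bm I+\eta_t\mc L_{\bm A_t}$ (Theorem~\ref{thm: Plucker embedding}). Decompose $\bm\omega=c\,\bm\omega_\star+\bm v$ with $\bm v\perp\bm\omega_\star$, $c^2=1-S$ and $\|\bm v\|_\wedge^2=S$. Then
\[
S_t^\wedge=\frac{N}{D},\qquad N:=\|\mathsf{\bm P}_\star^\perp\mfH\bm\omega\|_\wedge^2,\qquad D:=\|\mfH\bm\omega\|_\wedge^2 .
\]
Write $a:=\bm\omega^\top\mc L_{\bm A_t}\bm\omega\ge0$. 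Expanding gives $N=S+2\eta_t\,\bm v^\top\mc L_{\bm A_t}\bm\omega+\eta_t^2\|\mathsf{\bm P}_\star^\perp\mc L_{\bm A_t}\bm\omega\|^2_\wedge$ and $D=1+2\eta_t a+\eta_t^2\|\mc L_{\bm A_t}\bm\omega\|_\wedge^2\ge1+2\eta_t a$.

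\textbf{Step 1 (second-order bound).} Since $0\preceq\mc L_{\bm A_t}\preceq L\,\bm I$ by Lemma~\ref{embedded profile}, we have $a\le L$ and $\eta_t a\le 1/4$. Hence $1/D\le 1-2\eta_t a+4\eta_t^2a^2$. Multiplying out and sending every term with a factor $\eta_t^2$ into $CL^2\eta_t^2$ gives
\[
S_t^\wedge\le S+2\eta_t\bigl[\bm v^\top\mc L_{\bm A_t}\bm\omega-S\,a\bigr]+CL^2\eta_t^2 .
\]
For this I use $N\le 1+O(\eta_tL)$, $|\bm v^\top\mc L_{\bm A_t}\bm\omega|\le L$ and $\eta_tL\le1/4$. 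The step requires care with signs, since the cross term in $N$ can be negative. I will handle it by writing $N/D-S=(N-SD)/D$ and bounding $|N-SD-2\eta_t(\cdot)|\lesssim L^2\eta_t^2$, using $D\ge1$.

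\textbf{Step 2 (mean/noise split).} Write $\mc L_{\bm A_t}=\mc L_{\bm\Sigma}+\mc L_{\bm E_t}$ with $\bm E_t=\bm A_t-\bm\Sigma$. Define
\[
Q_t:=2\eta_t\bigl[\bm v^\top\mc L_{\bm E_t}\bm\omega-S\,\bm\omega^\top\mc L_{\bm E_t}\bm\omega\bigr]=2\eta_t\,(\bm v-S\bm\omega)^\top\mc L_{\bm E_t}\bm\omega .
\]
It is $\mc F_t$-measurable, and $\mb E[Q_t\mid\mc F_{t-1}]=0$ because $\mb E\mc L_{\bm E_t}=\mc L_{\mb E\bm E_t}=\bm0$ (Theorem~\ref{thm:mean lift}) while $\bm\omega,\bm v,S$ are $\mc F_{t-1}$-measurable. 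For the size bound, compute $\|\bm v-S\bm\omega\|_\wedge^2=S-2S^2+S^2=S(1-S)\le S$. Also $\mc L_{\bm\Sigma}=\mb E\mc L_{\bm A_t}$ and $\mc L_{\bm A_t}$ both lie between $0$ and $L\bm I$, so $\|\mc L_{\bm E_t}\|_{\rm op}\le L$. Together these give $|Q_t|\le2L\eta_t\sqrt S\le 4L\eta_t\sqrt{S}$, which is \eqref{eq:one-step-Q}.

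\textbf{Step 3 (drift).} $\mc L_{\bm\Sigma}$ is diagonal in the basis $\{\bm u_I\}$, commutes with $\mathsf{\bm P}_\star$, and has $\bm\omega_\star$ as its top eigenvector with eigenvalue $\mu_1$. On $\bm\omega_\star^\perp$ it is bounded by $\mu_2=\mu_1-\Delta_k$ (Theorem~\ref{thm:mean lift}). Hence $\bm v^\top\mc L_{\bm\Sigma}\bm\omega=\bm v^\top\mc L_{\bm\Sigma}\bm v=:b\le\mu_2 S$, and $\bm\omega^\top\mc L_{\bm\Sigma}\bm\omega=c^2\mu_1+b$. Therefore
\[
b-S(c^2\mu_1+b)=(1-S)\bigl(b-S\mu_1\bigr)\le -(1-S)\Delta_k S .
\]
The mean linear term is thus at most $-2\Delta_k\eta_t S+2\Delta_k\eta_t S^2$. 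Combining Steps 1–3 yields \eqref{eq:exterior-one-step}.

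I expect Step 1 to be the main obstacle: the quotient expansion must produce a remainder that is uniformly $O(L^2\eta_t^2)$ with a universal constant, without dividing by anything small. The positivity $D\ge1+2\eta_ta$, which follows from $\mc L_{\bm A_t}\succeq0$, is what makes this work.
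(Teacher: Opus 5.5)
Your proposal is correct and follows essentially the same route as the paper. Your linear term $2\eta_t(\bm v-S\bm\omega)^\top\mc L_{\bm A_t}\bm\omega$ is exactly the paper's $A_t=2\eta_t(c^2v_t-cu_t)$ (since $\bm v-S\bm\omega=c^2\bm\omega-c\bm\omega_\star$), your $Q_t$ coincides with the paper's $A_t-\mb E[A_t\mid\mc F_{t-1}]$, and the drift bound via $\mu_2=\mu_1-\Delta_k$ is identical; the only differences are cosmetic, namely that you handle the denominator through $1/D\le 1-2\eta_t a+4\eta_t^2a^2$ with $N\ge0$ (rather than the paper's exact identity $(A_t+B_t)/D_t$ with $|D_t-1|\le 3\eta_tL$) and that you obtain the sharper constant $2L$ in the bound on $Q_t$ from $\|\mc L_{\bm E_t}\|_{\rm op}\le L$.
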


\noindent \textbf{Proof of Lemma \ref{lem:one-step-exterior}.}
Conditionally on $\mc F_{t-1}$, put
\[
c := \left\langle \bm\omega_{t-1},\bm\omega_\star \right\rangle_\wedge,
\qquad
u_t := \left\langle \bm\omega_\star, \mc L_{\bm A_t}\bm\omega_{t-1} \right\rangle_\wedge,
\]
\[
v_t := \left\langle \bm\omega_{t-1}, \mc L_{\bm A_t}\bm\omega_{t-1} \right\rangle_\wedge,
\qquad
w_t := \left\| \mc L_{\bm A_t}\bm\omega_{t-1} \right\|_\wedge^2 ,
\]
and write $\overline{\mc L} := \mc L_{\bm\Sigma}$. 

Throughout we use
\eqref{eq:rank-one-powers-exterior}, which gives
$0 \preceq \mc L_{\bm A_t} \preceq L \cdot \bm I_{\widetilde{\mb R}_{k,p}}$ and,
after taking expectations, $\mu_1 \le L$. Since
$\| \bm\omega_{t-1} \|_\wedge = \| \bm\omega_\star \|_\wedge = 1$, this yields
\begin{align} \label{eq:one-step-crude}
0 \le v_t \le L,
\qquad
|u_t| \le L,
\qquad
0 \le w_t \le L^2 .
\end{align}
By \eqref{embbeded Oja},
\[
\left\langle \bm\omega_t,\bm\omega_\star \right\rangle_\wedge
=
\frac{c+\eta_t u_t}
{\left\| \lb \bm I_{\widetilde{\mb R}_{k,p}} + \eta_t \mc L_{\bm A_t} \rb \bm\omega_{t-1} \right\|_\wedge},
\qquad
\left\| \lb \bm I_{\widetilde{\mb R}_{k,p}} + \eta_t \mc L_{\bm A_t} \rb \bm\omega_{t-1} \right\|_\wedge^2
=
D_t ,
\]
where $D_t := 1 + 2\eta_t v_t + \eta_t^2 w_t$.

Hence, using $1-c^2 = S_{t-1}^\wedge$,
\[
S_t^\wedge
=
\frac{D_t - \lb c+\eta_t u_t \rb^2}{D_t}
=
\frac{S_{t-1}^\wedge + 2\eta_t \lb v_t - cu_t \rb + \eta_t^2 \lb w_t - u_t^2 \rb}{D_t} .
\]
Subtracting $S_{t-1}^\wedge$ and using $1-S_{t-1}^\wedge = c^2$ to simplify the
numerator gives the exact identity
\begin{align} \label{eq:thm3-exterior-exact}
S_t^\wedge - S_{t-1}^\wedge
=
\frac{2\eta_t \lb c^2 v_t - cu_t \rb + \eta_t^2 \lb c^2 w_t - u_t^2 \rb}{D_t} .
\end{align}
Put 
\[
A_t := 2\eta_t \lb c^2v_t - cu_t \rb \qquad \text{and} \qquad B_t := \eta_t^2 \lb c^2w_t - u_t^2 \rb.
\]
It is immediate that
$$ S_t^\wedge - S_{t-1}^\wedge = \lb A_t+B_t \rb/D_t.$$ 
Moreover, by \eqref{eq:one-step-crude}, $D_t \ge 1$ and
$|D_t - 1| \le 2\eta_t L + \eta_t^2L^2 \le 3\eta_t L$, while
$|B_t| \le \eta_t^2 L^2$ and $|A_t| \le 4L\eta_t$ (using $|c| \le 1$). Therefore,
\[
\left| \frac{A_t+B_t}{D_t} - A_t \right|
\le
|A_t| \cdot \frac{|D_t-1|}{D_t} + \frac{|B_t|}{D_t}
\le
12L^2\eta_t^2 + L^2\eta_t^2
=
CL^2\eta_t^2 ,
\]
and consequently
\begin{align} \label{eq:one-step-after-denominator}
S_t^\wedge
\le
S_{t-1}^\wedge + A_t + CL^2 \cdot \eta_t^2 .
\end{align}
Write
\[
\bm\omega_{t-1} = c \cdot \bm\omega_\star + \sqrt{S_{t-1}^\wedge} \cdot \bm e,
\qquad
\bm e \perp \bm\omega_\star,
\qquad
\|\bm e\|_\wedge = 1 .
\]
Since $\overline{\mc L}\bm\omega_\star = \mu_1 \bm\omega_\star$ and
$\bm e \perp \bm\omega_\star$, we have
\[
\mb E[v_t \mid \mc F_{t-1}] = c^2\mu_1 + S_{t-1}^\wedge \langle \bm e, \overline{\mc L}\bm e\rangle_\wedge
\qquad \text{and}
\qquad 
\mb E[u_t \mid \mc F_{t-1}] = c\mu_1.
\]
 Hence
\[
\mb E \left[ c^2v_t - cu_t \mid \mc F_{t-1} \right]
=
c^2 S_{t-1}^\wedge
\left[ \left\langle \bm e, \overline{\mc L}\bm e \right\rangle_\wedge - \mu_1 \right] .
\]
By Theorem~\ref{thm:mean lift}, the restriction of $\overline{\mc L}$ to
$\bm\omega_\star^\perp$ has largest eigenvalue $\mu_2 = \mu_1 - \Delta_k$, so
$$\langle \bm e, \overline{\mc L}\bm e\rangle_\wedge \le \mu_1 - \Delta_k.$$
Together with $c^2 = 1-S_{t-1}^\wedge$ this gives
\begin{align} \label{eq:thm3-exterior-drift}
\mb E \left[ A_t \mid \mc F_{t-1} \right]
\le
-2\Delta_k\eta_t \cdot S_{t-1}^\wedge \lb 1 - S_{t-1}^\wedge \rb .
\end{align}
Define 
$$Q_t := A_t - \mb E[A_t \mid \mc F_{t-1}],$$ 
which is $\mc F_t$-measurable and satisfies the first claim in \eqref{eq:one-step-Q}. 

Since
$c^2v_t - cu_t
= \left\langle c^2\bm\omega_{t-1} - c\bm\omega_\star, \mc L_{\bm A_t}\bm\omega_{t-1} \right\rangle_\wedge$
and the same identity holds with $\mc L_{\bm A_t}$ replaced by $\overline{\mc L}$, we deduce that
\[
Q_t
=
2\eta_t
\left\langle
c^2\bm\omega_{t-1} - c\bm\omega_\star,
\lb \mc L_{\bm A_t} - \overline{\mc L} \rb \bm\omega_{t-1}
\right\rangle_\wedge .
\]
Substituting $\bm\omega_{t-1} = c\bm\omega_\star + \sqrt{S_{t-1}^\wedge}\bm e$ and
using $c^2-1 = -S_{t-1}^\wedge$, we get the following expression for the first term inside the inner product above:
\[
c^2\bm\omega_{t-1} - c\bm\omega_\star
=
-c S_{t-1}^\wedge \cdot \bm\omega_\star
+
c^2 \sqrt{S_{t-1}^\wedge} \cdot \bm e ,
\]
whose squared exterior norm is
$c^2 \lb S_{t-1}^\wedge \rb^2 + c^4 S_{t-1}^\wedge = c^2 S_{t-1}^\wedge$. 
Hence
$$\| c^2\bm\omega_{t-1}-c\bm\omega_\star \|_\wedge = |c| \sqrt{S_{t-1}^\wedge}.$$ 
Moreover, since
$$\| \mc L_{\bm A_t} - \overline{\mc L} \|_{\rm op} \le L + \mu_1 \le 2L,$$
Cauchy--Schwarz gives the second claim in \eqref{eq:one-step-Q}.

Finally, substituting $A_t = \mb E[A_t \mid \mc F_{t-1}] + Q_t$ together with
\eqref{eq:thm3-exterior-drift} into \eqref{eq:one-step-after-denominator} yields
\eqref{eq:exterior-one-step}. The proof is completed. $\hfill \square$

\begin{lemma} \label{lem:eca-sign-data}
Under the assumptions of Theorem~\ref{thm:eca}, the vectors
$\bm{\mc X}_1,\ldots,\bm{\mc X}_n$ are i.i.d.\ and the following hold.
\begin{enumerate}[label=(\roman*)]
\item We have 
$$\displaystyle
\bm{\mc X}_1
\stackrel{d}{=}
\frac{\bm\Sigma^{1/2}\bm U}{\left\| \bm\Sigma^{1/2}\bm U \right\|},
\qquad \bm U \sim \mathrm{Unif}(\mb S^{p-1}) .
$$
In particular the law of $\bm{\mc X}_1$ depends on neither $\bm\mu$ nor the
distribution of $R$.
\item 
$\mb E\bm{\mc X}_1 = \bm 0$ and $\|\bm{\mc X}_1\|=1$. Consequently, the
assumptions of Section~\ref{sec:k-PCA-convergence} hold with $M=1$ and
$L=1+\tilde\lambda_1\le2$.
\item
$\widetilde{\bm\Sigma}=\sum_{i=1}^p\tilde\lambda_i\bm u_i\bm u_i^\top$ with
$\tilde\lambda_i$ as in \eqref{eq:eca-transformed-spectrum}. Thus
$\widetilde{\bm\Sigma}$ and $\bm\Sigma$ have the same eigenvectors. Moreover, it holds that
\[
\sum_{i=1}^p\tilde\lambda_i=1 \qquad \text{and} \qquad \text{$\lambda_i\ge\lambda_j$ implies
$\tilde\lambda_i\ge\tilde\lambda_j$}.
\]
Furthermore, the inequality is strict when
$\lambda_i>\lambda_j$ and $\lambda_i>0$.

\item $\mu_1=\sum_{i\le k}\tilde\lambda_i\le1$,
$\tilde\lambda_1\le1$, and
$\sigma_k^2 = \sum_{i\le k}\sum_{j>k}\tilde\sigma_{ij}^2 \le 1$.
\end{enumerate}
\end{lemma}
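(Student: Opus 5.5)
We reduce everything to the representation in (i) and then read off (ii)--(iv) from symmetry properties of the uniform distribution on the sphere.

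\emph{Step 1: i.i.d.\ structure and the representation (i).} Independence across $t$ holds because the pairs $(\bm X_{2t-1},\bm X_{2t})$ are disjoint. For (i), write $\bm X_{2t}-\bm X_{2t-1}=\bm\Sigma^{1/2}(R\bm U-R'\bm U')$, where $(R,\bm U)$ and $(R',\bm U')$ are independent copies; the mean $\bm\mu$ cancels. The vector $\bm V:=R\bm U-R'\bm U'$ is rotationally invariant in law, since $(\bm U,\bm U')\mapsto(\bm O\bm U,\bm O\bm U')$ preserves the joint law for every $\bm O\in O(p)$. Moreover $\bm V\neq\bm 0$ almost surely: continuity of $R$ makes $R\neq R'$ a.s., and $\bm U=\pm\bm U'$ has probability zero. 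Hence $\bm V/\|\bm V\|\sim\mathrm{Unif}(\mb S^{p-1})$, independent of $\|\bm V\|$. The positive factor $\|\bm V\|$ cancels in the normalization, giving
\[
\bm{\mc X}_t\stackrel{d}{=}\frac{\bm\Sigma^{1/2}\bm U}{\|\bm\Sigma^{1/2}\bm U\|}.
\]
If $\bm\Sigma$ is singular, $\bm\Sigma^{1/2}\bm U=\bm 0$ on a null set only, because $\bm\Sigma\ne\bm 0$.

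\emph{Step 2: items (ii) and (iv).} The law of $\bm U$ is invariant under $\bm U\mapsto-\bm U$, so $\mb E\bm{\mc X}_1=\bm 0$; trivially $\|\bm{\mc X}_1\|=1$. Then $\|\bm{\mc X}_1\bm{\mc X}_1^\top-\widetilde{\bm\Sigma}\|\le\max\{1,\tilde\lambda_1\}\le1$, because both matrices are PSD with operator norm at most one. This gives $M=1$ and $L=\tilde\lambda_1+1\le 2$. Item (iv) then follows quickly: $\mu_1\le\operatorname{tr}\widetilde{\bm\Sigma}=1$ and $\tilde\lambda_1\le1$. For the variance, $\bm u_j^\top(\bm{\mc A}_1-\widetilde{\bm\Sigma})\bm u_i=\bm u_j^\top\bm{\mc A}_1\bm u_i$ for $i\ne j$, and its square is $\lambda_i\lambda_jU_i^2U_j^2/D$, which defines $\tilde\sigma_{ij}^2$ after taking expectations. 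Summing over $i\le k<j$ is bounded by $\mb E[(\sum_i\lambda_iU_i^2)(\sum_j\lambda_jU_j^2)/D^2]=1$. Note that $\|\bm\Sigma^{1/2}\bm U\|^2=D$ in the eigenbasis, so $(\bm{\mc X}_1)_i:=\bm u_i^\top\bm{\mc X}_1=\sqrt{\lambda_i}U_i/\sqrt D$. The definition of $\tilde\sigma_{ij}^2$ in the paper has $D^2$ in the denominator, and we match it by the identity $(\bm u_j^\top\bm{\mc X}_1)^2(\bm u_i^\top\bm{\mc X}_1)^2=\lambda_i\lambda_jU_i^2U_j^2/D^2$.

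\emph{Step 3: item (iii), and the main obstacle.} The off-diagonal entries $\mb E[\sqrt{\lambda_i\lambda_j}U_iU_j/D]$ vanish under the sign flip $U_i\mapsto-U_i$, which leaves $D$ and the law of $\bm U$ invariant. So $\widetilde{\bm\Sigma}$ is diagonal in the basis $\{\bm u_i\}$ with entries $\tilde\lambda_i$, and the trace identity $\sum_i\tilde\lambda_i=\mb E[D/D]=1$ is immediate.

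The monotonicity claim is the only nontrivial part. Fix $i\neq j$ with $\lambda_i\ge\lambda_j$ and use exchangeability of $(U_i,U_j)$. Swapping them gives
\[
\tilde\lambda_i-\tilde\lambda_j=\mb E\Big[(\lambda_i-\lambda_j)U_i^2\Big(\frac1{D}-\frac1{D'}\Big)\Big]\cdot\text{(rearranged)},
\]
where $D'$ denotes $D$ with $U_i$ and $U_j$ swapped. More cleanly, averaging the original and the swapped expression yields
\[
2(\tilde\lambda_i-\tilde\lambda_j)=\mb E\Big[(\lambda_iU_i^2-\lambda_jU_j^2)\Big(\frac1D-\frac1{D'}\Big)\Big]+\mb E\Big[(\lambda_iU_j^2-\lambda_jU_i^2)\cdots\Big].
\]
I would instead organize it as $\mb E\big[(\lambda_i-\lambda_j)(U_i^2-U_j^2)\cdot g\big]$, with $g$ a positive weight. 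Because $D-D'=(\lambda_i-\lambda_j)(U_i^2-U_j^2)$, a monotone coupling argument shows $\frac{\lambda_iU_i^2}{D}-\frac{\lambda_jU_j^2}{D}$ and its swapped version sum to a nonnegative quantity. Concretely, set $a=U_i^2$, $b=U_j^2$, and $s=D-\lambda_ia-\lambda_jb\ge0$. One checks
\[
\frac{\lambda_ia-\lambda_jb}{\lambda_ia+\lambda_jb+s}+\frac{\lambda_ib-\lambda_ja}{\lambda_ib+\lambda_ja+s}\ge0,
\]
with strict inequality when $\lambda_i>\lambda_j$, $\lambda_i>0$, and $a\ne b$, which is an event of positive probability. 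Verifying this elementary two-term inequality, by clearing denominators and obtaining $(\lambda_i-\lambda_j)(a+b)s+(\lambda_i^2-\lambda_j^2)(a-b)^2\ge0$ or a similar form, is the step I expect to require the most care.

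Given $\Delta_k>0$, this yields $\widetilde\Delta_k>0$ and completes the proof.
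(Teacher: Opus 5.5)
Your proof is correct and is essentially the paper's argument: the same spherical-symmetry reduction of $R\bm U-R'\bm U'$ for (i), the same sign-flip arguments for (ii)--(iv), and for the monotonicity the same symmetrization under the swap $U_i\leftrightarrow U_j$. The paper phrases that swap as conditional exchangeability of $(U_i^2,U_j^2)$; your $s$ is its $c$, and the swapped $D$ is its $Q$. The one computation you left open does not come out as you guessed. Clearing denominators gives the numerator $(\lambda_i-\lambda_j)\bigl[s(a+b)+2(\lambda_i+\lambda_j)ab\bigr]$, not a form with $(a-b)^2$. So strictness holds on $\{ab>0\}$, which has probability one. It does not hold pointwise on $\{a\neq b\}$: for example, $b=s=0$ with $\lambda_j>0$ gives a sum of exactly $0$. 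The conclusion $\tilde\lambda_i>\tilde\lambda_j$ is unaffected.
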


\noindent \textbf{Proof of Lemma \ref{lem:eca-sign-data}.}
Independence is obvious because the pairs
$(\bm X_{2t-1},\bm X_{2t})$, $t=1,\ldots,n$, are disjoint.

\emph{(i)} By \eqref{elliptical model}, we have
$$\bm X_{2t}-\bm X_{2t-1} \stackrel{d}{=} \bm\Sigma^{1/2}\bm W$$ 
with
$\bm W := R_{2t}\bm U_{2t}-R_{2t-1}\bm U_{2t-1}$; the mean $\bm\mu$ cancels.

For any orthogonal $\bm O$ we have
$$\bm O\bm W = R_{2t}\bm O\bm U_{2t}-R_{2t-1}\bm O\bm U_{2t-1}
\stackrel{d}{=}\bm W,$$ 
so $\bm W$ is spherically symmetric. 

Hence $\bm W\stackrel{d}{=}\|\bm W\|\cdot\bm U$ with $\bm U\sim\mathrm{Unif}(\mb
S^{p-1})$ independent of $\|\bm W\|$. Since $R$ is continuous,
$\bm W\ne\bm 0$ almost surely. The positive scalar $\|\bm W\|$ therefore
cancels between numerator and denominator of \eqref{eq:eca-preprocessing}.

\emph{(ii)} By \emph{(i)}, $\bm{\mc X}_1\stackrel{d}{=}-\bm{\mc X}_1$, whence
$\mb E\bm{\mc X}_1=\bm0$; and $\|\bm{\mc X}_1\|=1$ by construction. Both
$\bm{\mc A}_1$ and $\widetilde{\bm\Sigma}$ are positive semi-definite with
operator norm at most one, so
$-\widetilde{\bm\Sigma}\preceq\bm{\mc A}_1-\widetilde{\bm\Sigma}\preceq
\bm{\mc A}_1$ gives $\|\bm{\mc A}_1-\widetilde{\bm\Sigma}\|_{\rm op}\le1$.
Hence $M=1$ and $L=\tilde\lambda_1+M$.

\emph{(iii)} Recall $D$ in \eqref{eq:eca-transformed-spectrum}. By \emph{(i)},
$\bm u_i^\top\bm{\mc X}_1=\sqrt{\lambda_i}\,U_i/\sqrt D$, so
$$\bm u_i^\top \, \widetilde{\bm\Sigma} \, \bm u_j=\mb E \left[\sqrt{\lambda_i\lambda_j} \cdot \frac{U_iU_j}{D} \right].$$ 
For $i\ne j$ the map $U_i\mapsto-U_i$ preserves the law of $\bm U$
and fixes the value of $D$, while flipping the sign of $U_iU_j$. Hence the off-diagonal
entries vanish and the diagonal ones are the $\tilde\lambda_i$. 
Summing gives
$$\sum_i\tilde\lambda_i=\mb E\|\bm{\mc X}_1\|^2=1.$$
For the ordering, fix $i\ne j$ and condition on $\la U_\ell:\ell\ne i,j \ra$ and
on $U_i^2+U_j^2$. Conditionally, $(a,b):=(U_i^2,U_j^2)$ is exchangeable.
Define
\[
c:=\sum_{\ell\ne i,j}\lambda_\ell U_\ell^2, \qquad 
Q:=c+\lambda_ib+\lambda_ja.
\]
By the distributional identity $(a,b)\stackrel{d}{=}(b,a)$, we have 
\[
2\lb \tilde\lambda_i-\tilde\lambda_j \rb
=
\mb E\left[
\frac{(\lambda_ia-\lambda_jb)Q+(\lambda_ib-\lambda_ja)D}{DQ}
\right] .
\]
Expanding the numerator,
\[
(\lambda_ia-\lambda_jb)Q+(\lambda_ib-\lambda_ja)D
=
\lb \lambda_i-\lambda_j \rb
\left[ c(a+b)+2\lb \lambda_i+\lambda_j \rb ab \right] ,
\]
which is non-negative when $\lambda_i\ge\lambda_j$.

\emph{(iv)} The first two bounds follow from $\sum_i\tilde\lambda_i=1$. For the
variance, $\bm u_j^\top\widetilde{\bm\Sigma}\bm u_i=0$ for $i\ne j$ by
\emph{(iii)}, so
$$\sigma_k^2=\sum_{i\le k}\sum_{j>k}\mb E[(\bm u_i^\top\bm{\mc X}_1)^2
(\bm u_j^\top\bm{\mc X}_1)^2]=\sum_{i\le k}\sum_{j>k}\tilde\sigma_{ij}^2.$$
Since $\sum_{i=1}^p\lambda_iU_i^2/D=1$ almost surely,
\[
\sum_{i=1}^k\sum_{j=k+1}^p \frac{\lambda_i\lambda_jU_i^2U_j^2}{D^2}
\le
\lb \sum_{i=1}^p \frac{\lambda_iU_i^2}{D} \rb^2
=1 ,
\]
and taking expectations gives $\sigma_k^2\le1$. $\hfill\square$




\bibliographystyle{plainnat}
\bibliography{paper-ref}

\end{document}